\newtheorem{theorem}{Theorem}[section]
\newtheorem{proposition}[theorem]{Proposition}
\newtheorem{definition}[theorem]{Definition}
\newtheorem{corollary}[theorem]{Corollary}
\newtheorem{lemma}[theorem]{Lemma}
\newtheorem{remark}[theorem]{Remark}
\newtheorem{example}[theorem]{Example}
\newenvironment{proof}{\noindent \emph{Proof. }}{\hfill \hbox{\rlap{$\sqcap$}$\sqcup$}\\}
\title{Brun Expansions of Stepped Surfaces}
\author{Val\'erie Berth\'e\footnote{Univ. Paris 7, CNRS, Sorbonne Paris Cit\'e, UMR 8243, 75205 Paris, France} \and Thomas Fernique\footnote{Univ. Paris 13, CNRS, Sorbonne Paris Cit\'e, UMR 7030, 93430 Villetaneuse, France.}}
\date{}
\begin{document}

\maketitle

\begin{abstract}
\emph{Dual maps} have been introduced as a generalization to higher dimensions of word substitutions and free group morphisms.
In this paper, we study the action of these dual maps on particular discrete planes and surfaces -- namely \emph{stepped planes} and \emph{stepped surfaces}.
We show that dual maps can be seen as discretizations of toral automorphisms.
We then provide a connection between stepped planes and the \emph{Brun} multi-dimensional continued fraction algorithm, based on a desubstitution process defined on local geometric configurations of stepped planes.
By extending this connection to stepped surfaces, we obtain an effective characterization of stepped planes (more exactly, \emph{stepped quasi-planes}) among stepped surfaces.
\end{abstract}

\medskip
{\small
\parbox{0.92\textwidth}{
{\bf Keywords:} arithmetic discrete plane, Brun algorithm, digital planarity, discrete geometry, dual map, flip, free group morphism, multidimensional continued fraction, stepped plane, stepped surface, substitution.}}

\section{Introduction}

In word combinatorics, Sturmian words and regular continued fractions are known to provide a very fruitful interaction between arithmetics, discrete geometry and symbolic dynamics.
Recall that Sturmian words can be defined as infinite words which code irrational discrete lines over a two-letter alphabet (one speaks about digitizations of irrational straight lines).
Recall also that a substitution is a non-erasing morphism of the free monoid which acts naturally on all finite and infinite words.
Then, most combinatorial properties of Sturmian words can be described in terms of the continued fraction expansion of the slope of the discrete line that they code (see Chap. 2 in \cite{Loth} and Chap. 6 in \cite{Pyt}).
For example, let us briefly sketch the proof that Sturmian words can be obtained as an infinite composition of a finite number of substitutions, \emph{i.e.}, Sturmian words are $S$-adic (for more details, see \cite{durand} and Chap. 12 in \cite{Pyt}).
First, one can deduce from the combinatorial properties of Sturmian words defined over $\{0,1\}$ that factors $00$ and $11$ cannot occur simultaneoulsy in a Sturmian word.
This allows to \emph{desubstitute} any Sturmian word $u$, \emph{i.e.} to write $u=\sigma_0(v)$ or $u=\sigma_1(v)$, where $v$ is an infinite word over $\{0,1\}$, and $\sigma_0$ and $\sigma_1$ are the substitutions defined by $\sigma_0(0)=0$, $\sigma_0(1)=10$, $\sigma_1(0)=01$ and $\sigma_1(1)=1$.
Then, one can show that the desubstituted word $v$ is itself a Sturmian word (it corresponds to a digitization of the same line after a change of lattice basis).
We can thus reiterate the process \emph{ad infinitum}, and the corresponding sequence of substitutions $\sigma_0$ and $\sigma_1$ turns out to be determined by the continued fraction expansion of the slope of the initial Sturmian word.\\

In this paper, we would like to extend this interaction to higher dimensions.
We thus need to generalize the notions of free group morphisms (among them substitutions), of Sturmian words and to work with a multi-dimensional version of the Euclidean algorithm, that is, a multi-dimensional continued fraction algorithm.
Here, we show that this can be respectively done by \emph{dual maps}, \emph{stepped planes} and the \emph{Brun algorithm}.
This allows to provide a first step towards a multi-dimensional extension of the above interaction.\\

Dual maps have been introduced by Arnoux and Ito in \cite{AI} as a generalization to higher dimensions of free monoid and free group morphisms.
They are inspired by the geometrical formalism of \cite{IO93}, whose aim was to provide explicit Markov partitions for hyperbolic automorphisms of the torus associated with particular morphisms of the free group.
Indeed, iterations of dual maps generate stepped planes approximating the stable and unstable spaces of toral automorphisms.
They have already proved their efficiency for the construction of explicit Markov partitions \cite{AFHI}, for Diophantine approximation \cite{ifhy}, in the spectral study of Pisot substitutive dynamical systems \cite{BK,Pyt} or else in discrete geometry \cite{ABFJ}.\\

Stepped planes have been introduced in \cite{BV} as multi-dimensional Sturmian words: they are digitizations of real hyperplanes (see Remark~\ref{rem:digitization_linear_group1} below for more details).
Stepped surfaces have then been introduced in \cite{jamet} as a generalization of two-letter words: a stepped surface is defined as a union of facets of integer translates of the unit hypercube which is homeormophic to the antidiagonal plane (the hyperplane with normal vector $(1,\ldots,1)$). Hence, stepped planes are particular stepped surfaces, as Sturmian words are particular two-letter words.
Following \cite{ABFJ}, we also rely in this paper on the notion of \emph{flip}.
The flip is a classical notion in the study of dimer tilings: this is a local reorganization of tiles that transforms a tiling into another one (see, \emph{e.g.}, \cite{thurston}).
In our context, flips turn out to be a powerful technical tool that allows us to transfer properties from stepped planes to stepped surfaces, by describing the latter as stepped planes on which flips are performed.\\

Last, the Brun algorithm (also called modified Jacobi-Perron algorithm), introduced in \cite{brun}, is one of the most classical unimodular multi-dimensional continued fraction algorithms (in the sense of \cite{brentjes}, see also \cite{schweiger}).
Although we choose the Brun algorithm, many other algorithms as, \emph{e.g.}, the Jacobi-Perron, Selmer or Poincar\'e ones could be also used.\\

Let us outline the contents and the main results of the present paper.
First, Sec.\ \ref{sec:stepped} is devoted to the basic introductory material, namely stepped functions, which include both stepped planes and stepped surfaces.
Then, Sec.\ \ref{sec:flips} introduces the notion of flip in the general context of stepped functions.
This leads to the notion of \emph{pseudo-flip-accessibility}, which extends the usual notion of \emph{flip-accessibility}.\\
The next section, Sec.\ \ref{sec:dual_maps}, recalls the notion of dual map and provides two of the main results of this paper, whose proofs are combinatorial: the image of stepped planes and stepped surfaces under dual maps are, respectively, stepped planes and stepped surfaces (Th.\ \ref{th:image_stepped_plane} and Th.\ \ref{th:image_stepped_surface}).
This extends similar results that we have previously obtained in the particular case of \emph{positive} dual maps (which includes substitutions but not any free group morphism).
Last, Sec.\ \ref{sec:brun_expansions} relies on the results of the previous section to define Brun expansions of both stepped planes and stepped surfaces.
We first handle the case of a stepped plane: its Brun expansion indeed naturally corresponds to the Brun expansion of its normal vector.
However, we also provide a definition of the Brun expansion of a stepped plane which relies not on its normal vector but only on some of its local geometric configurations, namely \emph{runs} (see Def.\ \ref{def:run} and \ref{def:tilde_T}).
This allows us to then extend the notion of Brun expansion to stepped surfaces (even with a lack of a notion of a normal vector) (Th.\ \ref{th:tilde_T_stepped_surface} and Def.\ \ref{def:stepped_surface_exp}). 
We also here prove the following ``classification'' result: the longer the Brun expansion of a stepped surface is, the more planar this stepped surface is (Th.\ \ref{th:weak_cv_surfaces}).
Sec.\ \ref{sec:con} ends the paper with additional remarks.

\section*{Notation}

Let us provide here some notation used throughout this paper.
The dimension of the space is denoted by $d$, and we assume $d\geq 3$; we thus work in $\mathbb{R}^d$.
Let $\mathbb{K}$ be equal to $\mathbb{Z}$ or $\mathbb{R}$.
The set of non-zero (resp. non-negative) elements of $\mathbb{K}$ is denoted by $\mathbb{K}^*$ (resp. $\mathbb{K}_+$).
The set $\mathbb{N}$ denotes the set $\{0,1,2,\ldots\}$ of non-negative integers. We stress the fact that     the term  positive  refers  to  strictly positive in all that follows.
The cardinality of a set $X$ is denoted by $\#X$.
If $\vec{x}=(x_1,\ldots,x_d)$ and $\vec{y}=(y_1,\ldots,y_d)$ belong to $\mathbb{R}^d$, then we write $\vec{x} \leq \vec{y}$ (resp. $\vec{x} < \vec{y}$) if, for all $i\in \{1,\ldots,d\}$, $x_i \leq y_i$ (resp. $x_i <y_i$).
We also denote by $\langle\vec{x}|\vec{y}\rangle=\sum_ix_iy_i$ the scalar product of $\vec{x}$ and $\vec{y}$. The Euclidean norm in $\mathbb{R}^d$ is denoted by $||\,.\,||$, and $B(\vec{x},r)$ stands for the Euclidean closed ball of center $\vec{x}$ and radius $r$.
We also use the notation $||\,.\,||$ for the matrix norm associated with the Euclidean norm, \emph{i.e.}, $||M||=\sup_{\vec{x}\neq\vec{0}}(||M\vec{x}||/||\vec{x}||)$ for a $d\times d$ real matrix $M$.
Last, for $\vec{\alpha}\in\mathbb{R}^d$, $\vec{\alpha}^\bot$ stands for the hyperplane orthogonal to the real line $\mathbb{R}\vec{\alpha}$.

\section{Stepped functions}
\label{sec:stepped}

We here introduce basic objects of this paper, thanks to an algebraic formalism for unions of facets of integer translates of the unit cube.
We first introduce the notion of \emph{stepped function}:

\begin{definition}\label{def:stepped_function}
A \emph{stepped function} $\mathcal{E}$ is a function from $\mathbb{Z}^d\times\{1,\ldots,d\}$ to $\mathbb{Z}$.
Its \emph{size} is the cardinality of the subset of $\mathbb{Z}^d\times\{1,\ldots,d\}$ where it takes non-zero values.
The set of stepped functions is denoted by $\mathfrak{F}$.
\end{definition}

The set of stepped functions is a $\mathbb{Z}$-module.
For any two stepped functions $\mathcal{E}$ and $\mathcal{E}'$, we write $\mathcal{E}\leq\mathcal{E}'$ if $\mathcal{E}(\vec{x},i)\leq\mathcal{E}'(\vec{x},i)$ for any $(\vec{x},i)\in\mathbb{Z}^d\times\{1,\ldots,d\}$.
We then endow $\mathfrak{F}$ with the following metric:

\begin{definition}\label{def:distance}
Let $d_\mathfrak{F}$ be the distance on $\mathfrak{F}$ defined by $d_\mathfrak{F}(\mathcal{E},\mathcal{E}')=0$ if $\mathcal{E}=\mathcal{E}'$, and by $d_\mathfrak{F}(\mathcal{E},\mathcal{E}')=2^{-r}$ otherwise, where
$$
r=\max\{n\in\mathbb{N}~|~\forall (\vec{x},i)\in\mathbb{Z}^d\times\{1,\ldots,d\},~||\vec{x}||< n ~\Rightarrow~\mathcal{E}(\vec{x},i)=\mathcal{E}'(\vec{x},i)\}.
$$
\end{definition}

This metric is inspired by the rigid version of the so-called \emph{local metric} often used for tilings (see, \emph{e.g.}, \cite{RW,rob}). 
Note that the distance between two stepped functions which agree nowhere is equal to one ($r=0$).
The only non-trivial point to ensure that $d_\mathfrak{F}$ is a distance is that the triangle inequality holds.
One checks that, for any three stepped functions $\mathcal{E}$, $\mathcal{E}'$ and $\mathcal{E}''$, the following ultrametric inequality holds:
$$
d_\mathfrak{F}(\mathcal{E},\mathcal{E}')\leq \max (d_\mathfrak{F}(\mathcal{E},\mathcal{E}''),d_\mathfrak{F}(\mathcal{E}'',\mathcal{E}')).
$$

\noindent Among stepped functions, we distinguish the following elementary ones:

\begin{definition}\label{def:face}
The \emph{face of type $i\in\{1,\ldots,d\}$ located at $\vec{x}\in\mathbb{Z}^d$}, denoted by $(\vec{x},i^*)$, is the stepped function taking value one at $(\vec{x},i)$ and zero elsewhere.
\end{definition}

The notation $(\vec{x},i^*)$ allows one to distinguish between the function and the element $(\vec{x},i)\in\mathbb{Z}^d \times \{1,\ldots,d\} $.
Although it would be more natural to use the notation $(\vec{x},i)^* $, we prefer $(\vec{x},i^*)$ for simplicity.\\

\noindent Faces allow us to write any stepped function $\mathcal{E}\in\mathfrak{F}$ as an at most countable weighted sum of faces:
$$
\mathcal{E}=\sum_{(\vec{x},i)\in\mathbb{Z}^d\times\{1,\ldots,d\}}\mathcal{E}(\vec{x},i)(\vec{x},i^*),
$$
where $\mathcal{E}(\vec{x},i)$ is called the \emph{weight} of the face $(\vec{x},i^*)$.
This notation is convenient and will later be used (see, \emph{e.g.}, Def.\ \ref{def:stepped_quasi_plane}, below).
Let us stress the fact that such sums are formal and do not raise any problem of convergence.\\

Let us now provide a geometric interpretation of faces.
Let $(\vec{e}_1,\ldots,\vec{e}_d)$ denote the canonical basis of $\mathbb{R}^d$.
The {\em geometric interpretation} of the face $(\vec{x},i^*)$ is defined as the facet\footnote{Note that we choose here the facet containing $\vec{x}+\vec{e}_i$ and not $\vec{x}$ (in other words, we add an offset $\vec{e}_i$): this is only for compatibility with formulas of \cite{AI} used here in Sec.
\ref{sec:dual_maps}.} of unit hypercube of $\mathbb{R}^d$ (see Fig.\ \ref{fig:faces}):
$$
\{\vec{x}+\vec{e}_i+\sum_{j\neq i}\lambda_j\vec{e}_j~|~0\leq\lambda_j\leq 1\}.
$$

\begin{figure}[hbtp]
\centering
\includegraphics[width=0.6\textwidth]{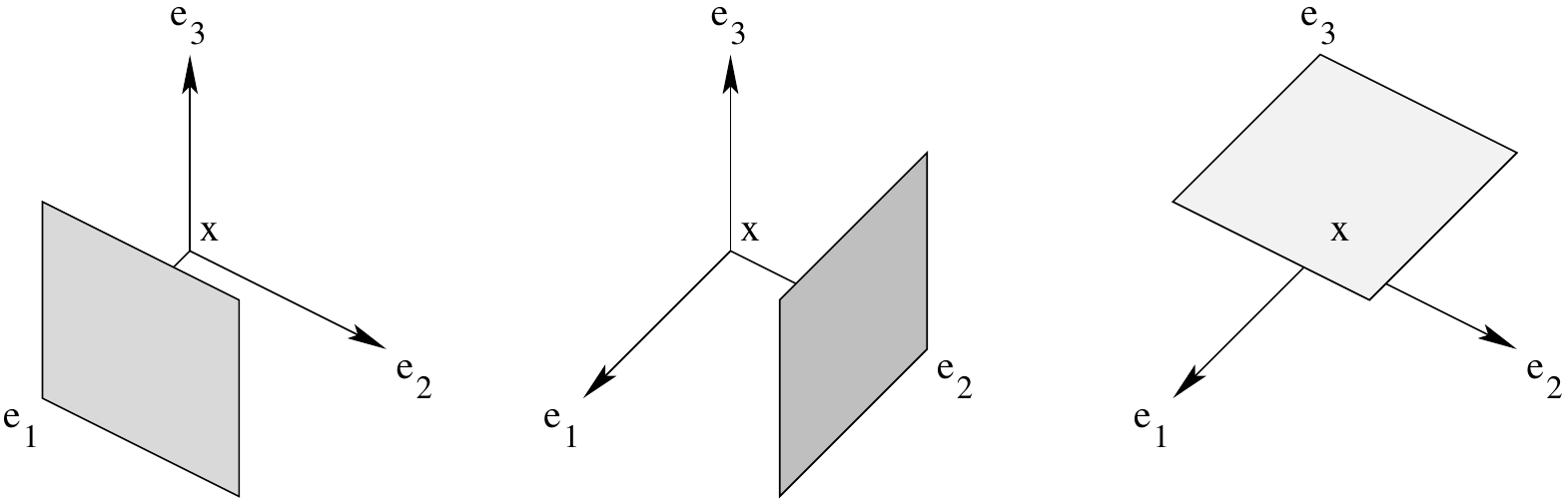}
\caption{Geometrical interpretations of faces $(\vec{x},i^*)$, for $i=1,2,3$ (from left to right).}
\label{fig:faces}
\end{figure}

Then, the geometric interpretation of a sum of faces whose weights are all equal to zero or one is naturally defined as the union of geometric interpretations of faces with weight one.
This leads to distinguish particular stepped functions:

\begin{definition}\label{def:binary_stepped_function}
A stepped function is said to be \emph{binary} if it takes only values zero or one.
The set of binary stepped functions is denoted by $\mathfrak{B}$.
\end{definition}

Thus, binary stepped functions are the stepped functions having a geometric interpretation (which is a union of facets of unit hypercubes).\\

We are now in a position to introduce two types of stepped functions playing a key role throughout this paper, namely \emph{stepped surfaces} and \emph{stepped planes}:

\begin{definition}\label{def:stepped_surface}
Let $\pi$ be the orthogonal projection onto the Euclidean hyperplane $\Delta=\{\vec{x}~|~x_1+\ldots+x_d=0\}$.
A binary stepped function whose geometric interpretation is homeomorphic to $\Delta$ under $\pi$ is called a \emph{stepped surface}.
The set of stepped surfaces is denoted by $\mathfrak{S}$.
\end{definition}

In other words, stepped surfaces correspond (by projecting under $\pi$ their geometrical interpretation) to tilings of $\Delta$ by $d$ types of rhomboedras ($3$ types of lozenges when $d=2$, see \emph{e.g.} Fig. \ref{fig:stepped_plane_surface} below).

\begin{definition}\label{def:stepped_plane}
The \emph{stepped plane} of \emph{normal vector} $\vec{\alpha}\in\mathbb{R}_+^d\backslash\{\vec{0}\}$ and \emph{intercept} $\rho\in\mathbb{R}$ is the binary stepped function denoted by $\mathcal{P}_{\vec{\alpha},\rho}$ and defined by:
$$
\mathcal{P}_{\vec{\alpha},\rho}(\vec{x},i)=1 ~\Leftrightarrow~ \langle\vec{x}|\vec{\alpha}\rangle < \rho\leq\langle\vec{x}+\vec{e}_i|\vec{\alpha}\rangle.
$$
The set of stepped planes is denoted by $\mathfrak{P}$.
\end{definition}

The set of vertices of the geometric interpretation of a stepped plane is usually called a standard arithmetic discrete plane in discrete geometry, according to the terminology of \cite{reveilles}.
Fig.\ \ref{fig:stepped_plane_surface} illustrates Def.\ \ref{def:stepped_surface} and \ref{def:stepped_plane}.

\begin{figure}[hbtp]
\centering
\includegraphics[width=\textwidth]{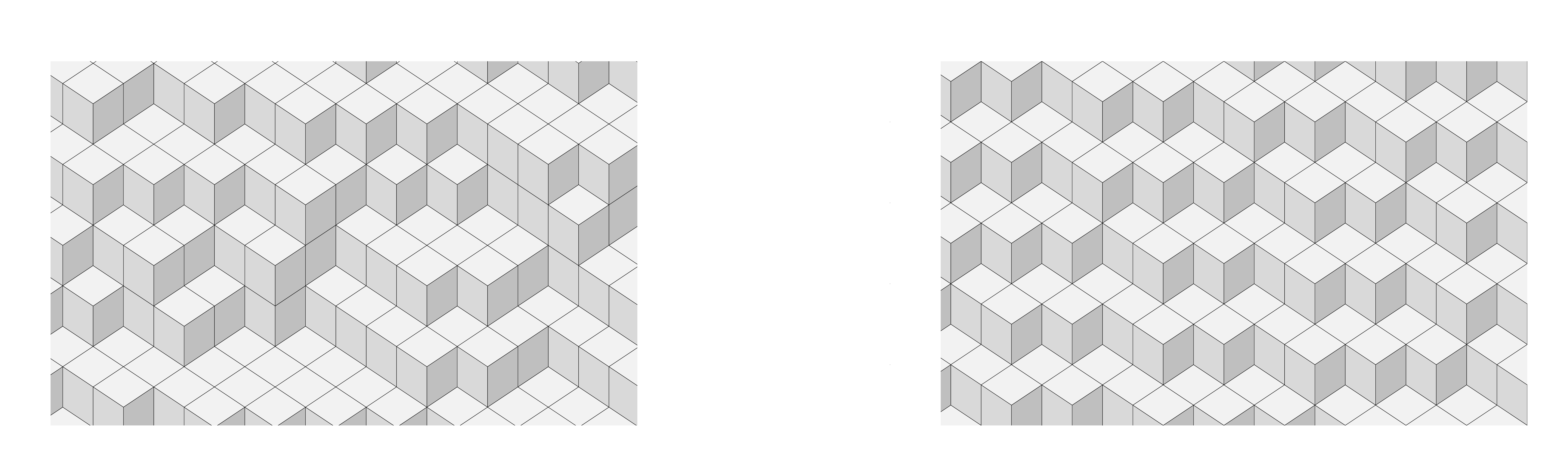}
\caption{Geometrical interpretation of a stepped surface (left) and of a stepped plane (right).
Both are unions of facets of unit hypercubes of $\mathbb{R}^d$, whose images under $\pi$ can be seen as tilings of the hyperplane $\Delta$ (here, $d=3$).
}
\label{fig:stepped_plane_surface}
\end{figure}

Note that the condition $\vec{\alpha}\in\mathbb{R}_+^d\backslash\{\vec{0}\}$ of Def.\ \ref{def:stepped_plane} ensures that the geometric interpretation of a stepped plane is homeomorphic to $\Delta$ under $\pi$.
Stepped planes are thus particular stepped surfaces, namely ``straight'' ones, and the different types of stepped functions previously introduced verify the following inclusions (with all of them being strict):
$$
\mathfrak{P}\subset\mathfrak{S}\subset\mathfrak{B}\subset\mathfrak{F}.
$$
One can also check that $\mathfrak{S}$, $\mathfrak{B}$ and $\mathfrak{F}$ are closed (w.r.t. the distance $d_\mathfrak{F}$), while $\mathfrak{P}$ is neither closed nor open.\\

We conclude this section by stating a technical property of stepped surfaces that will be used later (in the proofs of Prop.\ \ref{prop:positive_image_stepped_surface} and Lem.\ \ref{lem:zarbi}):

\begin{proposition}\label{prop:stepped_surface}
If $\vec{x}$ and $\vec{y}$ are two integer vectors belonging to the geometric interpretation of a stepped surface, then $\vec{x}-\vec{y}$ is neither positive nor negative.
\end{proposition}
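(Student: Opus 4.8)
The plan is to argue by contradiction: suppose $\vec{x}$ and $\vec{y}$ both lie on the geometric interpretation of a stepped surface $\mathcal{S}$ and that, say, $\vec{x} - \vec{y} > \vec{0}$ (the case $\vec{x} - \vec{y} < \vec{0}$ being symmetric by swapping the roles of the two vertices). The key observation is that $\pi$ restricted to the geometric interpretation of $\mathcal{S}$ is a homeomorphism onto $\Delta$, so two distinct points of the surface cannot have the same projection; equivalently, no two vertices of $\mathcal{S}$ can differ by a vector lying in the kernel of $\pi$, i.e.\ by a nonzero multiple of $(1,\ldots,1)$. More is true: because the surface projects \emph{bijectively} onto $\Delta$ and its facets project onto a tiling of $\Delta$ by rhomboedra (the flat, non-overlapping pieces described after Def.~\ref{def:stepped_surface}), the surface is a ``monotone'' object — intuitively, it is the graph of a function over $\Delta$ — so it cannot contain two vertices, one of which dominates the other coordinatewise.

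To make this precise I would use the direction $\vec{u} = \vec{x} - \vec{y}$, which by assumption satisfies $\vec{u} > \vec{0}$, hence $\langle \vec{u} \mid (1,\ldots,1)\rangle = u_1 + \cdots + u_d > 0$, so $\vec{u} \notin \Delta$ but the component of $\vec{u}$ orthogonal to $(1,\ldots,1)$ may be nonzero. First I would handle the degenerate subcase where $\pi(\vec{x}) = \pi(\vec{y})$: then $\vec{u}$ is a positive multiple of $(1,\ldots,1)$ and $\vec{x} \ne \vec{y}$ are two distinct surface points with the same image under $\pi$, directly contradicting injectivity of $\pi$ on the surface. Otherwise $\pi(\vec{x}) \ne \pi(\vec{y})$, and I would consider the straight segment $[\pi(\vec{y}), \pi(\vec{x})]$ inside $\Delta$; since $\pi$ is a homeomorphism from the surface onto $\Delta$, this segment lifts to an arc $\gamma$ on the surface joining $\vec{y}$ to $\vec{x}$. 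Along $\gamma$ the ``height'' function $h(\vec{z}) = \langle \vec{z} \mid (1,\ldots,1)\rangle = z_1 + \cdots + z_d$ (the coordinate transverse to $\Delta$) must therefore change continuously and monotonically in a controlled way, because the surface is a union of unit facets: crossing a single facet of type $i$ from one vertex to an adjacent one changes $h$ by $0$ or by $\pm 1$ depending on which edges of the rhomboedron one traverses, and — crucially — traversing a facet \emph{in a fixed projected direction} forces the sign of each coordinate increment. The point is that, reading off the edge directions $\pm\vec{e}_j$ used along $\gamma$, moving ``upward'' in some coordinate is incompatible with the projected path going in a direction that requires moving ``downward'' in another, so one cannot have \emph{all} coordinates of the net displacement strictly positive while still landing back on the surface over the endpoint $\pi(\vec{x})$.

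The cleanest way to package this is probably the following: the vertices of the geometric interpretation of a stepped surface form a standard-arithmetic-discrete-surface type set which, by the homeomorphism condition, contains exactly one lift of each point of $\Delta \cap (\text{appropriate lattice})$; concretely, for every $\vec{v}$ in the projected vertex lattice there is a unique integer $t(\vec{v})$ with $\pi^{-1}(\vec{v}) \cap \mathcal{S} = \{\vec{v} + t(\vec{v})\,\vec{e}_d'\}$ for a suitable transversal, and adjacency forces $|t(\vec{v}) - t(\vec{v}')| \le 1$ for neighbours $\vec{v}, \vec{v}'$ — i.e.\ the surface is a $1$-Lipschitz graph. From $\vec{x} - \vec{y} > \vec{0}$ one derives both $\pi(\vec{x}) \ne \pi(\vec{y})$ (else contradiction as above) \emph{and} $h(\vec{x}) > h(\vec{y})$, and then, walking along the lifted segment, the Lipschitz bound on the height together with the fact that the net projected displacement $\pi(\vec{x}) - \pi(\vec{y})$ has a nonzero component in \emph{every} coordinate direction (forced by $\vec{u} > \vec{0}$ after subtracting the $(1,\ldots,1)$-part) yields the contradiction. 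I expect the main obstacle to be turning the intuitive ``monotone graph'' picture into a rigorous statement — in particular, justifying that a path on the surface whose projection is a straight segment cannot have a coordinate of its displacement change sign, which is where the combinatorics of how unit facets glue together (the rhomboedra tiling) really has to be used. Everything else is bookkeeping with the projection $\pi$ and its inverse.
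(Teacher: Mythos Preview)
Your overall strategy---argue by contradiction and exploit that $\pi$ is a homeomorphism from the geometric interpretation of $\mathcal{S}$ onto $\Delta$---is the right one, and the degenerate subcase $\pi(\vec{x})=\pi(\vec{y})$ is indeed immediate. But the mechanism you propose for the general case, a Lipschitz bound on the height $h(\vec{z})=z_1+\cdots+z_d$ viewed as a function on $\Delta$, does \emph{not} yield the contradiction, and this is a real gap rather than bookkeeping. On a single facet of type $i$ the steepest slope of $h$ over $\Delta$ is attained in the direction $\sum_{j\neq i}\vec{e}_j$, and a short computation gives Lipschitz constant $\sqrt{d(d-1)}$. The contradiction you want would then require $(\sum_k u_k)^2>(d-1)\|\vec{u}\|^2$ for $\vec{u}=\vec{x}-\vec{y}$; but for $\vec{u}=(N,1,\ldots,1)$ with $d\geq 3$ and $N$ large this fails. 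Likewise, the claim that ``traversing a facet in a fixed projected direction forces the sign of each coordinate increment'' breaks down when the lifted path crosses from a type-$i$ facet to a type-$j$ facet: the frozen coordinate changes, and there is no global sign constraint on individual coordinates along a long lifted segment.

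What is missing is a \emph{directional} monotonicity rather than a metric one, and this is what the paper supplies. Since the surface $S$ separates $\mathbb{R}^d$ into open half-spaces $S^+$ (on the $(1,\ldots,1)$ side) and $S^-$, the key single-step lemma is: if $\vec{x}\in S\cap\mathbb{Z}^d$ then $\vec{x}-\vec{e}_i\in S\cup S^-$ for every $i$. This \emph{is} proved by lifting a segment, but only the short one from $\pi(\vec{0})$ to $\pi(-\vec{e}_i)$, which lies inside the projection of a single facet of $S$; because the lift stays in one facet, the combinatorics is tractable and one reads off that the lift of $\pi(-\vec{e}_i)$ is $-\vec{e}_i+(1,\ldots,1)$, hence $-\vec{e}_i\in S^-$. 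Iterating and extending to $S^-$ gives: $\vec{x}\in S\cap\mathbb{Z}^d$ and $\vec{v}\in\mathbb{N}^d$ imply $\vec{x}-\vec{v}\in S\cup S^-$. Now $\vec{x}-\vec{y}>\vec{0}$ lets one write $\vec{x}=\vec{y}+(1,\ldots,1)+\vec{v}$ with $\vec{v}\in\mathbb{N}^d$, whence $\vec{x}-\vec{v}=\vec{y}+(1,\ldots,1)\in S^+$, contradicting $\vec{x}-\vec{v}\in S\cup S^-$. The idea you were reaching for is there, but the workable version steps one coordinate at a time against the half-space decomposition, rather than bounding the slope of a long lifted arc.
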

\begin{proof}
Let $\mathcal{S}$ be a stepped surface, and 
 $S$ stand for its geometric interpretation.
We denote by $\vec{u}$ the vector $\vec{e}_1+\ldots+\vec{e}_d$.
Since ${S}$ is homeomorphic to the hyperplane $\Delta=\vec{u}^\bot$, $S$ divides $\mathbb{R}^d$ into two open halfspaces $S^+$ and $S^-$.
Let $S^+$ be the open halfspace in the direction $\vec{u}$.
One has:
$$
\forall \vec{z}\in S,~\forall k>0,~\vec{z}+k\vec{u}\in S^+,~\vec{z}-k\vec{u}\in S^-.
$$
Let us prove that if $\vec{x}\in S\cap\mathbb{Z}^d$, then $\vec{x}-\vec{e}_i \in S\cup S^-$, for $i\in \{1,\ldots,d\}$.
W.l.o.g., we can assume $\vec{x}=\vec{0}$.
Assume furthermore that $-\vec{e}_i \notin S$. We  want to prove that  $-\vec{e}_i \in S^-.$
There is $\vec{z} \in S$ which is mapped to $\pi(-\vec{e}_i)$ under $\pi$.
Hence there is $k \in \mathbb{Z}$ such that $\vec{z}= -\vec{e}_i + k\vec{u}$.
We want to prove that $k=1$. The claim will follow directly by noticing that
$-\vec{e}_i =( -\vec{e}_i + \vec{u} )-\vec{u}$ and $-\vec{e}_i + \vec{u}=\vec{z} \in S$, i.e.,  $-\vec{e}_i  \in S^-$.\\
Thus, let us prove that $k=1$. There is a facet $F$ included in $S$, with $F$ being the geometric interpretation of a face with weight one of $\mathcal{S}$, such that $\pi(F)\subset\Delta$ contains the line segment with ends $\pi(\vec{0})$ and $\pi(-\vec{e_i})$.
The set of vertices of $\pi(F)$ is included in $$\{\sum_{i=1}^d\delta_i\pi(\vec{e}_j)|\delta_i\in\{0,\pm 1\}\}.$$
 Indeed, for any $j\in \{1,\ldots,d\}$, the set of all vertices of $\pi(F)$ cannot contain both $\pi(\vec{e}_j)$ and $\pi(-\vec{e}_j)$. Hence, $\pi(-\vec{e}_i)$ can be reached from $\vec{0}$ by an edge path in $\pi(F)$.
If we lift up this path in $S$, we also obtain that $-\vec{e}_i +k \vec{u}$
can be reached from $\vec{0}$
by an edge path in $F$. (For an illustration of an example of such a path in the case of
$i=1$, see Fig.~\ref{fig:proof:stepped_surface}.)
Hence, $-\vec{e}_i +k \vec{u} \in \{\sum_{i=1}^d\delta_i \vec{e}_j|\delta_i\in\{0,\pm 1\}\},$
which implies that $k=1$.   This ends the proof of  the fact that  if $\vec{x}\in S\cap\mathbb{Z}^d$, then $\vec{x}-\vec{e}_i \in S\cup S^-$, for $i\in \{1,\ldots,d\}$.\\
Now, let $\vec{x} \in S^-\cap \mathbb{Z}^d$.
There is $\ell>0$ such that $\vec{x}+ \ell \vec{u} \in S$.
We deduce from above that $\vec{x}+ \ell \vec{u}-\vec{e}_i \in S^-$, and thus $\vec{x}-\vec{e}_i \in S \cup S^-$.
We thus have proved that if $\vec{x}\in (S\cup S^-) \cap \mathbb{Z}^d$, then $\vec{x}-\vec{e}_i \in S\cup S^-$, for $i\in \{1,\ldots,d\}$.
This inductively implies that if $\vec{x} \in S \cap \mathbb{Z}^d$, then, for any $\vec{v} \in \mathbb{N}^d$, $\vec{x}-\vec{v} \in S\cup S^-$.\\
Finally, let $\vec{x}$ and $\vec{y}$ in $S \cap \mathbb{Z}^d$ such that $\vec{x}-\vec{y}>0$.
One has $\vec{x} \geq \vec{y} + \vec{u}$, hence there is $\vec{v}\in \mathbb{N}^d$ such that $\vec{x}=\vec{y}+\vec{u}+\vec{v}$.
This yields $\vec{x}-\vec{v}= \vec{y} + \vec{u}\in S ^+$ and $\vec{x}-\vec{v} \in S\cup S^{-}$.
This contradicts $(S \cup S^-) \cap S^+=\emptyset$.
\end{proof}
\begin{figure}[hbtp]
\centering
\includegraphics[width=8cm]{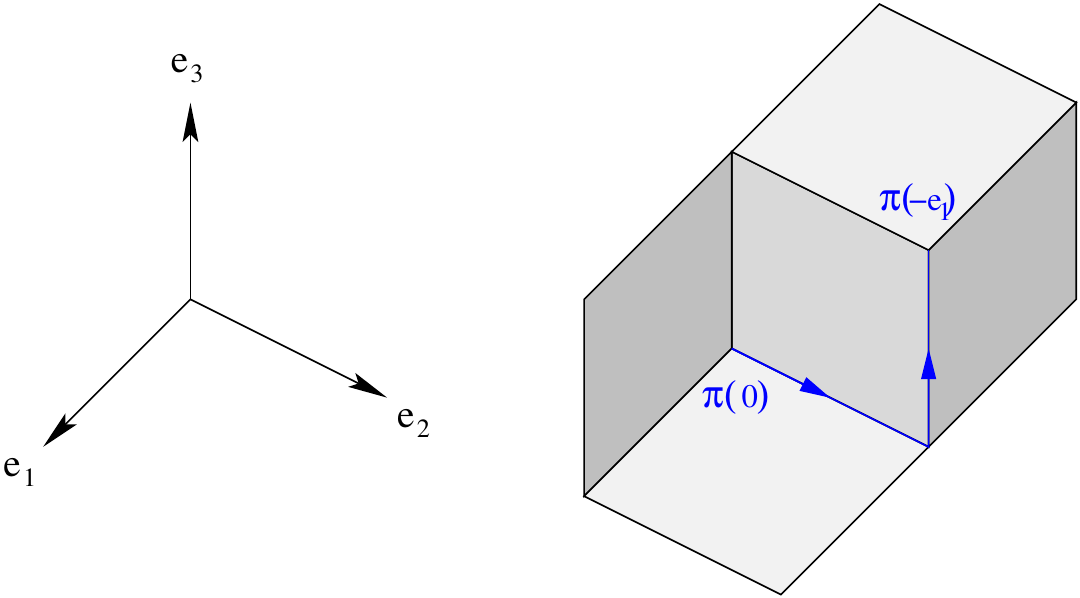}
\caption{A path in $S$ from $\vec{0}$ to $\pi(-\vec{e}_1)$ (here, $d=3$).
}
\label{fig:proof:stepped_surface}
\end{figure}

\section{Flips and pseudo-flips}
\label{sec:flips}

In mechanical physics, matter is often modeled by tilings, with local rearrangements of inter-atomic links being modeled by local rearrangements of tiles called \emph{flips}.
This rises questions about the structure of the space of tilings endowed with the flip operation, as well as about the dynamics of flips over this space.
In particular, the question of \emph{flip-accessibility} is natural: can one transform a tiling into another one by performing a sequence of flips?\\

In the previous section, we mentioned that the orthogonal projection $\pi$ onto $\Delta$ naturally associates with each stepped surface of $\mathbb{R}^d$ a tiling of $\Delta$ (see Fig.\ \ref{fig:stepped_plane_surface}).
Here, we take an $\mathbb{R}^d$-viewpoint and directly define a notion of flip on stepped surfaces, so that performing a flip on a stepped surface corresponds to performing a classic flip on the associated tiling.
More formally, we introduce the following stepped functions:

\begin{definition}\label{def:flip}
The \emph{flip} located at $\vec{x}\in\mathbb{Z}^d$ is the stepped function $\mathcal{F}_{\vec{x}}$ defined by:
$$
\mathcal{F}_{\vec{x}}:=\sum_{i=1}^{d}(\vec{x},i^*)-\sum_{i=1}^{d}(\vec{x}-\vec{e}_i,i^*).
$$
\end{definition}

Performing a flip on a stepped surface $\mathcal{S}$ thus means adding some $\mathcal{F}_{\vec{x}}$ to $\mathcal{S}$, provided $\mathcal{F}_{\vec{x}}+\mathcal{S}$ is still a stepped surface.
Then, transforming a tiling into another one by flips becomes pushing outwards the difference between the two corresponding stepped surfaces (see Fig.\ \ref{fig:flip}).
The precise definition of flip-accessibility in this setting is the following:

\begin{definition}\label{def:flip_acc}
A stepped surface $\mathcal{S}'$ is \emph{flip-accessible} from a stepped surface $\mathcal{S}$ if there is a sequence $(\mathcal{S}_n)_{n \in \mathbb{N}}$ of stepped surfaces such that:
$$
\mathcal{S}_0=\mathcal{S},\qquad
\mathcal{S}_{n+1}-\mathcal{S}_n\in\{\pm\mathcal{F}_{\vec{x}}~|~\vec{x}\in\mathbb{Z}^d\},\qquad
\lim_{n\to\infty}\mathcal{S}_n=\mathcal{S}'.
$$
\end{definition}

\begin{figure}[hbtp]
\centering
\includegraphics[width=0.7\textwidth]{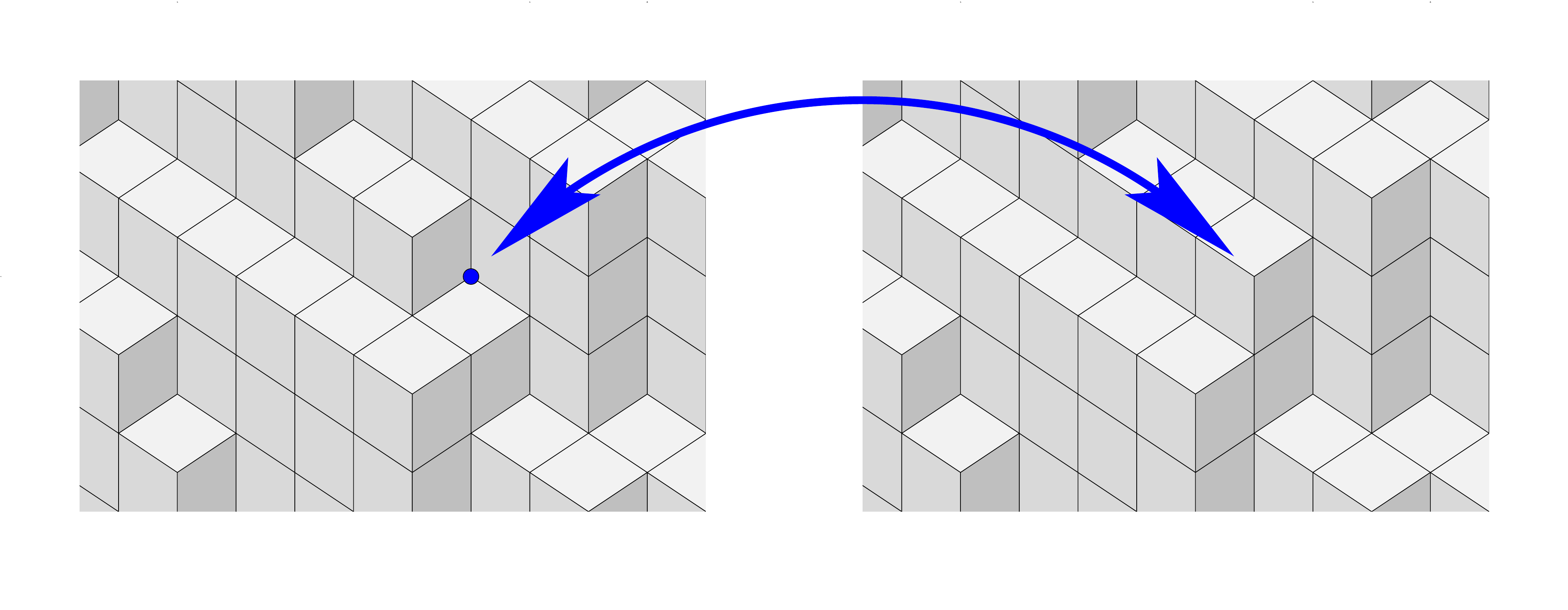}
\caption{Two stepped surfaces which differ by a flip (located at the dotted vertex).
Adding a flip can be seen in $\mathbb{R}^d$ as adding the upper facets of a unit hypercube to the geometric interpretation of a stepped surface or, once projected under $\pi$, as a local rearrangement of tiles in a tiling of $\Delta$.
}
\label{fig:flip}
\end{figure}

Note that the sequence of stepped surfaces $(\mathcal{S}_n)_{n \in \mathbb{N}}$ connecting two stepped surfaces can be finite as well as infinite.
It is easily seen that two stepped surfaces are not always flip-accessible.
For example, consider the stepped plane $\mathcal{P}_{\vec{e}_1,0}$: since it takes values one only on faces of type $1$, $\mathcal{P}_{\vec{e}_1,0}+\mathcal{F}_{\vec{x}}$ takes negative values, whatever $\vec{x}$ is.
Thus, no stepped surface is flip-accessible from $\mathcal{P}_{\vec{e}_1,0}$ (it is a sink).
Let us mention that a characterization of flip-accessible stepped surfaces in terms of \emph{shadows} is provided in \cite{ABFJ} (see also \cite{BFR}).
Furthermore, it is proven there that any stepped surface is flip-accessible from a stepped plane whose normal vector is positive (that is, each entry is  strictly positive).\\

Def.\ \ref{def:flip_acc} translates flip-accessibility from tilings to stepped surfaces.
But the formalism introduced in Sec.\ \ref{sec:stepped} allows us to define a more general notion of flip-accessibility, namely \emph{pseudo-flip-accessibility}, which involves not only stepped surfaces but also stepped functions:

\begin{definition}\label{def:pseudo_flip_acc}
A stepped function $\mathcal{E}'$ is \emph{pseudo-flip-accessible} from a stepped function $\mathcal{E}$ if there is a sequence $(\mathcal{E}_n)_{n \in \mathbb{N}}$ of stepped functions such that:
$$
\mathcal{E}_0=\mathcal{E},\qquad
\mathcal{E}_{n+1}-\mathcal{E}_n\in\{\pm\mathcal{F}_{\vec{x}}~|~\vec{x}\in\mathbb{Z}^d\},\qquad
\lim_{n\to\infty}\mathcal{E}_n=\mathcal{E}'.
$$
\end{definition}

While flip-accessibility connects two stepped surfaces by a sequence of stepped surfaces, such a restriction does no longer hold on the connecting sequence for pseudo-flip-accessibility (\emph{i.e.}, the $\mathcal{E}_n$ appearing in Def.\ \ref{def:pseudo_flip_acc} are not necessarily stepped surfaces, even if $\mathcal{E}$ and $\mathcal{E}'$ do).
In particular, characterizing pseudo-flip-accessibility between stepped surfaces becomes a trivial question:

\begin{proposition}\label{prop:pseudo_flip_acc}
Any two stepped surfaces are mutually pseudo-flip-accessible.
\end{proposition}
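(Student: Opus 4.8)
The plan is to work inside the $\mathbb{Z}$-module $\mathfrak{F}$ and to show that, for any two stepped surfaces $\mathcal{S}$ and $\mathcal{S}'$, the difference $\mathcal{D}:=\mathcal{S}'-\mathcal{S}$ is always an integer combination of flips, $\mathcal{D}=\sum_{\vec{x}\in\mathbb{Z}^d}c_{\vec{x}}\mathcal{F}_{\vec{x}}$ with $c_{\vec{x}}\in\mathbb{Z}$ — a sum that makes sense weight by weight because, reading off Def.\ \ref{def:flip}, the flip $\mathcal{F}_{\vec{x}}$ alters the weight of $(\vec{y},j^*)$ only for $\vec{x}=\vec{y}$ (coefficient $+1$) or $\vec{x}=\vec{y}+\vec{e}_j$ (coefficient $-1$), so that at most two flips touch any given face. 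Such a writing is enough: one enumerates the $\vec{x}$ with $c_{\vec{x}}\neq0$ by non-decreasing norm and lists, for each of them, the signed flip $\mathrm{sgn}(c_{\vec{x}})\mathcal{F}_{\vec{x}}$ repeated $|c_{\vec{x}}|$ times; applying these flips successively to $\mathcal{S}$ in the sense of Def.\ \ref{def:pseudo_flip_acc}, after finitely many steps the weight of every face $(\vec{y},j^*)$ with $\|\vec{y}\|$ small has reached $\mathcal{S}(\vec{y},j)+c_{\vec{y}}-c_{\vec{y}+\vec{e}_j}=\mathcal{S}'(\vec{y},j)$ and is never disturbed again, so the sequence converges to $\mathcal{S}'$ for $d_{\mathfrak{F}}$. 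Since $\mathcal{D}$ and $-\mathcal{D}=\mathcal{S}-\mathcal{S}'$ are treated symmetrically, this gives mutual pseudo-flip-accessibility.

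By the first paragraph's observation, finding the $c_{\vec{x}}$ amounts to finding a discrete primitive $c\colon\mathbb{Z}^d\to\mathbb{Z}$ with $c_{\vec{y}}-c_{\vec{y}+\vec{e}_j}=\mathcal{D}(\vec{y},j)$ for all $\vec{y}$ and $j$. By a discrete Poincar\'e lemma on $\mathbb{Z}^d$ — sum $-\mathcal{D}$ along a lattice path from $\vec{0}$ to $\vec{y}$, path-independence following from contractibility of $\mathbb{Z}^d$ — such a $c$ exists as soon as $\mathcal{D}$ is \emph{curl-free}, i.e.\ $\mathcal{D}(\vec{y},j)+\mathcal{D}(\vec{y}+\vec{e}_j,k)=\mathcal{D}(\vec{y},k)+\mathcal{D}(\vec{y}+\vec{e}_k,j)$ for all $\vec{y}$ and all $j\neq k$. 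It thus suffices to show that every stepped surface is curl-free, since then so is the difference of two of them.

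To prove that a stepped surface $\mathcal{S}$ with geometric interpretation $S$ is curl-free, fix $\vec{y}$ and $j\neq k$. The four faces appearing in the curl relation, $(\vec{y},j^*),(\vec{y}+\vec{e}_k,j^*),(\vec{y},k^*),(\vec{y}+\vec{e}_j,k^*)$, are exactly those whose geometric interpretation contains the $(d-2)$-cube $C:=\vec{y}+\vec{e}_j+\vec{e}_k+[0,1]^{\{1,\ldots,d\}\setminus\{j,k\}}$ (a face of type $i$ lies in a hyperplane $\{x_i=\mathrm{const}\}$, which forces $i\in\{j,k\}$ and then pins down its position up to two choices). Here I would invoke Prop.\ \ref{prop:stepped_surface} twice: $(\vec{y},j^*)$ and $(\vec{y}+\vec{e}_j,k^*)$ cannot both have weight one in $\mathcal{S}$, since their geometric interpretations contain the integer points $\vec{y}+\vec{e}_j$ and $\vec{y}+\vec{e}_j+\vec{u}$ (with $\vec{u}:=\vec{e}_1+\ldots+\vec{e}_d$), whose difference $\vec{u}$ is positive; and symmetrically $(\vec{y}+\vec{e}_k,j^*)$ and $(\vec{y},k^*)$ cannot both have weight one, as they contain $\vec{y}+\vec{e}_k+\vec{u}$ and $\vec{y}+\vec{e}_k$. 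In terms of the weights this says precisely that each side of the curl relation, $\mathcal{S}(\vec{y},j)+\mathcal{S}(\vec{y}+\vec{e}_j,k)$ and $\mathcal{S}(\vec{y},k)+\mathcal{S}(\vec{y}+\vec{e}_k,j)$, is at most $1$, so the number of those four faces with weight one is at most $2$. On the other hand $S$, being homeomorphic to $\Delta$, is a $(d-1)$-manifold without boundary: if a relative-interior point of the codimension-one cell $C$ belonged to a single face of $\mathcal{S}$, that point would be a boundary point of $S$; hence the number of those four faces with weight one is $0$ (when $C\not\subset S$) or at least $2$. The two constraints together force the two sides of the curl relation to be equal, which is what we want.

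The routine parts here are the module bookkeeping of the first two paragraphs: the combinatorics of $\mathcal{F}_{\vec{x}}$, the discrete Poincar\'e lemma, and the convergence estimate for $d_{\mathfrak{F}}$. The step I expect to demand the most attention is the curl-freeness of stepped surfaces in the third paragraph — the only place where the geometry genuinely enters, through Prop.\ \ref{prop:stepped_surface} and the manifold structure of $S$ — and one must be careful with the local configuration of the four faces surrounding a codimension-one cell.
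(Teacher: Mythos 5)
Your proof is correct, but it takes a genuinely different route from the paper's. The paper constructs the flip coefficients directly from the geometry: using the fact that each surface separates $\mathbb{R}^d$ into an upper and a lower half (as in the proof of Prop.\ \ref{prop:stepped_surface}), it assigns to each lattice point the value $+1$, $-1$ or $0$ according to whether that point lies between the two surfaces and on which side, and then performs the corresponding flips in order of increasing norm — the same convergence argument you use. You instead proceed cohomologically: you observe that an integer combination of flips is exactly a discrete coboundary $(\vec{y},j)\mapsto c_{\vec{y}}-c_{\vec{y}+\vec{e}_j}$, reduce the problem via a discrete Poincar\'e lemma to showing that every stepped surface is \emph{curl-free}, and establish that by combining Prop.\ \ref{prop:stepped_surface} (each ``diagonal'' pair of the four faces around a codimension-one cell contributes at most $1$) with the manifold-without-boundary property of $S$ (the number of those faces present is $0$ or at least $2$); all the local verifications here check out, including that the four faces you list are the only ones meeting a relative-interior point of $C$. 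What each approach buys: the paper's construction is shorter, makes the coefficients explicitly $\{0,\pm 1\}$-valued, and localizes the flips \emph{between} the two surfaces — a refinement that is reused later in Lem.\ \ref{lem:zarbi}, and which your $c_{\vec{x}}$ (a priori unbounded, normalized only up to an additive constant) would not immediately provide. Your argument, on the other hand, characterizes exactly which stepped functions are integer combinations of flips (the curl-free ones), connects the proposition to the standard height-function formalism for dimer tilings, and isolates a local combinatorial property of stepped surfaces (curl-freeness) that is of independent interest. For the proposition as stated, both are complete.
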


\begin{proof}
We use the notation of the proof of Prop.~\ref{prop:stepped_surface}.
We consider two stepped surfaces $\mathcal{S}$ and $\mathcal{S}'$.
Let $\varepsilon_{\mathcal{S},\mathcal{S}'}:\mathbb{Z}^d\to\{0,\pm 1\}$ be defined by:
$$
\varepsilon_{\mathcal{S},\mathcal{S}'}(\vec{x})=\left\{\begin{array}{rl}
1 & \textrm{if $\vec{x} \in S \cup S^+$ and $\vec{x} \in ({S}')^-$,}\\
-1 & \textrm{if $\vec{x} \in S^-$ and $\vec{x} \in {S'} \cup (S')^+$,}\\
0 & \textrm{otherwise.}
\end{array}\right.
$$
We label the elements of the countable set $\{\vec{x}\in\mathbb{Z}^d~|~\varepsilon_{\mathcal{S},\mathcal{S}'}(\vec{x})\neq 0\}$ in such a way that $||\vec{x}_i||\leq||\vec{x}_j||$ whenever $i<j$.
This ensures the convergence of the sequence $(\mathcal{E}_n)$ of stepped functions (which have no reason to be stepped surfaces) defined, for any $n$, by:
$$
\mathcal{E}_n=\mathcal{S}+\sum_{k\leq n} \varepsilon_{\mathcal{S},\mathcal{S}'}(\vec{x}_k)\mathcal{F}_{\vec{x}_k}.
$$
Moreover, this sequence tends towards $\mathcal{S}'$.
Indeed, while a vector $\vec{x}\in\mathbb{Z}^d$ of $\mathcal{S}$ is still above (resp. below) $\mathcal{S}'$, it will be moved downwards (resp. upwards) by a flip (according to the sign of $\varepsilon_{\mathcal{S},\mathcal{S}'}(\vec{x})$).
This shows that $\mathcal{S}'$ is pseudo-flip-accessible from $\mathcal{S}$.
\end{proof}

\noindent Fig.\ \ref{fig:pseudo_flip_acc} illustrates the proof of Prop.\ \ref{prop:pseudo_flip_acc}, and more precisely how the labeling and flipping is performed.\\

\begin{figure}[hbtp]
\centering
\includegraphics[width=0.6\textwidth]{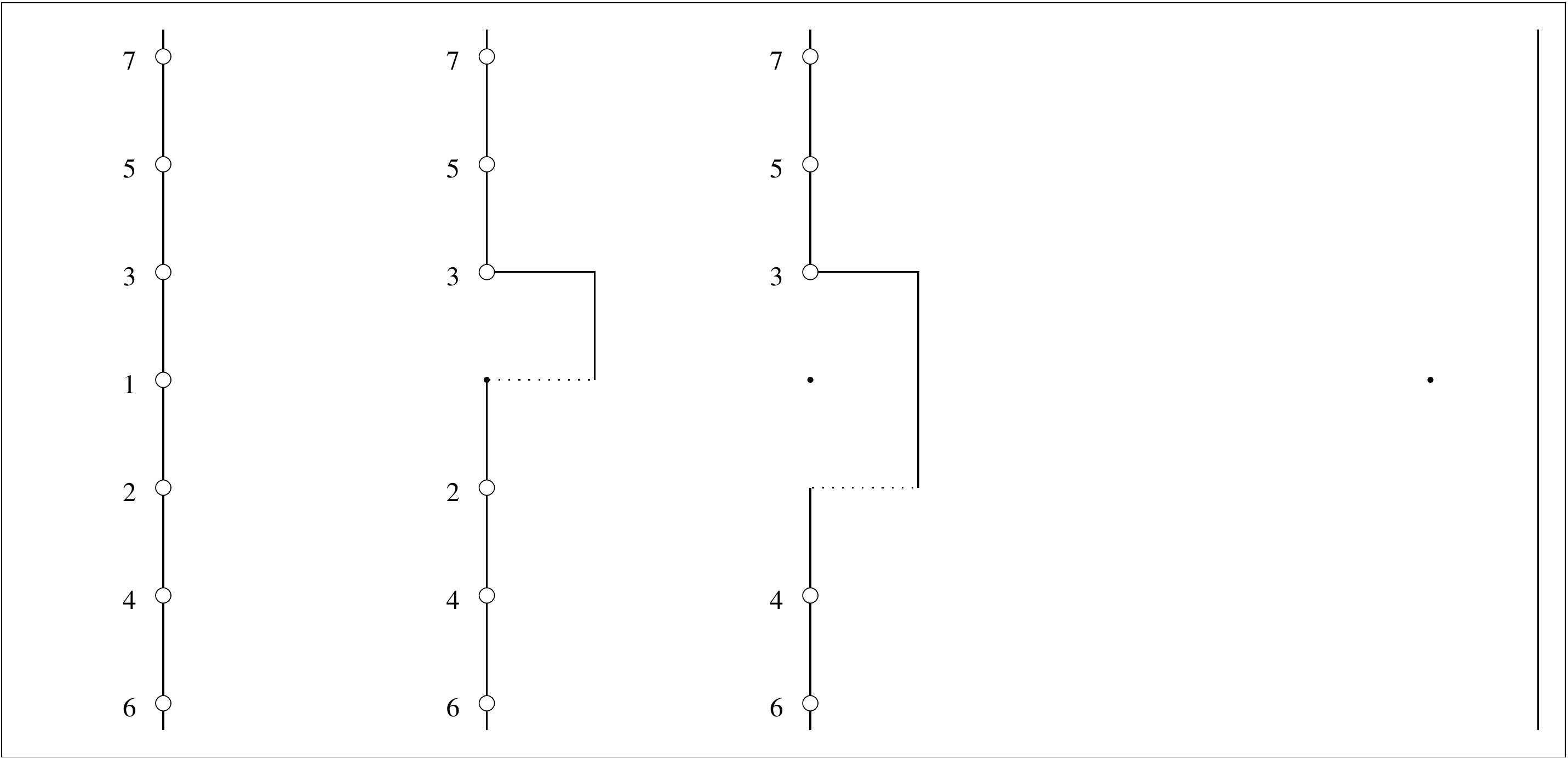}
\caption{
Assume that we want to transform, by performing flips, the stepped surface $\mathcal{S}=\sum_{k\in\mathbb{Z}}(k\vec{e}_2,1^*)$ into the stepped surface $\mathcal{S}'=\sum_{k\in\mathbb{Z}}(\vec{e}_1+k\vec{e}_2,1^*)$, \emph{i.e.} to translate it by $\vec{e}_1$.
Following the proof of Prop. \ref{prop:pseudo_flip_acc}, we add flips on white points (where $\varepsilon_{\mathcal{S},\mathcal{S}'}$ is equal to $1$).
These flips are sorted by increasing distance from the origin, as indicated on the figure.
By performing these flips, we thus transform $\mathcal{S}$ into $\mathcal{S}'$ (from left to right).
Note that intermediary stepped functions are not necessarily stepped surfaces (dashed edges stand for faces with weight $-1$).
}
\label{fig:pseudo_flip_acc}
\end{figure}

One can ask whether pseudo-flip-accessibility is too general for being worthwhile.
We will however use it in the proofs of Th.\ \ref{th:image_stepped_surface} and
 \ref{th:weak_cv_surfaces}, by decomposing a stepped surface into a sum of a stepped plane and flips.
We will in particular need the following result:

\begin{proposition}\label{prop:binary_pseudo_flip_acc}
If a non-zero binary stepped function is pseudo-flip-accessible from a stepped surface, then it is a stepped surface.
\end{proposition}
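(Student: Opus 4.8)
The plan is to show that the geometric interpretation $E$ of $\mathcal{E}$ meets every fibre of $\pi$ in exactly one point, and then to deduce from this that $E$ is homeomorphic to $\Delta$ under $\pi$, i.e. that $\mathcal{E}\in\mathfrak{S}$. The device is a ``covering number'' along a fibre that is left unchanged by the addition of a flip. Fix $\vec{p}\in\Delta$ in \emph{general position}, meaning that the line $\ell_{\vec{p}}:=\pi^{-1}(\vec{p})=\vec{p}+\mathbb{R}\vec{u}$ (with $\vec{u}=\vec{e}_1+\ldots+\vec{e}_d$) avoids the geometric interpretation of every $(d{-}2)$-dimensional face of the cubic grid. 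For a stepped function $\mathcal{D}$ and reals $a<b$, let $N_{\mathcal{D}}(\vec{p};a,b)$ be the sum of the weights $\mathcal{D}(\vec{x},i)$ over the faces $(\vec{x},i^*)$ whose geometric interpretation meets $\ell_{\vec{p}}$ at a point of $\vec{u}$-coordinate in $(a,b)$; when $\mathcal{D}$ has finite size write $N_{\mathcal{D}}(\vec{p})$ for the sum over all of $\ell_{\vec{p}}$. The first observation is that $N_{\mathcal{F}_{\vec{x}}}(\vec{p})=0$ for every $\vec{x}$: the support of $\mathcal{F}_{\vec{x}}$ (Def.~\ref{def:flip}) consists of the $d$ facets $(\vec{x}-\vec{e}_i,i^*)$, of weight $-1$, and the $d$ facets $(\vec{x},i^*)$, of weight $+1$, of the unit cube $\vec{x}+[0,1]^d$; each of these two families projects under $\pi$ onto the whole projected cube and tiles it (these are the rhombohedra of Def.~\ref{def:stepped_surface}), so if $\ell_{\vec{p}}$ meets that cube it enters through exactly one facet of the first family and leaves through exactly one of the second, whence a contribution $(-1)+(+1)=0$; otherwise the sum is $0$ trivially. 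By $\mathbb{Z}$-linearity, adding a flip leaves $N_{\mathcal{D}}(\vec{p})$ unchanged, and changes a truncation $N_{\mathcal{D}}(\vec{p};a,b)$ only through the single cube, if any, that $\ell_{\vec{p}}$ crosses at $\vec{u}$-coordinate $a$ or $b$.

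Let $(\mathcal{E}_n)$ realise the pseudo-flip-accessibility of $\mathcal{E}$ from the stepped surface $\mathcal{S}$. Since $\pi$ restricts to a homeomorphism from the geometric interpretation of a stepped surface onto $\Delta$, we have $N_{\mathcal{S}}(\vec{p})=1$; and each $\mathcal{E}_n=\mathcal{S}+(\text{finitely many flips})$ has finite size over $\vec{p}$, so $N_{\mathcal{E}_n}(\vec{p})=1$ for every $n$ by the invariance above. Now pass to the limit. For fixed $a<b$, the faces meeting $\ell_{\vec{p}}$ at a $\vec{u}$-coordinate in $(a,b)$ have base points in a fixed bounded set, so $d_{\mathfrak{F}}$-convergence $\mathcal{E}_n\to\mathcal{E}$ gives $N_{\mathcal{E}}(\vec{p};a,b)=N_{\mathcal{E}_n}(\vec{p};a,b)$ for all large $n$. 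Next, the flips of the sequence whose cube $\ell_{\vec{p}}$ crosses at coordinate $a$ (resp.\ $b$) all lie in a fixed bounded set, and convergence forces each lattice point to be flipped only finitely often (a flip at $\vec{y}$ changes the weight of every facet of $\vec{y}+[0,1]^d$, which can occur only finitely often); hence $N_{\mathcal{E}_n}(\vec{p};-\infty,a)$ and $N_{\mathcal{E}_n}(\vec{p};b,+\infty)$ are eventually constant in $n$, with values $\tau_a$ and $\tau_b$. The identity $N_{\mathcal{E}_n}(\vec{p})=1$ then yields $N_{\mathcal{E}}(\vec{p};a,b)=1-\tau_a-\tau_b$, and letting $a\to-\infty$, $b\to+\infty$ and checking that $\tau_a\to0$ and $\tau_b\to0$ (the crux; see below) gives $N_{\mathcal{E}}(\vec{p})=1$. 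As $\mathcal{E}$ is binary, this says exactly that $\ell_{\vec{p}}$ meets $E$ in a single point.

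Thus $\pi|_E$ is a bijection onto the general-position points of $\Delta$. To upgrade this to a homeomorphism $E\to\Delta$ one uses that the height function $\theta:\Delta\to\mathbb{R}$ sending $\vec{p}$ to the $\vec{u}$-coordinate of the unique point of $E$ over it is affine on the closure of each projected facet, hence continuous, so that $E$ (a closed union of unit facets) is the graph of $\theta$ in the $\vec{u}$-direction and $\pi|_E$ is a proper continuous bijection, hence a homeomorphism; therefore $\mathcal{E}\in\mathfrak{S}$, the hypothesis $\mathcal{E}\neq0$ being what ensures $E\neq\emptyset$. An equivalent way to carry out these last two steps is to write $\mathcal{E}=\sum_{\vec{x}}c(\vec{x})\mathcal{F}_{\vec{x}}$ for some $c:\mathbb{Z}^d\to\mathbb{Z}$ (possible because $\mathcal{S}$ and $\mathcal{E}-\mathcal{S}$ are both limits of $\mathbb{Z}$-combinations of flips): binarity of $\mathcal{E}$ forces $c(\vec{x})-c(\vec{x}+\vec{e}_i)\in\{0,1\}$ for all $\vec{x}$ and $i$, so $c$ is non-increasing with unit steps along each $\vec{e}_i$; the limiting argument shows $c$ takes only two consecutive values; and then $\{\vec{x}\mid c(\vec{x})\text{ maximal}\}$ is a proper non-empty order ideal of $(\mathbb{Z}^d,\leq)$ whose boundary is precisely $E$, manifestly a stepped surface.

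The main obstacle I expect is the passage to the limit, and specifically the vanishing of the two tail terms $\tau_a$ and $\tau_b$: one must rule out an ``escape of covering number'' to $\pm\infty$ along the fibre $\ell_{\vec{p}}$ (equivalently, in the second formulation, one must control the range of $c$ near $\pm\infty$). This is where the hypotheses are used together: binarity of $\mathcal{E}$ makes the truncated covering numbers over $\mathcal{E}$ non-negative and monotone in the truncation, while $\mathcal{E}\neq0$ and the convergence of the flip sequence ensure that only finitely many flips act on any bounded region, so that the tail contributions first stabilise and then, as the truncation opens up, must reach zero.
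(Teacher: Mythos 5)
Your strategy is genuinely different from the paper's: the paper reorders the flips, first by radius and then by the coordinatewise partial order on $\mathbb{Z}^d$, and argues that each reordered flip merely exchanges the lower and the upper facets of a unit hypercube, so that the local homeomorphism property under $\pi$ is preserved ball by ball; you instead run a degree count along fibres of $\pi$. Your two preliminary observations are correct: $N_{\mathcal{F}_{\vec{x}}}(\vec{p})=0$ for a generic fibre, and $d_{\mathfrak{F}}$-convergence identifies $N_{\mathcal{E}}(\vec{p};a,b)$ with $N_{\mathcal{E}_n}(\vec{p};a,b)$ for large $n$. But the step you yourself flag as the crux --- the vanishing of the tails $\tau_a,\tau_b$ --- is a genuine gap, and it cannot be closed from Def.~\ref{def:pseudo_flip_acc} alone. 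The tails are limits of weight sums of the \emph{intermediate} functions $\mathcal{E}_n$, which are not binary, so they are integers of arbitrary sign; binarity of the limit only gives $\tau_a+\tau_b\leq 1$, not $\tau_a,\tau_b\geq 0$, and monotonicity of the truncations of $\mathcal{E}$ says nothing about them. Covering number really can escape to infinity along a fibre: starting from $\mathcal{S}=\mathcal{P}_{(1,\ldots,1),0}$, perform the flips $+\mathcal{F}_{\vec{y}}$ at \emph{all} lattice points $\vec{y}$ of the half-space $\{\langle\vec{y}|(1,\ldots,1)\rangle<\rho'\}$, enumerated by increasing norm. Each ball is eventually untouched, so the partial sums converge in $d_{\mathfrak{F}}$, and the limit is $\mathcal{P}_{(1,\ldots,1),0}+\mathcal{P}_{(1,\ldots,1),\rho'}$, which for $\rho'$ large is a non-zero binary stepped function whose generic fibres meet its geometric interpretation twice; in your notation $\tau_a=-1$ for all $a$ below both planes (the $-1$ is the lower crossing of the huge filled region $H\cap B(\vec{0},R_n)$, which recedes to $-\infty$). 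So the missing step is not a routine verification: it is false at this level of generality.

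The conclusion is that your argument, like any argument, needs some additional control on the flip sequence. The control that suffices for your method is exactly the one the paper extracts elsewhere (Lem.~\ref{lem:zarbi}, and implicitly in the application to Th.~\ref{th:image_stepped_surface}): if every flip is located \emph{between} the source surface and the target, then the flip locations meeting any fixed tube $\pi^{-1}(B(\vec{p},2))$ have bounded $\vec{u}$-coordinate, the tails vanish for $a,b$ outside that range, and your computation cleanly yields $N_{\mathcal{E}}(\vec{p})=1$; your final upgrade from ``bijective over generic fibres'' to ``homeomorphism'' is then the easy part and is essentially right. You should also be aware that the same half-space example puts pressure on the published proof of Prop.~\ref{prop:binary_pseudo_flip_acc} itself: in the reordered sequence a positive flip at $\vec{x}_f-\vec{e}_i$ may lie outside $B(\vec{0},n)$ and hence be performed \emph{after} the flip at $\vec{x}_f$, so the claim that a weight which has left $\{0,1\}$ can only drift further away is not automatic either. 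In short: your approach is sound in spirit and would work under the hypotheses actually used downstream, but the gap you identified is real and is located exactly where the statement, as formulated, is at its weakest.
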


\begin{proof}
Let $\mathcal{B}$ be a non-zero binary function pseudo-flip-accessible from a stepped surface $\mathcal{S}$.
Thus, there is a sequence $(\vec{x}_n)_{n\geq 0}$ with values in $\mathbb{Z}^d$ and a sequence $(\lambda_n)_{n\geq 0}$ with values in $\mathbb{Z}^*$ such that the sequence of stepped functions $(\mathcal{E}_n)_{n\geq 0}$ defined as follows tends towards $\mathcal{B}$:
$$
\mathcal{E}_n:=\mathcal{S}+\sum_{0\leq k\leq n}\lambda_k\mathcal{F}_{\vec{x}_k}.
$$
W.l.o.g., the $\vec{x}_n$'s are pairwise different.
We then introduce a second sequence of stepped functions:
$$
\mathcal{E}'_n:=\mathcal{S}+\sum_{||\vec{x}_k||\leq n}\lambda_k\mathcal{F}_{\vec{x}_k}.
$$
This sequence also tends towards $\mathcal{B}$ when $n$ goes to infinity.
We now fix $n \in \mathbb{N}.$ Let us show that $\pi$ is a homeomorphism from the restriction of $\mathcal{E}'_n$ to $B(\vec{0},n)$ onto $\pi(B(\vec{0},n))$, that is, $\mathcal{E}'_n$ is ``locally'' a stepped surface. This property 
will be referred to as the ``wanted property'' in the remaining of the proof.
Up to a relabeling, we can assume that the finite sequence $\{\vec{x}_k~|~||\vec{x}_k||\leq n\}$ satisfies:
\begin{equation*}\label{eq:order}
\vec{x}_i\leq \vec{x}_j ~\Rightarrow~ \left\{\begin{array}{ll}
i<j & \textrm{if }\lambda_i >0, \ \lambda_j >0,\\
i>j & \textrm{if }\lambda_i <0, \ \lambda_j <0.
\end{array}\right.
\end{equation*}
Indeed, it suffices to sort $\{\vec{x}_k~|~||\vec{x}_k||\leq n \textrm{ and }\lambda_k>0\}$ and $\{\vec{x}_n~|~||\vec{x}_k||\leq n \textrm{ and }\lambda_k<0\}$ w.r.t. the partial orders $\leq$ and $\geq$, respectively.
Let us show the wanted property by induction on the cardinality $c$  of the set $F_n:=\{\vec{x}_k~|~||\vec{x}_k||\leq n\}$. Note that  $c$ depends on $n$.
This holds for $F_n=\emptyset$, since $\pi$ is a homeomorphism from the whole geometric realization of $\mathcal{S}$ onto $\pi(\mathbb{R}^d)$. 
The next step of the induction consists in adding a weighted pseudo-flip $\lambda_c\mathcal{F}_{\vec{x}_c}$ to a stepped function which satisfies the wanted property.
One checks that the way pseudo-flips have been sorted ensures that, if adding $\lambda_f\mathcal{F}_{\vec{x}_f}$ yields weigth $w<0$ (resp. $w>1$) to some face, then the weight can only be decreased (resp. increased) by further pseudo-flips, \emph{i.e.}, by adding $\lambda_k\mathcal{F}_{\vec{x}_k}$ for $||\vec{x}_k||>n$.
In particular, this contradicts the fact that, after having performed all the pseudo-flips, we get the binary stepped function $\mathcal{B}$.
Hence, adding $\lambda_f\mathcal{F}_{\vec{x}_f}$ simply exchanges the weights ($0$ or $1$) of the faces $\{(\vec{x}_f,i^*),i=1,\ldots,d\}$ (the lower facets of a unit hypercube) and $\{(\vec{x}_f+\vec{e}_i,i^*),i=1,\ldots,d\}$ (the upper facets of the same unit hypercube).
 Note that for any hypercube, the respective images under $\pi$ of the union of its upper facets and of the union of its lower facets. This ensures that, by adding $\lambda_c\mathcal{F}_{\vec{x}_c}$, one gets a stepped function whose restriction to $B(\vec{0},n)$ is still homeomorphic under $\pi$ to $\pi(B(\vec{0},n))$.
This shows that the wanted property holds whatever is the cardinality of $F_n$.\\
Thus, $\mathcal{E}'_n$ is ``locally'' a stepped surface (\emph{i.e.}, on $B(\vec{0},n)$).
In order to finish the proof, let us fix now $r>0$.
Since $\mathcal{B}$ is non-zero, there is $(\vec{x},i)$ such that $\mathcal{B}(\vec{x},i)=1$.
Since the sequence $(\mathcal{E}'_n)_n$ tends towards $\mathcal{B}$, there is $N_r$ such that, for $n\geq N_r$, $\mathcal{E}'_n$ and $\mathcal{B}$ agree on $B(\vec{x},r)$.
Then, for $n\geq r$, $\pi$ is a homeomorphism from the restriction of $\mathcal{E}'_n$ to $B(\vec{x},r)$ -- hence from the restriction of $\mathcal{B}$ to $B(\vec{x},r)$ -- onto $\pi(B(\vec{0},r))$.
When $r$ goes to infinity, this shows that $\pi$ is a homeomorphism from $\mathcal{B}$ onto $\pi(\mathbb{R}^d)$, that is, $\mathcal{B}$ is a stepped surface.
\end{proof}

\noindent Fig.\ \ref{fig:pseudo_flip_acc_reord} illustrates the relabeling (or sorting) of flips in the proof of Prop.\ \ref{prop:binary_pseudo_flip_acc}.

\begin{figure}[hbtp]
\centering
\includegraphics[width=0.6\textwidth]{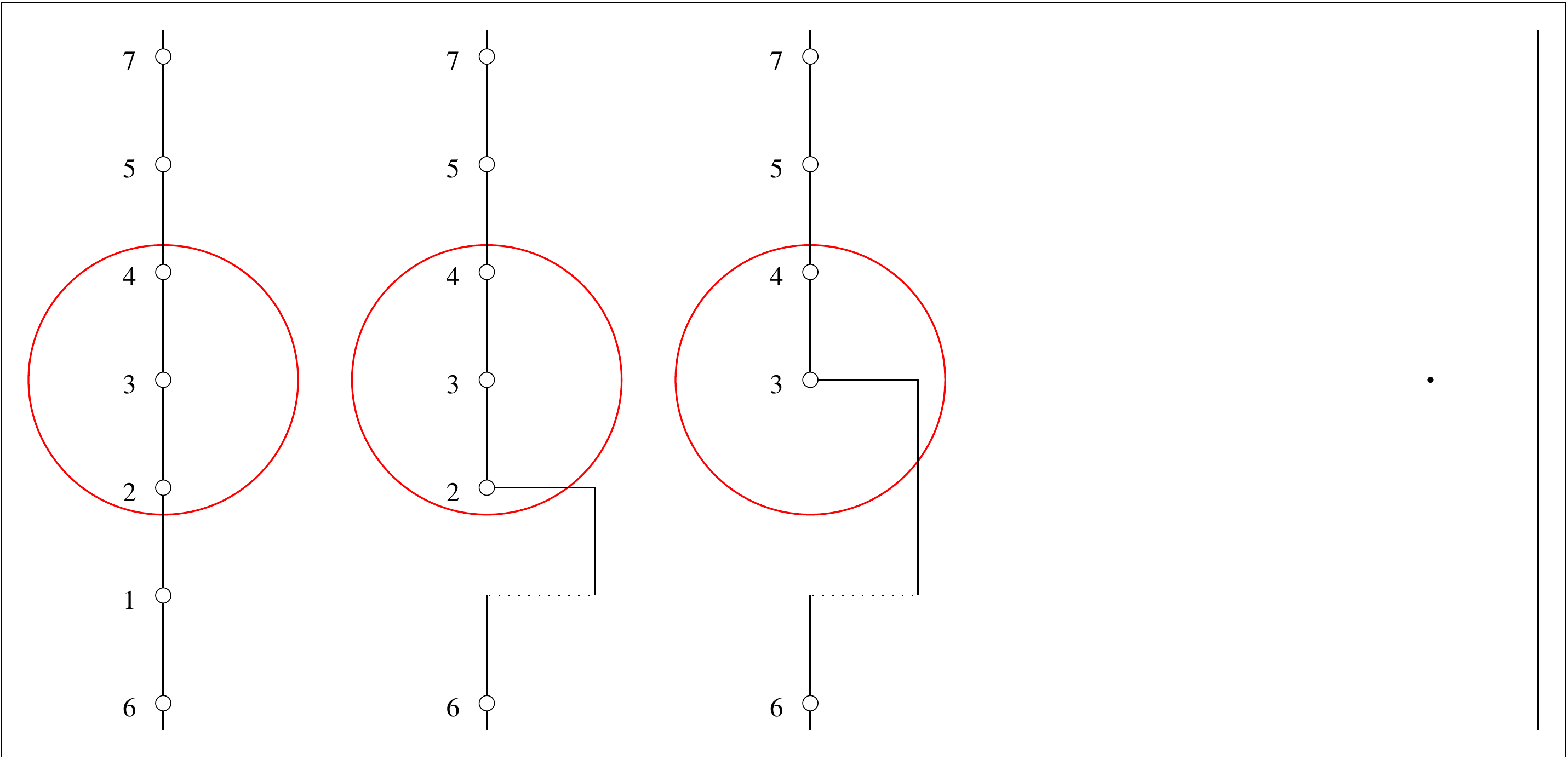}
\caption{
We take the same stepped surfaces $\mathcal{S}$ and $\mathcal{S}'$ as on Fig. \ref{fig:pseudo_flip_acc}.
We consider the open ball $B(\vec{0},1.25)$ (depicted as a circle).
Following the proof of Prop. \ref{prop:binary_pseudo_flip_acc}, we can sort the flips (as depicted on the figure) so that the projection $\pi$ is a homeomorphism from the restrictions to $B(\vec{0},1.25)$ of intermediate stepped functions onto their images under $\pi$.
}
\label{fig:pseudo_flip_acc_reord}
\end{figure}

\section{Action of dual maps}
\label{sec:dual_maps}

In this section, we first recall the notion of \emph{dual map} from \cite{AI,ei} which yields maps defined on stepped functions.
We then characterize their action respectively on stepped planes and surfaces.

\subsection{Dual maps}
\label{sec:def_dual_maps}

We first need to briefly recall some basic notions and notation of words combinatorics.
The free group over the alphabet $\{1,\ldots,d\}$ is denoted by $F_d$.
The identity element is the empty word, usually denoted by $\varepsilon$.
All elements of $F_d$ will be given in their reduced form.
For $w\in F_d$ and a letter $i\in\{1,\ldots,d\}$, one denotes by $|w|_i$ the sum of powers of the occurences of $i$ in $w$.
Then, the \emph{Abelianization map} $\vec{f}$ maps any $w\in F_d$ to the vector of $\mathbb{Z}^d$ whose $i$-th entry is $|w|_i$.
For example:
$$
\vec{f}(12^{-1}1^221^{-1})=(1+2-1,-1+1)=(2,0).
$$
A \emph{morphism} of $F_d$ is a map $\sigma:F_d\to F_d$ such that, for any $u$ and $v$ in $F_d$, $\sigma(uv)=\sigma(u)\sigma(v)$.
It is said to be non-erasing if no letter is mapped onto the empty word.
The \emph{incidence matrix} of a morphism $\sigma$, denoted $M_\sigma$, is the $d\times d$ integer matrix whose $i$-th column is $\vec{f}(\sigma(i))$.
A morphism $\sigma$ is said to be \emph{unimodular} if $\det M_\sigma=\pm 1$, that is, if $M_\sigma$ belongs to the linear group $GL(d,\mathbb{Z})$.
It is said to be \emph{non-negative} if it is   non-negative with respect to the canonical basis (let us stress the fact that the  notion 
 of non-negativity for a  morphism heavily depends on the  choice of    a basis for the free group).
Non-erasing and non-negative morphisms (in the canonical basis) are usually called \emph{substitutions}: they map words with positive power letters onto non-empty words with positive power letters, and thus have non-negative incidence matrix (for more on substitutions, see, \emph{e.g.}, \cite{Pyt}).
For example, $\sigma:1\mapsto 12,~2\mapsto 13,~3\mapsto 1$ is a unimodular substitution with incidence matrix:
$$
M_\sigma=\left(\begin{array}{ccc}
1&1&1\\
1&0&0\\
0&1&0
\end{array}\right).
$$

\noindent We are now in a position to recall the notion of \emph{dual map}:

\begin{definition}\label{def:dual_map}
The \emph{dual map} associated with a unimodular morphism $\sigma$ is the endomorphism of $\mathfrak{F}$ denoted by $E_1^*(\sigma)$ and defined, for any $\mathcal{E}\in\mathfrak{F}$, by:
$$
E_1^*(\sigma)(\mathcal{E}):(\vec{x},i)\mapsto\hspace{-10pt}\sum_{\stackrel{1\leq j\leq d}{\sigma(i)=p\cdotp j\cdotp s}} \hspace{-5pt}\mathcal{E}(M_\sigma\vec{x}+\vec{f}(p),j)-\hspace{-10pt}\sum_{\stackrel{1\leq j\leq d}{\sigma(i)=p\cdotp j^{\textrm{-}1}\cdotp s}}\hspace{-8pt}\mathcal{E}(M_\sigma\vec{x}+\vec{f}(p)-\vec{e}_i,j).
$$
\end{definition}

\noindent Note that $E_1^*(\sigma)(\mathcal{E})$ takes finite values, as finite sums of values of $\mathcal{E}$.
Thus, $E_1^*(\sigma)$ is well-defined.

\begin{remark}
Dual maps have been introduced in \cite{AI} under the name \emph{generalized substitutions} in the particular case where $\sigma$ is a unimodular substitution.
They have been extended to the case of unimodular morphisms of the free group in \cite{ei}.
Extensions to more general spaces than $\mathfrak{F}$ (based on higher codimension facets of cubes) have also been provided in \cite{AIS}.
In particular, we keep here the notation $E_1^*(\sigma)$ introduced in the papers \cite{AI,AIS}, although it can seem at first view obfuscating.
Indeed, the subscript of $E_1^*(\sigma)$ stands for the codimension of the geometric interpretation of faces (in the present paper, they are codimension one facets of hypercubes), while the superscript of $E_1^*(\sigma)$ indicates that it is the dual map of some map $E_1(\sigma)$, hence also the term \emph{dual} map.
\end{remark}

Since elements of $\mathfrak{F}$ are weighted sums of faces and dual maps are linear, the image of any element of $\mathfrak{F}$ can be deduced from the images of faces.
It is thus worth computing the action of dual maps on faces:

\begin{proposition}\label{prop:image_face}
The image under the dual map $E_1^*(\sigma)$ of the face $(\vec{x},i^*)$ is:
$$
E_1^*(\sigma)(\vec{x},i^*)=\hspace{-10pt}\sum_{\stackrel{1\leq j\leq d}{\sigma(j)=p\cdotp i\cdotp s}} \hspace{-5pt}(M_\sigma^{\textrm{-}1}(\vec{x}-\vec{f}(p)),j^*)-\hspace{-10pt}\sum_{\stackrel{1\leq j\leq d}{\sigma(j)=p\cdotp i^{\textrm{-}1}\cdotp s}}\hspace{-8pt}(M_\sigma^{\textrm{-}1}(\vec{x}-\vec{f}(p)+\vec{e}_j),j^*).
$$
\end{proposition}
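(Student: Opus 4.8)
The proof is a direct computation: one substitutes $\mathcal{E}=(\vec{x},i^*)$ into Def.\ \ref{def:dual_map} and re-expresses the outcome as a weighted sum of faces. First I would evaluate $E_1^*(\sigma)((\vec{x},i^*))$ at an arbitrary pair $(\vec{z},j)\in\mathbb{Z}^d\times\{1,\ldots,d\}$. Applying Def.\ \ref{def:dual_map} (with its argument $(\vec{x},i)$ renamed to $(\vec{z},j)$, so that $\sigma(j)$ is the word being split and $\vec{e}_j$ is the offset appearing in the negative part), one gets
$$
E_1^*(\sigma)((\vec{x},i^*))(\vec{z},j)=\sum_{\stackrel{1\leq k\leq d}{\sigma(j)=p\cdot k\cdot s}}(\vec{x},i^*)(M_\sigma\vec{z}+\vec{f}(p),k)-\sum_{\stackrel{1\leq k\leq d}{\sigma(j)=p\cdot k^{-1}\cdot s}}(\vec{x},i^*)(M_\sigma\vec{z}+\vec{f}(p)-\vec{e}_j,k).
$$
By Def.\ \ref{def:face}, the term $(\vec{x},i^*)(\vec{w},k)$ equals $1$ if $\vec{w}=\vec{x}$ and $k=i$, and $0$ otherwise. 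Hence only the decompositions in which the distinguished letter is $i$ contribute: in the first sum, the decompositions $\sigma(j)=p\cdot i\cdot s$ with $M_\sigma\vec{z}+\vec{f}(p)=\vec{x}$; in the second sum, the decompositions $\sigma(j)=p\cdot i^{-1}\cdot s$ with $M_\sigma\vec{z}+\vec{f}(p)-\vec{e}_j=\vec{x}$.

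Next I would invert $M_\sigma$. Since $\sigma$ is unimodular, $M_\sigma\in GL(d,\mathbb{Z})$, so $M_\sigma^{-1}$ has integer entries, and the equations $M_\sigma\vec{z}+\vec{f}(p)=\vec{x}$ and $M_\sigma\vec{z}+\vec{f}(p)-\vec{e}_j=\vec{x}$ admit the unique solutions $\vec{z}=M_\sigma^{-1}(\vec{x}-\vec{f}(p))$ and $\vec{z}=M_\sigma^{-1}(\vec{x}-\vec{f}(p)+\vec{e}_j)$ in $\mathbb{Z}^d$. Consequently the coefficient of $(\vec{z},j^*)$ in $E_1^*(\sigma)((\vec{x},i^*))$ is exactly the number of decompositions $\sigma(j)=p\cdot i\cdot s$ with $\vec{z}=M_\sigma^{-1}(\vec{x}-\vec{f}(p))$ minus the number of decompositions $\sigma(j)=p\cdot i^{-1}\cdot s$ with $\vec{z}=M_\sigma^{-1}(\vec{x}-\vec{f}(p)+\vec{e}_j)$. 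Reading these coefficients off as a weighted sum of faces gives exactly the claimed identity; it then only remains to note that the right-hand side is a finite sum, since $j$ ranges over a finite set and, for each $j$, the reduced word $\sigma(j)$ admits only finitely many decompositions of either type.

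The computation has no genuine obstacle; the only points requiring care are bookkeeping ones. The delicate step is the relabelling of indices: the letter $i$ of the statement is the one distinguished \emph{inside} $\sigma(j)$, whereas the type of the output faces is the expansion letter $j$, so the roles of the two summation indices of Def.\ \ref{def:dual_map} are swapped relative to the statement; and the offset in the negative-occurrence terms is $\vec{e}_j$ (the type of the face at which one evaluates), not $\vec{e}_i$. One should also observe that, because $\sigma(j)$ is kept in reduced form, splitting it at a given position yields an unambiguous decomposition $p\cdot k\cdot s$ with no cancellation, so that the positive- and negative-occurrence index sets are well defined and disjoint.
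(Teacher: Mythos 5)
Your proposal is correct and follows essentially the same route as the paper: a direct computation unfolding Def.\ \ref{def:dual_map} on a single face, evaluating at an arbitrary $(\vec{z},j)$, using that $(\vec{x},i^*)$ is the indicator of $(\vec{x},i)$, and inverting $M_\sigma$ to read the result off as the stated weighted sum of faces. Your phrasing of the coefficient as a \emph{count} of decompositions (rather than a $0/1$ dichotomy) is in fact slightly more careful than the paper's wording, but the argument is the same.
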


\begin{proof}
Let $(\vec{y},k)\in\mathbb{Z}^d\times\{1,\ldots,d\}$.
One the one hand, Def.\ \ref{def:dual_map} yields:
$$
E_1^*(\sigma)(\vec{x},i^*)(\vec{y},k)=\hspace{-12pt}\sum_{\stackrel{1\leq j\leq d}{\sigma(k)=p\cdotp j\cdotp s}} \hspace{-5pt}(\vec{x},i^*)(M_\sigma\vec{y}+\vec{f}(p),j)-\hspace{-12pt}\sum_{\stackrel{1\leq j\leq d}{\sigma(k)=p\cdotp j^{\textrm{-}1}\cdotp s}}\hspace{-8pt}(\vec{x},i^*)(M_\sigma\vec{y}+\vec{f}(p)-\vec{e}_k,j).
$$
By definition of a face, $(\vec{x},i^*)(M_\sigma\vec{y}+\vec{f}(p),j)$ is equal either to one if $j=i$ and $M_\sigma\vec{y}+\vec{f}(p)=\vec{x}$, or to zero.
The first sum is thus equal either to one if there is $k$ such that $\sigma(k)=p\cdotp i\cdotp s$ and $M_\sigma\vec{y}+\vec{f}(p)=\vec{x}$, or to zero.\\
One the other hand, consider the formula given in Prop. \ref{prop:image_face}.
Here, $(M_\sigma^{\textrm{-}1}(\vec{x}-\vec{f}(p)),j^*)(\vec{y},k)$ is equal either to one if $j=k$ and $M_\sigma^{\textrm{-}1}(\vec{x}-\vec{f}(p))=\vec{y}$, \emph{i.e.}, $M_\sigma\vec{y}+\vec{f}(p)=\vec{x}$, or to zero.
The first sum of this formula is thus equal either to one if there is $j$ such that $\sigma(j)=p\cdotp i\cdotp s$ and $M_\sigma\vec{y}+\vec{f}(p)=\vec{x}$, or to zero.
This shows that the first sums of both formulaes are identical.
We similarly prove that the second sums of both formulaes -- hence the whole formulaes -- are identical.
This ends the proof.
\end{proof}

It follows from Def. \ref{def:dual_map} that $E_1^*(\sigma)$ has finite range.
Indeed, $E_1^*(\sigma)(\mathcal{E})$ takes on $(\vec{x},i)$ a value which depends only on the values of $\mathcal{E}(\vec{y},j)$ for $||\vec{y}||\leq ||M_\sigma||\times ||\vec{x}||+A_\sigma$, where $A_\sigma$ is a constant depending only on $\sigma$.
Thus, for any $r>0$, if two stepped functions agree on $B(\vec{0},||M_\sigma||r+A_\sigma)$, then their images under $E_1^*(\sigma)$ agree on $B(\vec{0},r)$.
In particular, this shows that dual maps are uniformly continuous on $\mathfrak{F}$.\\

\noindent Let us also mention that, for two unimodular morphisms $\sigma$ and $\sigma'$, one has:
\begin{equation}\label{eq:compo}
E_1^*(\sigma\circ\sigma')=E_1^*(\sigma')\circ E_1^*(\sigma).
\end{equation}
This can be checked by a direct computation from the formula in Def.\ \ref{def:dual_map}, although this also easily follows from the way dual maps are defined in \cite{AI,ei}.

\begin{example}\label{ex:brun_substitutions}
For $a\in\mathbb{N}$ and $1\leq i,j\leq d$, one defines the \emph{Brun substitution}\footnote{The reason for such a name will become clearer in Sec.\ \ref{sec:brun_expansions}.} $\beta_{a,i,j}$ by:
$$
\beta_{a,i,j}(i)=i^aj,\quad\beta_{a,i,j}(j)=i,\quad\forall k\notin\{i,j\},~\beta_{a,i,j}(k)=k.
$$
Its incidence matrix is
$$
\left(\begin{array}{ccccc}
I_{i-1} & & & &\\
 & a & & 1 &\\
 & & I_{j-i-1} & &\\
 & 1 & & 0 &\\
 & & & & I_{d-j}
\end{array}\right).
$$
One checks that $\beta_{a,i,j}$ is unimodular.
It is also invertible, with inverse:
$$
\beta_{a,i,j}^{-1}(i)=j,\quad\beta_{a,i,j}^{-1}(j)=j^{-a}i,\quad\forall k\notin\{i,j\},~\beta_{a,i,j}(k)^{-1}=k.
$$
Let us compute the images under $E_1^*(\beta_{a,i,j})$ and $E_1^*(\beta_{a,i,j}^{-1})$ of the faces located at $\vec{0}$:
$$
E_1^*(\beta_{a,i,j})~:~\left\{\begin{array}{lll}
(\vec{0},i^*)&\mapsto&(\vec{0},j^*)+\sum_{k=0}^{a-1}(-k\vec{e}_j,i^*),\\
(\vec{0},j^*)&\mapsto&(-a\vec{e}_j,i^*),\\
(\vec{0},k^*)&\mapsto&(\vec{0},k^*).
\end{array}\right.
$$
$$
E_1^*(\beta_{a,i,j}^{-1})~:~\left\{\begin{array}{lll}
(\vec{0},i^*) & \mapsto & (a\vec{e}_i,j^*),\\
(\vec{0},j^*) & \mapsto & (\vec{0},i^*)-\sum_{k=0}^{a-1}(k\vec{e}_i,j^*),\\
(\vec{0},k^*) & \mapsto & (\vec{0},k^*).
\end{array}\right.
$$
Fig.\ \ref{fig:brun_substitutions} illustrates the action of $E_1^*(\beta_{a,i,j})$ for $a=2$, $i=3$, $j=1$.
\end{example}

\begin{figure}[hbtp]
\centering
\includegraphics[width=0.8\textwidth]{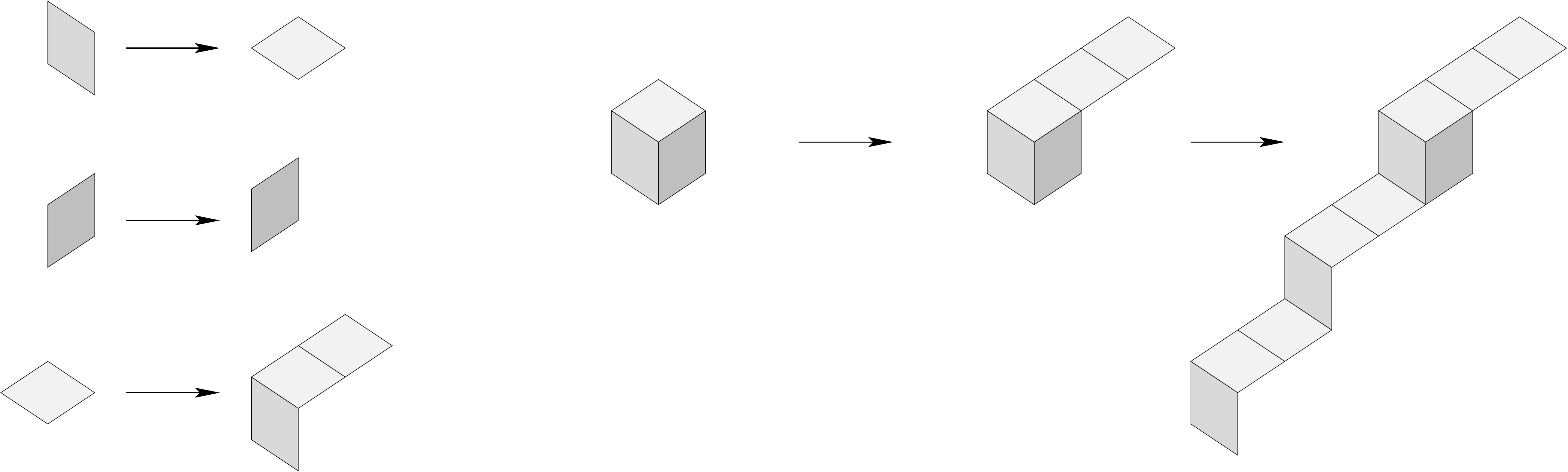}
\caption{
Action of the dual map $E_1^*(\beta_{2,3,1})$ on each of the three types of faces (left).
Two iterations of $E_1^*(\beta_{2,3,1})$ on the upper faces of a unit cube (right).
}
\label{fig:brun_substitutions}
\end{figure}

\subsection{Action on stepped planes}
\label{sec:sp}

In the previous section, dual maps have been defined over $\mathfrak{F}$.
Here, we are interested in their action over $\mathfrak{P}$, which is characterized by the main result of this section:

\begin{theorem}\label{th:image_stepped_plane}
Let $\sigma$ be a unimodular morphism.
Let $\vec{\alpha}\in\mathbb{R}_+^d\backslash\{\vec{0}\}$ and $\rho\in\mathbb{R}$.
Then, the action of $E_1^*(\sigma)$ on the stepped plane $\mathcal{P}_{\vec{\alpha},\rho}$ satisfies the equivalence:
$$
E_1^*(\sigma)(\mathcal{P}_{\vec{\alpha},\rho})\in\mathfrak{B}
~\Leftrightarrow~
E_1^*(\sigma)(\mathcal{P}_{\vec{\alpha},\rho})=\mathcal{P}_{M_\sigma^\top\vec{\alpha},\rho}
~\Leftrightarrow~
M_\sigma^\top\vec{\alpha}\in\mathbb{R}_+^d.
$$
\end{theorem}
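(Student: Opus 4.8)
The plan is to prove a single master identity giving the weight of the image at an arbitrary face, from which all three stated properties follow by inspection. Write $\vec{\beta}:=M_\sigma^\top\vec{\alpha}$; since $M_\sigma^\top\in GL(d,\mathbb{Z})$ and $\vec{\alpha}\neq\vec{0}$, we have $\vec{\beta}\neq\vec{0}$, so that $\mathcal{P}_{\vec{\beta},\rho}$ is a genuine stepped plane as soon as $\vec{\beta}\in\mathbb{R}_+^d$. I would introduce the indicator of the lower half-space $\chi\colon\mathbb{Z}^d\to\{0,1\}$, defined by $\chi(\vec{z})=1\Leftrightarrow\langle\vec{z}|\vec{\alpha}\rangle<\rho$; since $\vec{\alpha}\in\mathbb{R}_+^d$, this gives the ``discrete boundary'' formula $\mathcal{P}_{\vec{\alpha},\rho}(\vec{z},i)=\chi(\vec{z})-\chi(\vec{z}+\vec{e}_i)$ for all $(\vec{z},i)\in\mathbb{Z}^d\times\{1,\ldots,d\}$.

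Next I would fix $(\vec{y},m)$ and expand $E_1^*(\sigma)(\mathcal{P}_{\vec{\alpha},\rho})(\vec{y},m)$ using linearity of $E_1^*(\sigma)$ together with Prop.~\ref{prop:image_face}: the result is a signed sum, over the $n$ symbols of the reduced word $\sigma(m)=c_1\cdots c_n$ (each $c_k=\ell_k^{\varepsilon_k}$ with $\varepsilon_k=\pm1$ contributing with sign $\varepsilon_k$), of values of $\mathcal{P}_{\vec{\alpha},\rho}$. Setting $\vec{z}_k:=M_\sigma\vec{y}+\vec{f}(c_1\cdots c_k)$ for $0\leq k\leq n$, so that $\vec{z}_k-\vec{z}_{k-1}=\varepsilon_k\vec{e}_{\ell_k}$, a short case analysis — treating $\varepsilon_k=+1$ and $\varepsilon_k=-1$ separately, and combining the boundary formula with the offsets prescribed by Prop.~\ref{prop:image_face} — shows that the $k$-th symbol contributes exactly $\chi(\vec{z}_{k-1})-\chi(\vec{z}_k)$ in both cases. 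The sum therefore telescopes; using $\vec{z}_0=M_\sigma\vec{y}$, $\vec{z}_n=M_\sigma\vec{y}+\vec{f}(\sigma(m))=M_\sigma(\vec{y}+\vec{e}_m)$ and $\langle M_\sigma\vec{z}|\vec{\alpha}\rangle=\langle\vec{z}|\vec{\beta}\rangle$, I would obtain the master identity
$$
E_1^*(\sigma)(\mathcal{P}_{\vec{\alpha},\rho})(\vec{y},m)=\chi(\vec{z}_0)-\chi(\vec{z}_n)=\mathbf{1}[\langle\vec{y}|\vec{\beta}\rangle<\rho]-\mathbf{1}[\langle\vec{y}+\vec{e}_m|\vec{\beta}\rangle<\rho],
$$
which holds with no hypothesis on the sign of $\vec{\beta}$.

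It then suffices to close the cycle of implications. The identity $E_1^*(\sigma)(\mathcal{P}_{\vec{\alpha},\rho})=\mathcal{P}_{M_\sigma^\top\vec{\alpha},\rho}$ trivially forces $E_1^*(\sigma)(\mathcal{P}_{\vec{\alpha},\rho})\in\mathfrak{B}$, stepped planes being binary. Assuming $E_1^*(\sigma)(\mathcal{P}_{\vec{\alpha},\rho})\in\mathfrak{B}$, I would argue by contradiction: if $\beta_m<0$ for some $m$, the master identity becomes $E_1^*(\sigma)(\mathcal{P}_{\vec{\alpha},\rho})(\vec{y},m)=-\mathbf{1}[\rho\leq\langle\vec{y}|\vec{\beta}\rangle<\rho-\beta_m]$, which equals $-1$ whenever $\langle\vec{y}|\vec{\beta}\rangle$ lies in the nonempty interval $[\rho,\rho-\beta_m)$ of length $|\beta_m|$; since $\{\langle\vec{y}|\vec{\beta}\rangle:\vec{y}\in\mathbb{Z}^d\}$ is an additive subgroup of $\mathbb{R}$ containing $\beta_m\neq0$, it is either dense or equal to $g\mathbb{Z}$ with $0<g\leq|\beta_m|$, and in either case meets that interval — contradicting binariness; hence $M_\sigma^\top\vec{\alpha}\in\mathbb{R}_+^d$. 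Finally, if $M_\sigma^\top\vec{\alpha}\in\mathbb{R}_+^d$ then $\langle\vec{y}+\vec{e}_m|\vec{\beta}\rangle\geq\langle\vec{y}|\vec{\beta}\rangle$ for every $(\vec{y},m)$, so the master identity gives $E_1^*(\sigma)(\mathcal{P}_{\vec{\alpha},\rho})(\vec{y},m)=\mathbf{1}[\langle\vec{y}|\vec{\beta}\rangle<\rho\leq\langle\vec{y}+\vec{e}_m|\vec{\beta}\rangle]=\mathcal{P}_{M_\sigma^\top\vec{\alpha},\rho}(\vec{y},m)$, closing the cycle. I expect the only genuinely delicate point to be the telescoping verification of the second paragraph, where one must keep precise track of the offsets $\vec{e}_i$ in Prop.~\ref{prop:image_face} and use that $\sigma(m)$ is written in reduced form; the non-binariness step is then a one-line lattice argument, and the positive direction merely re-derives, for general unimodular $\sigma$, the computation already known for substitutions.
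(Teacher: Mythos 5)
Your proof is correct and takes essentially the same route as the paper's: your ``master identity'' $E_1^*(\sigma)(\mathcal{P}_{\vec{\alpha},\rho})(\vec{y},m)=\chi(\vec{z}_0)-\chi(\vec{z}_n)$ is exactly the paper's computation, there phrased as the number of upward minus downward crossings of the level $\rho$ by the broken line $k\mapsto\langle\vec{r}_k|\vec{\alpha}\rangle$, which telescopes to the same endpoint formula, and the final case discussion on the sign of $M_\sigma^\top\vec{\alpha}$ is identical. The one place you go beyond the paper is the subgroup/density argument producing an explicit face of weight $-1$ when some entry of $M_\sigma^\top\vec{\alpha}$ is negative -- a step the paper asserts without justification -- and that argument is sound.
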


\begin{proof}
Let $\mathcal{E}=E_1^*(\sigma)(\mathcal{P}_{\vec{\alpha},\rho})$.
Let us first compute $\mathcal{E}(\vec{y},j)$, with $(\vec{y},j)\in\mathbb{Z}^d\times\{1,\ldots,d\}$ being fixed.
We write $\sigma(j)=u_1^{\varepsilon_1}\ldots u_l^{\varepsilon_l}$, where, for $k=1,\ldots,l$, $u_k\in\{1,\ldots,d\}$ and $\varepsilon_k=\pm 1$.
Each $(\vec{x},i)\in\mathbb{Z}^d\times\{1,\ldots,d\}$ such that $\mathcal{P}_{\vec{\alpha},\rho}(\vec{x},i)=1$ contributes to:\\
$\bullet$ increment $\mathcal{E}(\vec{y},j)$ by $1$ for each $k\in\{1,\ldots,l\}$ such that $u_k^{\varepsilon_k}=i$ and $\vec{y}=M_\sigma^{-1}(\vec{x}-\vec{f}(u_1^{\varepsilon_1}\ldots u_{k-1}^{\varepsilon_{k-1}}))$;\\
$\bullet$ decrement $\mathcal{E}(\vec{y},j)$ by $1$ for each $k\in\{1,\ldots,l\}$ such that $u_k^{\varepsilon_k}=i^{-1}$ and $\vec{y}=M_\sigma^{-1}(\vec{x}-\vec{f}(u_1^{\varepsilon_1}\ldots u_{k-1}^{\varepsilon_{k-1}})+\vec{e}_i)$.\\
By introducing $\vec{r}_k=M_\sigma\vec{y}+\vec{f}(u_1^{\varepsilon_1}\ldots u_k^{\varepsilon_k})$, for $0\leq k\leq l$, the first case becomes:
$$
\vec{x}=\vec{r}_{k-1}
\quad\textrm{and}\quad
\vec{x}+\vec{e}_i=\vec{r}_{k-1}+\vec{e}_i=\vec{r}_{k-1}+\vec{f}(i)=\vec{r}_{k-1}+\vec{f}(u_k^{\varepsilon_k})=\vec{r}_k,
$$
and the second:
$$
\vec{x}=\vec{r}_{k-1}-\vec{e}_i=\vec{r}_{k-1}+\vec{f}(i^{-1})=\vec{r}_{k-1}+\vec{f}(u_k^{\varepsilon_k})=\vec{r}_k
\quad\textrm{and}\quad
\vec{x}+\vec{e}_i=\vec{r}_{k-1}.
$$
Since $\mathcal{P}_{\vec{\alpha},\rho}(\vec{x},i)=1$ if and only if $\langle \vec{x}|\vec{\alpha}\rangle<\rho\leq \langle \vec{x}+\vec{e}_i|\vec{\alpha}\rangle$, one computes:
$$
\mathcal{E}(\vec{y},j)=\#\{k~|~\langle \vec{r}_{k-1}|\vec{\alpha}\rangle<\rho\leq \langle \vec{r}_k|\vec{\alpha}\rangle\}-\#\{k~|~\langle \vec{r}_k|\vec{\alpha}\rangle<\rho\leq \langle \vec{r}_{k-1}|\vec{\alpha}\rangle\}.
$$
One then gets (see Fig.\ \ref{fig:proof_image_stepped_plane}):
$$
\mathcal{E}(\vec{y},j)=\left\{\begin{array}{rl}
1 & \textrm{ if }\langle \vec{r}_0|\vec{\alpha}\rangle<\rho\leq \langle \vec{r}_l|\vec{\alpha}\rangle,\\
-1 & \textrm{ if }\langle \vec{r}_l|\vec{\alpha}\rangle<\rho\leq \langle \vec{r}_0|\vec{\alpha}\rangle,\\
0 & \textrm{ otherwise.}
\end{array}\right.
$$
Last, note that one has:
$$
\langle \vec{r}_0|\vec{\alpha}\rangle=\langle M_\sigma\vec{y}|\vec{\alpha}\rangle=\langle \vec{y}|M_\sigma^\top\vec{\alpha}\rangle,
$$
$$
\langle \vec{r}_l|\vec{\alpha}\rangle=\langle M_\sigma\vec{y}+\vec{f}(\sigma(i))|\vec{\alpha}\rangle=\langle M_\sigma(\vec{y}+\vec{e}_i)|\vec{\alpha}\rangle=\langle \vec{y}+\vec{e}_i|M_\sigma^\top\vec{\alpha}\rangle.
$$
We can now rely on this computation to prove the claimed equivalences.
If $M_\sigma^\top\vec{\alpha}\in\mathbb{R}_+^d$, then $\langle \vec{r}_0|\vec{\alpha}\rangle\leq \langle \vec{r}_l|\vec{\alpha}\rangle$, and above computations yield that $\mathcal{E}(\vec{y},j)=1$ if and only if $\langle \vec{y}|M_\sigma^\top\vec{\alpha}\rangle <\rho\leq \langle \vec{y}+\vec{e}_i|M_\sigma^\top\vec{\alpha}\rangle$, \emph{i.e.}, $\mathcal{E}=\mathcal{P}_{M_\sigma^\top\vec{\alpha},\rho}$
according to Def.\ \ref{def:stepped_plane}.
On the contrary, $M_\sigma^\top\vec{\alpha}\notin\mathbb{R}_+^d$ yields that there is $i$ such that $\langle \vec{r}_0|\vec{\alpha}\rangle> \langle \vec{r}_l|\vec{\alpha}\rangle$, and above computations show that there is a face $(\vec{y},j^*)$ with $\mathcal{E}(\vec{y},j)=-1$, hence $\mathcal{E}\notin\mathfrak{B}$.
Other implications are straightforward.
\end{proof}

\begin{figure}[hbtp]
\centering
\includegraphics[width=\textwidth]{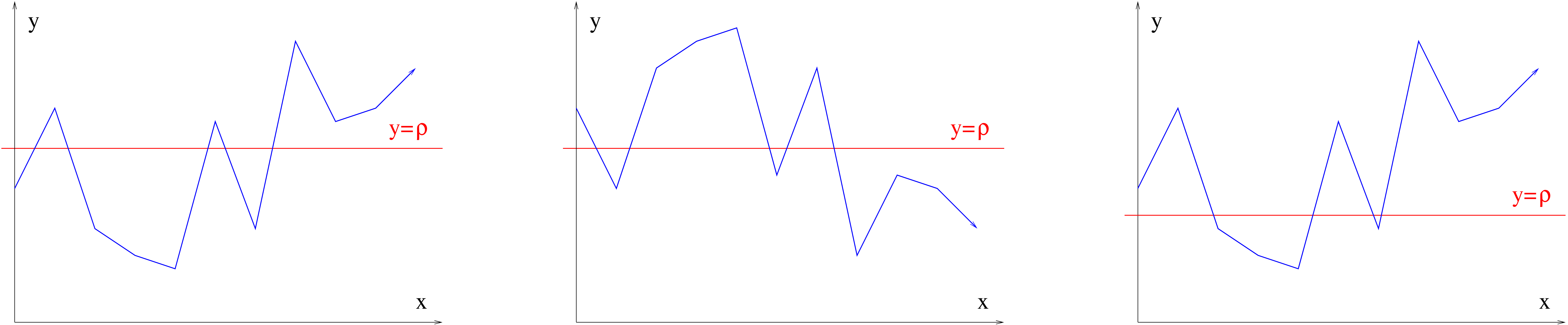}
\caption{Consider the broken line joining points $(x=k,y=\langle\vec{r}_k|\vec{\alpha}\rangle)$, for $k=0,\ldots,l$.
The value $\mathcal{E}(\vec{y},j)$ is equal to the number of times this broken line crosses upwards the line $y=\rho$ minus the number of times it crosses downwards this line.
It is thus equal to $1$ if the first point is below this line and the last one above, $-1$ if the first point is above this line and the last one below, and $0$ otherwise (from left to right).
}
\label{fig:proof_image_stepped_plane}
\end{figure}

Fig.\ \ref{fig:image_stepped_plane} gives an example how a dual map acts on a stepped plane.
Note that the previous theorem in particular holds if $M_\sigma$ is non-negative: in such a case, the image of any stepped plane is a stepped plane.\\

\begin{figure}[hbtp]
\centering
\includegraphics[width=\textwidth]{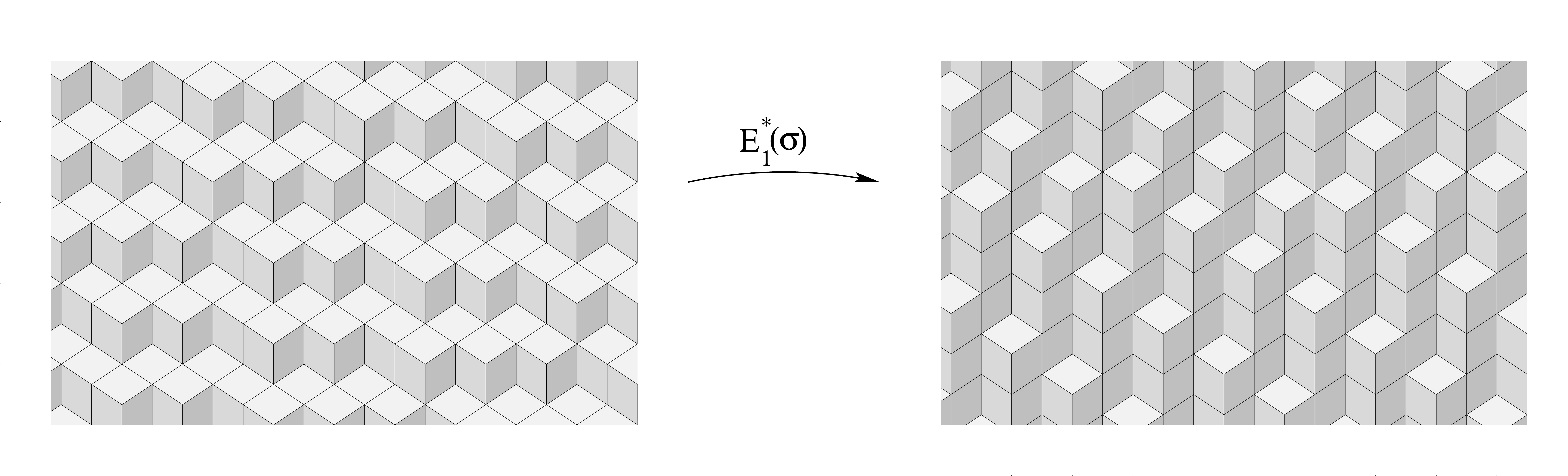}
\caption{
If the image of a stepped plane $\mathcal{P}_{\vec{\alpha},\rho}$ under a dual map $E_1^*(\sigma)$ is a binary stepped function, then it is the stepped plane $\mathcal{P}_{M_\sigma^\top\vec{\alpha},\rho}$.
}
\label{fig:image_stepped_plane}
\end{figure}

\begin{remark}\label{rem:digitization_linear_group1}
For any invertible matrix $M$, one has 
$$
M^{-1}\{\vec{x}\in\mathbb{R}^d~|~\langle\vec{x}|\vec{\alpha}\rangle=\rho\}=\{\vec{x}\in\mathbb{R}^d~|~\langle\vec{x}|M^\top\vec{\alpha}\rangle=\rho\}.
$$
This holds in particular for the incidence matrix $M_{\sigma}$ 
of any unimodular morphism $\sigma$.
In other words, $M_\sigma^{-1}$ maps a real plane of normal vector $\vec{\alpha}$ onto a real plane of normal vector $M_\sigma^\top\vec{\alpha}$.
By comparing with the action of the dual map $E_1^*(\sigma)$ on stepped planes and by noting that stepped planes are digitizations of real planes, we get a more intuitive viewpoint: the action of dual maps over stepped planes can be seen as a digitization of the action of the linear group $GL(d,{\mathbb Z})$ over real planes.
\end{remark}

\begin{remark}\label{rem:commutative_action1}
According to Th.\ \ref{th:image_stepped_plane}, the image of a stepped plane under a dual map $E_1^*(\sigma)$ depends only on the incidence matrix of $\sigma$.
This enables to make choice by picking a morphism whose incidence matrix is given (we will rely on this in Sec. \ref{sec:brun_expansions} to define suitable \emph{Brun substitutions}).
This is however generally false for a stepped function, for example, a single face.
Nevertheless, this does not contradict the fact that the action of $\sigma$ on the free group is non-commutative and not characterized by $M_\sigma$.
\end{remark}

\noindent To conclude this section, let us stress two weakened statements of Th.\ \ref{th:image_stepped_plane}:

\begin{corollary}\label{cor:image_stepped_plane}
Let $\sigma$ be a unimodular morphism.
If the image of a stepped plane under $E_1^*(\sigma)$ is a binary stepped function, then it is a stepped plane:
$$
E_1^*(\sigma)(\mathfrak{P})\cap\mathfrak{B}\subset\mathfrak{P}.
$$
\end{corollary}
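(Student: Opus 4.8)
The plan is to derive this directly from Theorem~\ref{th:image_stepped_plane}, of which the corollary is essentially a weakening. First I would take an arbitrary stepped plane $\mathcal{P}\in\mathfrak{P}$ with $E_1^*(\sigma)(\mathcal{P})\in\mathfrak{B}$, and use Definition~\ref{def:stepped_plane} to write $\mathcal{P}=\mathcal{P}_{\vec{\alpha},\rho}$ for some $\vec{\alpha}\in\mathbb{R}_+^d\backslash\{\vec{0}\}$ and $\rho\in\mathbb{R}$. The first equivalence of Theorem~\ref{th:image_stepped_plane} then forces $E_1^*(\sigma)(\mathcal{P}_{\vec{\alpha},\rho})=\mathcal{P}_{M_\sigma^\top\vec{\alpha},\rho}$, while the second gives $M_\sigma^\top\vec{\alpha}\in\mathbb{R}_+^d$.

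The only point that requires a small additional remark is that $\mathcal{P}_{M_\sigma^\top\vec{\alpha},\rho}$ is a genuine stepped plane, i.e.\ that its normal vector lies in $\mathbb{R}_+^d\backslash\{\vec{0}\}$ rather than merely in $\mathbb{R}_+^d$. This is where the unimodularity of $\sigma$ is used: since $M_\sigma\in GL(d,\mathbb{Z})$, the transpose $M_\sigma^\top$ is an invertible linear map of $\mathbb{R}^d$, so $\vec{\alpha}\neq\vec{0}$ implies $M_\sigma^\top\vec{\alpha}\neq\vec{0}$. Combined with $M_\sigma^\top\vec{\alpha}\in\mathbb{R}_+^d$, this yields $M_\sigma^\top\vec{\alpha}\in\mathbb{R}_+^d\backslash\{\vec{0}\}$, hence $\mathcal{P}_{M_\sigma^\top\vec{\alpha},\rho}\in\mathfrak{P}$ and therefore $E_1^*(\sigma)(\mathcal{P})\in\mathfrak{P}$. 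As $\mathcal{P}$ was arbitrary, the inclusion $E_1^*(\sigma)(\mathfrak{P})\cap\mathfrak{B}\subset\mathfrak{P}$ follows.

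There is no real obstacle here, since all the work is already done in the proof of Theorem~\ref{th:image_stepped_plane}; the whole argument could even be compressed into a single sentence ("apply the chain of equivalences of Theorem~\ref{th:image_stepped_plane} and note that $M_\sigma^\top$ is a bijection of $\mathbb{R}^d$"). The one thing to be careful not to overlook is precisely the degenerate case $M_\sigma^\top\vec{\alpha}=\vec{0}$, which is excluded by the invertibility of the incidence matrix of a unimodular morphism.
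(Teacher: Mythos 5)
Your argument is correct and is exactly the route the paper takes: the corollary is presented there as an immediate weakening of Theorem~\ref{th:image_stepped_plane}, with no separate proof given. Your extra remark that $M_\sigma^\top\vec{\alpha}\neq\vec{0}$ by invertibility of $M_\sigma^\top$ is a legitimate (if minor) point that the paper leaves implicit in writing $\mathcal{P}_{M_\sigma^\top\vec{\alpha},\rho}$.
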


\begin{corollary}\label{cor:positive_image_stepped_plane}
Let $\sigma$ be a unimodular morphism.
If the incidence matrix of $\sigma$ is non-negative, then $E_1^*(\sigma)$ maps any stepped plane onto a stepped plane:
$$
M_\sigma\geq 0 ~\Rightarrow~ E_1^*(\sigma)(\mathfrak{P})\subset\mathfrak{P}.
$$
\end{corollary}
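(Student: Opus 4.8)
The plan is to deduce this immediately from Theorem~\ref{th:image_stepped_plane}, which already carries all the combinatorial weight. The only things left to check are that the hypothesis $M_\sigma\geq 0$ forces the condition $M_\sigma^\top\vec{\alpha}\in\mathbb{R}_+^d$ that appears in the equivalence of that theorem, and that the resulting vector is a genuine normal vector of a stepped plane, i.e.\ non-zero.

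First I would fix an arbitrary stepped plane $\mathcal{P}_{\vec{\alpha},\rho}$, with $\vec{\alpha}\in\mathbb{R}_+^d\backslash\{\vec{0}\}$ and $\rho\in\mathbb{R}$. Since $M_\sigma$ has non-negative entries, so does its transpose $M_\sigma^\top$; and since $\vec{\alpha}$ has non-negative entries, every coordinate of $M_\sigma^\top\vec{\alpha}$ is a sum of products of non-negative reals, so $M_\sigma^\top\vec{\alpha}\in\mathbb{R}_+^d$. This is precisely the rightmost condition in the equivalence of Theorem~\ref{th:image_stepped_plane}, so that theorem yields $E_1^*(\sigma)(\mathcal{P}_{\vec{\alpha},\rho})=\mathcal{P}_{M_\sigma^\top\vec{\alpha},\rho}$.

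It then remains to note that this image actually lies in $\mathfrak{P}$, for which one needs $M_\sigma^\top\vec{\alpha}\neq\vec{0}$. Here I would invoke unimodularity rather than mere non-negativity: $\det M_\sigma=\pm 1$ implies $M_\sigma^\top\in GL(d,\mathbb{Z})$, hence $M_\sigma^\top$ is injective and sends the non-zero vector $\vec{\alpha}$ to a non-zero vector. Thus $M_\sigma^\top\vec{\alpha}\in\mathbb{R}_+^d\backslash\{\vec{0}\}$, and $\mathcal{P}_{M_\sigma^\top\vec{\alpha},\rho}\in\mathfrak{P}$ by Definition~\ref{def:stepped_plane}. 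As $\mathcal{P}_{\vec{\alpha},\rho}$ was arbitrary, this gives $E_1^*(\sigma)(\mathfrak{P})\subset\mathfrak{P}$.

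I do not expect any real obstacle: the whole argument is a one-line specialization of Theorem~\ref{th:image_stepped_plane}. The only subtlety worth flagging explicitly is the non-vanishing of $M_\sigma^\top\vec{\alpha}$, which is why the proof uses that $\sigma$ is unimodular and not just that its incidence matrix is non-negative.
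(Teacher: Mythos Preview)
Your argument is correct and matches the paper's approach: the corollary is stated there without proof as an immediate specialization of Theorem~\ref{th:image_stepped_plane}, and the paper even remarks just after that theorem that ``the previous theorem in particular holds if $M_\sigma$ is non-negative: in such a case, the image of any stepped plane is a stepped plane.'' Your explicit check that $M_\sigma^\top\vec{\alpha}\neq\vec{0}$ via unimodularity is a clean addition that the paper leaves implicit.
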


\subsection{Action on stepped surfaces}
\label{sec:ss}

In this section, the notion of flip (introduced in Sec.\ \ref{sec:flips}) is used to extend the results of the previous section from stepped planes to stepped surfaces.
The idea is to see stepped surfaces as flips performed on stepped planes.
We first characterize the action of dual maps on flips:

\begin{proposition}\label{prop:image_flip}
Let $\sigma$ be a unimodular morphism and $\vec{x}\in\mathbb{Z}^d$.
Then, $E_1^*(\sigma)$ maps the flip located at $\vec{x}$ onto the flip located at $M_\sigma^{-1}\vec{x}$:
$$
E_1^*(\sigma)(\mathcal{F}_{\vec{x}})=\mathcal{F}_{M_\sigma^{-1}\vec{x}}.
$$
\end{proposition}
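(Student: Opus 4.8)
The plan is to reduce the statement to the computation of $E_1^*(\sigma)$ on the two bundles of faces that constitute a flip, and to exploit the linearity of dual maps. Recall that $\mathcal{F}_{\vec{x}} = \sum_{i=1}^d (\vec{x},i^*) - \sum_{i=1}^d (\vec{x}-\vec{e}_i,i^*)$, so that by linearity of $E_1^*(\sigma)$ it suffices to show
$$
\sum_{i=1}^d E_1^*(\sigma)(\vec{x},i^*) - \sum_{i=1}^d E_1^*(\sigma)(\vec{x}-\vec{e}_i,i^*) = \sum_{i=1}^d (M_\sigma^{-1}\vec{x},i^*) - \sum_{i=1}^d (M_\sigma^{-1}\vec{x}-\vec{e}_i,i^*).
$$
First I would substitute the explicit formula of Prop.~\ref{prop:image_face} for each of the $2d$ terms on the left-hand side. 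For a fixed target letter $j$, the face $(M_\sigma^{-1}(\vec{x}-\vec{f}(p)),j^*)$ (resp. with the extra $+\vec{e}_j$) appears each time $\sigma(j)$ contains a letter $i$ (resp. $i^{-1}$) with prefix $p$; summing over $i$ therefore collects, for this $j$, one contribution of sign $+1$ for every positive-exponent letter of $\sigma(j)$ and one of sign $-1$ for every negative-exponent letter, each placed at the appropriate prefix position.

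The key observation is a telescoping phenomenon. Write $\sigma(j) = u_1^{\varepsilon_1}\cdots u_l^{\varepsilon_l}$ in reduced form and set $p_k = u_1^{\varepsilon_1}\cdots u_{k-1}^{\varepsilon_{k-1}}$ for $1 \le k \le l+1$, so $p_1 = \varepsilon$ (empty word) and $\vec{f}(p_{l+1}) = \vec{f}(\sigma(j)) = M_\sigma \vec{e}_j$. When $\varepsilon_k = +1$, the corresponding face of the first sum of Prop.~\ref{prop:image_face} sits at $M_\sigma^{-1}(\vec{x} - \vec{f}(p_k))$; when $\varepsilon_k = -1$, the face of the second sum sits at $M_\sigma^{-1}(\vec{x} - \vec{f}(p_k) + \vec{e}_{u_k}) = M_\sigma^{-1}(\vec{x} - \vec{f}(p_k u_k^{-1})) = M_\sigma^{-1}(\vec{x} - \vec{f}(p_{k+1}))$, with sign $-1$. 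Thus, after reindexing, the signed faces contributed by $\sigma(j)$ in the $(\vec{x},\cdot)$-part are exactly $\sum_{k=1}^{l+1} c_k\,(M_\sigma^{-1}(\vec{x}-\vec{f}(p_k)),j^*)$ where $c_k \in \{+1,-1,0\}$ telescopes: all interior terms cancel and only $k=1$ (coefficient $+1$, face at $M_\sigma^{-1}\vec{x}$) and $k=l+1$ (coefficient $-1$, face at $M_\sigma^{-1}(\vec{x}-M_\sigma\vec{e}_j) = M_\sigma^{-1}\vec{x} - \vec{e}_j$) survive. This is precisely the reasoning already used in the proof of Th.~\ref{th:image_stepped_plane}, where the analogous broken-line argument showed the contribution depends only on the endpoints $\vec{r}_0$ and $\vec{r}_l$; here the same cancellation happens at the level of faces rather than of weights.

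Carrying this out for every $j \in \{1,\ldots,d\}$ gives
$$
\sum_{i=1}^d E_1^*(\sigma)(\vec{x},i^*) = \sum_{j=1}^d \bigl( (M_\sigma^{-1}\vec{x},j^*) - (M_\sigma^{-1}\vec{x}-\vec{e}_j,j^*) \bigr),
$$
and the same identity with $\vec{x}$ replaced by $\vec{x}-\vec{e}_i$ for each $i$ — but in fact one sees directly that $\sum_{i=1}^d E_1^*(\sigma)(\vec{x}-\vec{e}_i, i^*)$ requires handling the shift $\vec{x}-\vec{e}_i$ inside $M_\sigma^{-1}$, so I would instead apply the telescoping identity once to $\sum_i E_1^*(\sigma)(\vec{x},i^*)$ and once to $\sum_i E_1^*(\sigma)(\vec{x}-\vec{e}_i,i^*)$, and then subtract; an alternative, possibly cleaner, route is to apply the displayed identity to both $\vec{x}$ and to each translate and observe the difference collapses to $\mathcal{F}_{M_\sigma^{-1}\vec{x}}$ because $\sum_j (M_\sigma^{-1}\vec{x}-\vec{e}_j,j^*) - \sum_j (M_\sigma^{-1}(\vec{x}-\vec{e}_i)-\vec{e}_j,j^*)$ reassembles into the negative-side bundle of $\mathcal{F}_{M_\sigma^{-1}\vec{x}}$. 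The main obstacle is purely bookkeeping: keeping the prefixes, the exponents, and the resulting face positions correctly aligned so that the telescoping is manifest, and making sure the reduced-word hypothesis on $\sigma(j)$ is what guarantees no spurious extra cancellations (a cancelled pair $u_k^{\varepsilon_k} u_{k+1}^{\varepsilon_{k+1}}$ with $u_k = u_{k+1}$, $\varepsilon_k = -\varepsilon_{k+1}$ cannot occur). No genuinely hard idea is needed beyond the telescoping already implicit in Th.~\ref{th:image_stepped_plane}; alternatively, one could invoke the duality with $E_1(\sigma)$ from \cite{AI,ei} and the fact that $\mathcal{F}_{\vec{x}}$ is dual to a boundary, but the direct computation is more self-contained.
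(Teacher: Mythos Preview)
Your overall plan (linearity, then telescoping along the word $\sigma(j)=u_1^{\varepsilon_1}\cdots u_l^{\varepsilon_l}$) is the right idea, but the displayed intermediate identity
\[
\sum_{i=1}^d E_1^*(\sigma)(\vec{x},i^*) \;=\; \sum_{j=1}^d \bigl( (M_\sigma^{-1}\vec{x},j^*) - (M_\sigma^{-1}\vec{x}-\vec{e}_j,j^*) \bigr)
\]
is false: the ``top bundle'' $\sum_i(\vec{x},i^*)$ alone does \emph{not} telescope. For instance, with $d=2$ and $\sigma:1\mapsto 12,\;2\mapsto 1$, one computes $\sum_i E_1^*(\sigma)(\vec{0},i^*)=(\vec{0},1^*)+(\vec{0},2^*)+(-\vec{e}_2,1^*)$, three faces all with weight $+1$, which is not $\mathcal{F}_{\vec{0}}$. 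In your notation, for $\varepsilon_k=+1$ the top half contributes $+(s_k,j^*)$ and for $\varepsilon_k=-1$ it contributes $-(s_{k+1},j^*)$; when all exponents are positive (the substitution case) there is no cancellation at all. The telescoping you want only appears once you bring in the \emph{bottom} bundle $\sum_i(\vec{x}-\vec{e}_i,i^*)$: then each letter of $\sigma(j)$, regardless of sign, yields $(s_k,j^*)-(s_{k+1},j^*)$, and the sum over $k$ collapses to $(s_1,j^*)-(s_{l+1},j^*)$ as desired. So your argument can be repaired, but only by treating the full flip from the outset rather than the two halves separately.

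This is exactly what the paper does, though via Def.~\ref{def:dual_map} rather than Prop.~\ref{prop:image_face}: it fixes $(\vec{y},j)$, sets $\vec{r}_k=M_\sigma\vec{y}+\vec{f}(u_1^{\varepsilon_1}\cdots u_k^{\varepsilon_k})$, and observes that after plugging in the definition of $\mathcal{F}_{\vec{x}}$ the value $E_1^*(\sigma)(\mathcal{F}_{\vec{x}})(\vec{y},j)$ equals $\#\{k\mid\vec{r}_{k-1}=\vec{x}\}-\#\{k\mid\vec{r}_k=\vec{x}\}$, which is $+1$, $-1$, or $0$ according to whether $\vec{x}$ equals $\vec{r}_0=M_\sigma\vec{y}$, $\vec{r}_l=M_\sigma(\vec{y}+\vec{e}_j)$, or neither. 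This pointwise formulation makes the telescoping automatic and avoids the bookkeeping trap you fell into.
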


\begin{proof}
Let $\mathcal{E}=E_1^*(\sigma)(\mathcal{F}_{\vec{x}})$.
Fix $(\vec{y},j)\in\mathbb{Z}^d\times\{1,\ldots,d\}$ and compute $\mathcal{E}(\vec{y},j)$.
As in the proof of Th.\ \ref{th:image_stepped_plane}, let us write $\sigma(j)=u_1^{\varepsilon_1}\ldots u_l^{\varepsilon_l}$, where, for $k=1,\ldots,l$, $u_k\in\{1,\ldots,d\}$ and $\varepsilon_k=\pm 1$, and let us introduce, for $k=0,\ldots,l$:
$$
\vec{r}_k=M_\sigma\vec{y}+\vec{f}(u_1^{\varepsilon_1}\cdots u_k^{\varepsilon_k}).
$$
First, Def.\ \ref{def:dual_map} yields:
$$
\mathcal{E}(\vec{y},j)=\sum_{\varepsilon_k>0} \mathcal{F}_{\vec{x}}(\vec{r}_{k-1},u_k)-\sum_{\varepsilon_k<0}\mathcal{F}_{\vec{x}}(\underbrace{\vec{r}_{k-1}-\vec{e}_{u_k}}_{=\vec{r}_k},u_k).
$$
Then, Def.\ \ref{def:flip} yields both:
$$
\sum_{\varepsilon_k>0} \mathcal{F}_{\vec{x}}(\vec{r}_{k-1},u_k)=\#\{\varepsilon_k>0~|~\vec{r}_{k-1}=\vec{x}\}-\#\{\varepsilon_k>0~|~\underbrace{\vec{r}_{k-1}=\vec{x}-\vec{e}_{u_k}}_{\Leftrightarrow~\vec{r}_k=\vec{x}}\}
$$
and:
$$
\sum_{\varepsilon_k<0}\mathcal{F}_{\vec{x}}(\vec{r}_k,u_k)=\#\{\varepsilon_k<0~|~\vec{r}_k=\vec{x}\}-\#\{\varepsilon_k<0~|~\underbrace{\vec{r}_k=\vec{x}-\vec{e}_{u_k}}_{\Leftrightarrow~\vec{r}_{k-1}=\vec{x}}\}.
$$
One deduces:
$$
\mathcal{E}(\vec{y},j)
=\#\{\varepsilon_k~|~\vec{r}_{k-1}=\vec{x}\}-\#\{\varepsilon_k~|~\vec{r}_k=\vec{x}\}
=\left\{\begin{array}{rl}
1 & \textrm{ if }\vec{x}=\vec{r}_0,\\
-1 & \textrm{ if }\vec{x}=\vec{r}_l,\\
0 & \textrm{ otherwise.}
\end{array}\right.
$$
Since $\vec{r}_0=M_\sigma\vec{y}$ and $\vec{r}_l=M_\sigma\vec{y}+\vec{f}(\sigma(j))=M_\sigma(\vec{y}+\vec{e}_j)$, the above equation shows that $\mathcal{E}(\vec{y},j)=\mathcal{F}_{M_\sigma^{-1}\vec{x}}(\vec{y},j)$.
The result follows.
\end{proof}

\noindent We also need a technical lemma which plays the role of the $M_\sigma^\top\vec{\alpha}\in\mathbb{R}_+^d$ condition of Th.\ \ref{th:image_stepped_plane}:

\begin{lemma}\label{lem:positivement_orientable}
Let $\sigma$ be a unimodular morphism of the free group.
If there is a stepped surface whose image under $E_1^*(\sigma)$ is a binary stepped function, then there is $\vec{\alpha}\in\mathbb{R}_+^d\backslash\{\vec{0}\}$ such that $M_\sigma^\top\vec{\alpha}\in\mathbb{R}_+^d$:
$$
E_1^*(\sigma)(\mathfrak{S})\cap\mathfrak{B}\neq\emptyset ~\Rightarrow~ \mathbb{R}_+^d\cap M_\sigma^\top\mathbb{R}_+^d\neq \{\vec{0}\}.
$$
\end{lemma}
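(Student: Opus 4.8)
The plan is to derive the vector $\vec{\alpha}$ from a stepped surface $\mathcal{S}$ whose image $E_1^*(\sigma)(\mathcal{S})$ is binary, by exploiting the fact (used repeatedly in this paper) that any stepped surface is flip-accessible from a stepped plane with positive normal vector. So first I would pick a stepped plane $\mathcal{P}_{\vec{\beta},\rho}$ with $\vec{\beta}\in\mathbb{R}_+^d\backslash\{\vec{0}\}$ from which $\mathcal{S}$ is flip-accessible; writing this out, $\mathcal{S} = \mathcal{P}_{\vec{\beta},\rho} + \sum_k \lambda_k \mathcal{F}_{\vec{x}_k}$ with $\lambda_k\in\{\pm 1\}$ and the partial sums converging to $\mathcal{S}$. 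Applying $E_1^*(\sigma)$, which is linear and continuous, and using Prop.~\ref{prop:image_flip} ($E_1^*(\sigma)(\mathcal{F}_{\vec{x}})=\mathcal{F}_{M_\sigma^{-1}\vec{x}}$), I get that $E_1^*(\sigma)(\mathcal{S})$ is pseudo-flip-accessible from $E_1^*(\sigma)(\mathcal{P}_{\vec{\beta},\rho})$. Since $E_1^*(\sigma)(\mathcal{S})$ is binary and non-zero (being a stepped surface, or at least one must argue it is non-zero — a stepped surface is never the zero function), Prop.~\ref{prop:binary_pseudo_flip_acc} will apply \emph{provided} $E_1^*(\sigma)(\mathcal{P}_{\vec{\beta},\rho})$ is a stepped surface, which forces us into a case analysis.

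The cleaner route, and the one I would actually pursue, is a contrapositive/compactness argument that sidesteps needing $E_1^*(\sigma)(\mathcal{P}_{\vec{\beta},\rho})$ to be nice. Suppose, for contradiction, that $\mathbb{R}_+^d \cap M_\sigma^\top \mathbb{R}_+^d = \{\vec{0}\}$, i.e. for every $\vec{\alpha}\in\mathbb{R}_+^d\backslash\{\vec{0}\}$ the vector $M_\sigma^\top\vec{\alpha}$ has at least one strictly negative coordinate. By Th.~\ref{th:image_stepped_plane}, this means $E_1^*(\sigma)(\mathcal{P}_{\vec{\alpha},\rho})\notin\mathfrak{B}$ for \emph{every} positive $\vec{\alpha}$ and every $\rho$: each such image contains a face of weight $-1$. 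I would then want to show that this negativity is ``robust'' enough to survive the finitely-supported perturbation by flips that turns $\mathcal{P}_{\vec{\beta},\rho}$ into $\mathcal{S}$ — but that is exactly what Prop.~\ref{prop:binary_pseudo_flip_acc} (or rather the mechanism inside its proof, about how out-of-range flips can only push an already-negative weight further negative) gives: if $E_1^*(\sigma)(\mathcal{P}_{\vec{\beta},\rho})$ has a face of weight $\leq -1$ at some location $(\vec{z},k)$, and if we sort the images $\mathcal{F}_{M_\sigma^{-1}\vec{x}_k}$ appropriately, the weight at $(\vec{z},k)$ can only decrease under the remaining (out-of-ball) flips, contradicting that the limit $E_1^*(\sigma)(\mathcal{S})$ is binary. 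Concretely, I would fix a radius $r$ large enough that a weight-$(-1)$ face of $E_1^*(\sigma)(\mathcal{P}_{\vec{\beta},\rho})$ sits in $B(\vec{0},r)$, split the flips into those with $\|M_\sigma^{-1}\vec{x}_k\|\le r'$ (finitely many) and the rest, sort the finite batch as in the proof of Prop.~\ref{prop:binary_pseudo_flip_acc}, and run the same monotonicity argument to conclude the negative weight persists in the limit — a contradiction.

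The main obstacle, I expect, is handling the reordering/monotonicity bookkeeping cleanly: the sorting lemma inside Prop.~\ref{prop:binary_pseudo_flip_acc} is stated for a single ball $B(\vec{0},n)$ and for the \emph{target} being binary, whereas here I need it adapted so that a pre-existing negative weight (coming from a stepped plane that is already not binary) cannot be rescued. I would need to argue that among the flips $\lambda_k \mathcal{F}_{M_\sigma^{-1}\vec{x}_k}$, once things are sorted by the partial order on $M_\sigma^{-1}\vec{x}_k$ with signs respecting $\lambda_k$, a face whose weight has dropped to $-1$ stays $\le -1$ forever — this is essentially the assertion ``if adding $\lambda_f\mathcal{F}_{\vec{x}_f}$ yields weight $w<0$ then the weight can only be decreased by further pseudo-flips'' already used there, so I would cite and reuse it rather than reprove it. One subtlety worth a sentence: $E_1^*(\sigma)$ is continuous (it has finite range, as noted after Prop.~\ref{prop:image_face}), so the limit of the $E_1^*(\sigma)(\mathcal{E}_n)$ really is $E_1^*(\sigma)(\mathcal{S})$; and a stepped surface is non-zero, so $E_1^*(\sigma)(\mathcal{S})\ne 0$, which is needed to say it is a genuine binary function with the homeomorphism property — though for the contradiction I only need that no face ever has weight $-1$.
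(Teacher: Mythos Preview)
There is a genuine gap in the monotonicity step. The assertion you cite from the proof of Prop.~\ref{prop:binary_pseudo_flip_acc} reads: ``if adding $\lambda_f\mathcal{F}_{\vec{x}_f}$ yields weight $w<0$ to some face, then the weight can only be decreased by further pseudo-flips.'' This is conditional on the negativity being \emph{created} by a flip in the sorted order; the sorting is arranged precisely so that the unique flip that could raise that face's weight back up would have already been performed. It says nothing about a face whose weight is $-1$ in the \emph{initial} function $E_1^*(\sigma)(\mathcal{P}_{\vec{\beta},\rho})$. Concretely, if $(\vec{z},k)$ has weight $-1$ there and the flip $+\mathcal{F}_{\vec{z}}$ happens to appear in your list $\{\varepsilon_k\mathcal{F}_{M_\sigma^{-1}\vec{x}_k}\}$, it raises that weight to $0$; no ordering you impose prevents this. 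So the claimed persistence of the negative weight is unjustified.

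In fact the argument, were it valid, would prove too much: the monotonicity step uses only the single instance $\vec{\beta}$ (not the universal contrapositive hypothesis), and since every stepped surface is pseudo-flip-accessible from every stepped plane (Prop.~\ref{prop:pseudo_flip_acc}), you would conclude that whenever \emph{some} positive $\vec{\beta}$ has $M_\sigma^\top\vec{\beta}\notin\mathbb{R}_+^d$, no stepped surface has binary image under $E_1^*(\sigma)$. That is false: take any $M_\sigma^\top\in GL(d,\mathbb{Z})$ that sends one positive vector inside $\mathbb{R}_+^d$ and another outside; the stepped plane for the first vector is a stepped surface with binary image.

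The paper's proof is entirely different and avoids flips. It introduces the face-counting vector $\vec{g}(\mathcal{E})\in\mathbb{Z}^d$ (the $i$-th entry is the total weight on type-$i$ faces), observes the linear relation $\vec{g}(E_1^*(\sigma)(\mathcal{E}))=M_\sigma^\top\vec{g}(\mathcal{E})$, applies it to the restrictions $\mathcal{S}_n$ of $\mathcal{S}$ to large balls, normalizes, and extracts a limit $\vec{\alpha}$ by compactness of the unit sphere. Non-negativity of $\vec{\alpha}$ comes from $\mathcal{S}$ being binary; non-negativity of $M_\sigma^\top\vec{\alpha}$ (up to scale) comes from $E_1^*(\sigma)(\mathcal{S})$ being binary. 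This averaging argument is what you should replace the flip-monotonicity step with.
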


\begin{proof}
For any stepped function $\mathcal{E}$ of finite size, let $\vec{g}(\mathcal{E})$ be the vector of $\mathbb{Z}^d$ whose $i$-th entry is equal to the sum of the values of $\mathcal{E}$ over $\mathbb{Z}^d\times\{i\}$ ($\vec{g}$ plays on $\mathfrak{F}$ the role played on $F_d$ by the Abelianization map $\vec{f}$).
According to Prop.\ \ref{prop:image_face}, one has:
$$
\vec{g}(E_1^*(\sigma)(\mathcal{E}))=M_\sigma^\top\vec{g}(\mathcal{E}).
$$
Now, assume that $\mathcal{S}$ is a stepped surface such that $E_1^*(\sigma)(\mathcal{S})\in\mathfrak{B}$.
For $n\in\mathbb{N}$, the restriction of $\mathcal{S}$ to $B(\vec{0},n)$ is a finite binary stepped function that we denote by $\mathcal{S}_n$.
The vector $\vec{g}(\mathcal{S}_n)$ is non-negative; it is also non-zero for $n$ big enough.
Since $M_\sigma^\top$ is invertible, $\vec{g}(E_1^*(\sigma)(\mathcal{S}_n))=M_\sigma^\top\vec{g}(\mathcal{S}_n)$ is also non-zero.
We can thus define:
$$
\vec{\alpha}_n=\frac{\vec{g}(\mathcal{S}_n)}{||\vec{g}(\mathcal{S}_n)||}
\quad\textrm{and}\quad
\vec{\beta}_n=\frac{\vec{g}(E_1^*(\sigma)(\mathcal{S}_n))}{||\vec{g}(E_1^*(\sigma)(\mathcal{S}_n))||}.
$$
These two vectors have norm $1$ and satisfy $M_\sigma^\top\vec{\alpha}_n=\lambda_n\vec{\beta}_n$, where:
$$
0\leq \lambda_n=\frac{||\vec{g}(E_1^*(\sigma)(\mathcal{S}_n))||}{||\vec{g}(\mathcal{S}_n)||}=\frac{||M_\sigma^\top\vec{g}(\mathcal{S}_n)||}{||\vec{g}(\mathcal{S}_n)||}\leq ||M_\sigma^\top||.
$$
By compactness, we can extract convergent sequences from the bounded sequences $(\vec{\alpha}_n)$, $(\vec{\beta}_n)$ and $(\lambda_n)$.
Let $\vec{\alpha}$, $\vec{\beta}$ and $\lambda$ denote their respective limits: they are non-negative and satisfy $M_\sigma^\top\vec{\alpha}=\lambda\vec{\beta}$.
Moreover, $\vec{\alpha}$ is non-zero since the $\vec{\alpha}_n$ have norm $1$.
This ends the proof.
\end{proof}

We are now in a position to extend Th.\ \ref{th:image_stepped_plane} (more exactly, Cor.\ \ref{cor:image_stepped_plane}) to the stepped surface case, by combining all the results previously obtained:

\begin{theorem}\label{th:image_stepped_surface}
Let $\sigma$ be a unimodular morphism.
If the image of a stepped surface under $E_1^*(\sigma)$ is a binary stepped function, then it is a stepped surface:
$$
E_1^*(\sigma)(\mathfrak{S})\cap\mathfrak{B}\subset\mathfrak{S}.
$$
\end{theorem}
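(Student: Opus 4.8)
The plan is to realize the given stepped surface $\mathcal{S}$ as a pseudo-flip deformation of a well-chosen stepped plane, to transport this deformation through $E_1^*(\sigma)$ using that dual maps send flips to flips, and then to close the argument with Proposition~\ref{prop:binary_pseudo_flip_acc}. In other words, the scheme is: stepped surface $=$ stepped plane $+$ (infinitely many) flips, apply $E_1^*(\sigma)$, and read off a binary stepped function that is pseudo-flip-accessible from a stepped surface.

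First I would fix a stepped surface $\mathcal{S}$ with $E_1^*(\sigma)(\mathcal{S})\in\mathfrak{B}$. By Lemma~\ref{lem:positivement_orientable}, there is $\vec{\alpha}\in\mathbb{R}_+^d\backslash\{\vec{0}\}$ with $M_\sigma^\top\vec{\alpha}\in\mathbb{R}_+^d$, and since $M_\sigma^\top$ is invertible this forces $M_\sigma^\top\vec{\alpha}\neq\vec{0}$, hence $M_\sigma^\top\vec{\alpha}\in\mathbb{R}_+^d\backslash\{\vec{0}\}$. Choosing any $\rho\in\mathbb{R}$, Theorem~\ref{th:image_stepped_plane} then gives $E_1^*(\sigma)(\mathcal{P}_{\vec{\alpha},\rho})=\mathcal{P}_{M_\sigma^\top\vec{\alpha},\rho}\in\mathfrak{P}\subset\mathfrak{S}$. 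So the image of \emph{this particular} stepped plane is a stepped surface; the work is to propagate this from the plane to $\mathcal{S}$.

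Next, by Proposition~\ref{prop:pseudo_flip_acc}, $\mathcal{S}$ is pseudo-flip-accessible from $\mathcal{P}_{\vec{\alpha},\rho}$, say via a sequence $\mathcal{E}_n=\mathcal{P}_{\vec{\alpha},\rho}+\sum_{k\leq n}\lambda_k\mathcal{F}_{\vec{x}_k}$ converging to $\mathcal{S}$, with $\vec{x}_k\in\mathbb{Z}^d$ and $\lambda_k\in\{\pm1\}$. Applying $E_1^*(\sigma)$ and using its linearity together with Proposition~\ref{prop:image_flip} (so $E_1^*(\sigma)(\mathcal{F}_{\vec{x}_k})=\mathcal{F}_{M_\sigma^{-1}\vec{x}_k}$, with $M_\sigma^{-1}\vec{x}_k\in\mathbb{Z}^d$ because $M_\sigma\in GL(d,\mathbb{Z})$), I get $E_1^*(\sigma)(\mathcal{E}_n)=\mathcal{P}_{M_\sigma^\top\vec{\alpha},\rho}+\sum_{k\leq n}\lambda_k\mathcal{F}_{M_\sigma^{-1}\vec{x}_k}$, which is a legitimate pseudo-flip sequence issued from the stepped surface $\mathcal{P}_{M_\sigma^\top\vec{\alpha},\rho}$. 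Since $E_1^*(\sigma)$ is uniformly continuous on $\mathfrak{F}$ (it has finite range, as observed after Proposition~\ref{prop:image_face}), $E_1^*(\sigma)(\mathcal{E}_n)\to E_1^*(\sigma)(\mathcal{S})$, so $E_1^*(\sigma)(\mathcal{S})$ is pseudo-flip-accessible from a stepped surface. As $E_1^*(\sigma)(\mathcal{S})$ is binary by hypothesis and non-zero (for instance because $E_1^*(\sigma)$ is invertible, with inverse $E_1^*(\sigma^{-1})$ by \eqref{eq:compo}, and $\mathcal{S}\neq\vec{0}$), Proposition~\ref{prop:binary_pseudo_flip_acc} applies and gives $E_1^*(\sigma)(\mathcal{S})\in\mathfrak{S}$, as wanted.

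I expect the only genuinely delicate point to be that applying $E_1^*(\sigma)$ to the pseudo-flip sequence and passing to the limit really does reconstruct $E_1^*(\sigma)(\mathcal{S})$ as the limit of a valid pseudo-flip sequence starting from a stepped surface — that is, that $E_1^*(\sigma)$ commutes with the relevant limit — which is exactly what uniform continuity of dual maps provides; and, secondarily, checking that $E_1^*(\sigma)(\mathcal{S})$ does not vanish, so that Proposition~\ref{prop:binary_pseudo_flip_acc} is applicable. The rest is bookkeeping: images of flips are flips at integer points by unimodularity, and the image of the chosen stepped plane is a stepped plane by Theorem~\ref{th:image_stepped_plane}.
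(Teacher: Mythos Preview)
Your proposal is correct and follows essentially the same route as the paper's proof: invoke Lemma~\ref{lem:positivement_orientable} to find a suitable $\vec{\alpha}$, write $\mathcal{S}$ as a pseudo-flip deformation of $\mathcal{P}_{\vec{\alpha},\rho}$ via Proposition~\ref{prop:pseudo_flip_acc}, push this through $E_1^*(\sigma)$ using Theorem~\ref{th:image_stepped_plane}, Proposition~\ref{prop:image_flip}, and uniform continuity, and conclude with Proposition~\ref{prop:binary_pseudo_flip_acc}. Your write-up is in fact slightly more careful than the paper's in two places: you explicitly note that $M_\sigma^\top\vec{\alpha}\neq\vec{0}$ (so the target is a genuine stepped plane) and you verify that $E_1^*(\sigma)(\mathcal{S})\neq 0$ before invoking Proposition~\ref{prop:binary_pseudo_flip_acc}, which the paper leaves implicit.
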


\begin{proof}
Assume that $\mathcal{S}$ is a stepped surface such that $E_1^*(\sigma)(\mathcal{S})\in\mathfrak{B}$.
Lem.\ \ref{lem:positivement_orientable} ensures that there is $\vec{\alpha}\in\mathbb{R}_+^d\backslash\{\vec{0}\}$ such that $M_\sigma^\top\vec{\alpha} \in\mathbb{R}_+^d$.
Since $\vec{\alpha}\in\mathbb{R}_+^d\backslash\{\vec{0}\}$,
Prop.\ \ref{prop:pseudo_flip_acc} thus yields that $\mathcal{S}$ is pseudo-flip-accessible from the stepped plane $\mathcal{P}_{\vec{\alpha},0}$:
$$
\mathcal{S}=\mathcal{P}_{\vec{\alpha},0}+\lim_{n\to\infty}\sum_{k\leq n}\varepsilon_k \mathcal{F}_{\vec{x}_k},
$$
where, for any $k$, $\varepsilon_k=\pm 1$ and $\vec{x}_k\in\mathbb{Z}^d$.
The uniform continuity of $E_1^*(\sigma)$ gives:
$$
E_1^*(\sigma)(\mathcal{S})=E_1^*(\sigma)(\mathcal{P}_{\vec{\alpha},0})+\lim_{n\to\infty}\sum_{k\leq n}\varepsilon_k E_1^*(\sigma)(\mathcal{F}_{\vec{x}_k}).
$$
Then, Th.\ \ref{th:image_stepped_plane} and Prop.\ \ref{prop:image_flip} imply:
$$
E_1^*(\sigma)(\mathcal{S})=\mathcal{P}_{M_\sigma^\top\vec{\alpha},0}+\lim_{n\to\infty}\sum_{k\leq n}\varepsilon_k \mathcal{F}_{M_\sigma^{-1}\vec{x}_k}.
$$
This shows that the binary stepped function $E_1^*(\sigma)(\mathcal{S})$ is pseudo-flip-accessible from $\mathcal{P}_{M_\sigma^\top\vec{\alpha},0}$.
Prop.\ \ref{prop:binary_pseudo_flip_acc} thus finally yields that $E_1^*(\sigma)(\mathcal{S})$ is a stepped surface.
\end{proof}

\noindent Fig.\ \ref{fig:image_stepped_surface} illustrates this theorem. 
Note that if $M_{\sigma}$ is non-negative, then 
 the image by $E_1^* (\sigma)$ of any stepped surface is a stepped surface.
\begin{figure}[hbtp]
\centering
\includegraphics[width=\textwidth]{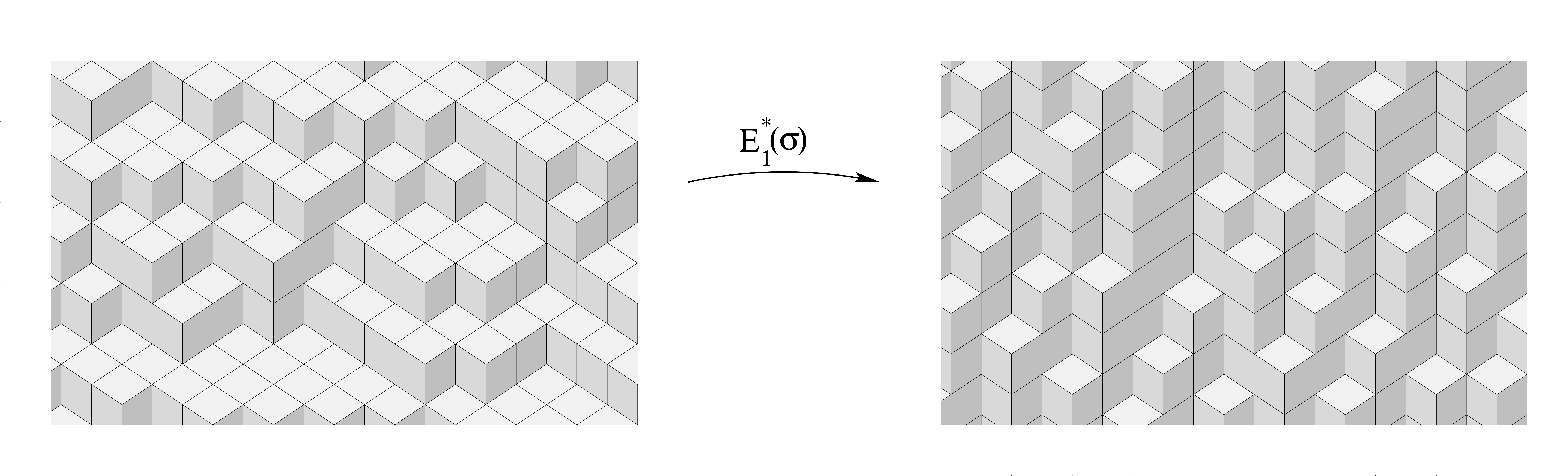}
\caption{
If the image of a stepped surface under a dual map $E_1^*(\sigma)$ is a binary stepped function, then it is a stepped surface.
}
\label{fig:image_stepped_surface}
\end{figure}

\begin{remark}\label{rem:digitization_linear_group2}
The viewpoint according to which the action of dual maps over stepped planes is a digitization of the action of the linear group over real planes (Rem.\ \ref{rem:digitization_linear_group1}) can be extended to stepped surfaces.
Indeed, by seeing a stepped surface as flips added to a stepped plane (according to Prop.\ \ref{prop:pseudo_flip_acc}), the action of a dual map is decomposed into its action on a stepped plane and on flips, with each of them being moved according to the linear map $M_\sigma^{-1}$ (according to Prop.\ \ref{prop:image_flip}).
\end{remark}

As for stepped planes (Rem.\ \ref{rem:commutative_action1}), the action of dual maps over stepped surfaces only depends on incidence matrices.
Indeed, since it holds both for stepped planes and flips (see Prop.\ \ref{prop:image_flip}), Prop.\ \ref{prop:pseudo_flip_acc} easily yields the following:

\begin{theorem}\label{theo:commutative_action2}
Let $\sigma$, $\sigma'$ be two unimodular morphisms
with the same incidence matrix. If the image of of a stepped surface $\mathcal{S}$
 by $E_1^* (\sigma)$ is a binary stepped function, then its images by 
$E_1^* (\sigma)$ and $E_1^* (\sigma')$ do coincide.
\end{theorem}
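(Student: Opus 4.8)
The plan is to mimic the proof of Th.~\ref{th:image_stepped_surface}, exploiting the fact that each ingredient used there depends on $\sigma$ only through its incidence matrix $M_\sigma$. First I would invoke Lem.~\ref{lem:positivement_orientable}: since $E_1^*(\sigma)(\mathcal{S})\in\mathfrak{B}$, there is a vector $\vec{\alpha}\in\mathbb{R}_+^d\backslash\{\vec{0}\}$ with $M_\sigma^\top\vec{\alpha}\in\mathbb{R}_+^d$; because $M_{\sigma'}=M_\sigma$, the same $\vec{\alpha}$ also satisfies $M_{\sigma'}^\top\vec{\alpha}\in\mathbb{R}_+^d$. This is the only place where the hypothesis ``$E_1^*(\sigma)(\mathcal{S})$ is binary'' is needed.

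Next I would decompose $\mathcal{S}$ once and for all. By Prop.~\ref{prop:pseudo_flip_acc}, $\mathcal{S}$ is pseudo-flip-accessible from the stepped plane $\mathcal{P}_{\vec{\alpha},0}$, so there are $\varepsilon_k=\pm1$ and $\vec{x}_k\in\mathbb{Z}^d$ with
$$
\mathcal{S}=\mathcal{P}_{\vec{\alpha},0}+\lim_{n\to\infty}\sum_{k\leq n}\varepsilon_k\mathcal{F}_{\vec{x}_k}.
$$
Applying the uniformly continuous dual maps $E_1^*(\sigma)$ and $E_1^*(\sigma')$ to this identity (legitimately, since dual maps commute with such limits, as established after Prop.~\ref{prop:image_face}), it suffices to show that $E_1^*(\sigma)$ and $E_1^*(\sigma')$ agree on $\mathcal{P}_{\vec{\alpha},0}$ and on each flip $\mathcal{F}_{\vec{x}_k}$. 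For the stepped plane, Th.~\ref{th:image_stepped_plane} together with $M_\sigma^\top\vec{\alpha}=M_{\sigma'}^\top\vec{\alpha}\in\mathbb{R}_+^d$ gives
$$
E_1^*(\sigma)(\mathcal{P}_{\vec{\alpha},0})=\mathcal{P}_{M_\sigma^\top\vec{\alpha},0}=\mathcal{P}_{M_{\sigma'}^\top\vec{\alpha},0}=E_1^*(\sigma')(\mathcal{P}_{\vec{\alpha},0}).
$$
For the flips, Prop.~\ref{prop:image_flip} gives $E_1^*(\sigma)(\mathcal{F}_{\vec{x}_k})=\mathcal{F}_{M_\sigma^{-1}\vec{x}_k}=\mathcal{F}_{M_{\sigma'}^{-1}\vec{x}_k}=E_1^*(\sigma')(\mathcal{F}_{\vec{x}_k})$, using $M_\sigma^{-1}=M_{\sigma'}^{-1}$. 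Summing these equalities termwise and passing to the limit yields $E_1^*(\sigma)(\mathcal{S})=E_1^*(\sigma')(\mathcal{S})$ (and, as a byproduct, $E_1^*(\sigma')(\mathcal{S})\in\mathfrak{B}$).

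I do not expect a genuine obstacle: the argument is a direct recombination of Lem.~\ref{lem:positivement_orientable}, Prop.~\ref{prop:pseudo_flip_acc}, Th.~\ref{th:image_stepped_plane} and Prop.~\ref{prop:image_flip}, all of which are at our disposal. The only subtlety worth spelling out is that the whole proof must proceed from the single hypothesis on $E_1^*(\sigma)$ (we know nothing a priori about $E_1^*(\sigma')(\mathcal{S})$), which is precisely why Lem.~\ref{lem:positivement_orientable} is the right entry point, and that the pseudo-flip decomposition of $\mathcal{S}$ must be chosen before feeding it to both dual maps so that the comparison is termwise.
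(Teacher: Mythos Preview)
Your proposal is correct and follows exactly the approach the paper indicates. The paper does not give a formal proof of this theorem; it only remarks, just before the statement, that since the action of dual maps depends only on the incidence matrix both for stepped planes (Rem.~\ref{rem:commutative_action1}) and for flips (Prop.~\ref{prop:image_flip}), Prop.~\ref{prop:pseudo_flip_acc} ``easily yields'' the result---and your write-up is precisely the natural fleshing-out of that sentence, with the added care of invoking Lem.~\ref{lem:positivement_orientable} to secure a suitable $\vec{\alpha}$ so that Th.~\ref{th:image_stepped_plane} applies cleanly.
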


To conclude this section, let us show that the action of dual maps over stepped planes in the substitution case (Cor.\ \ref{cor:positive_image_stepped_plane}) admits a natural extension for stepped surfaces:

\begin{proposition}\label{prop:positive_image_stepped_surface}
Let $\sigma$ be a unimodular morphism.
If the incidence matrix of $\sigma$ is non-negative, then $E_1^*(\sigma)$ maps any stepped surface onto a stepped surface:
$$
M_\sigma\geq 0 ~\Rightarrow~ E_1^*(\sigma)(\mathfrak{S})\subset\mathfrak{S}.
$$
\end{proposition}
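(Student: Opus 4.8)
The plan is to first reduce to the case where $\sigma$ is a substitution, and then prove directly that the image of a stepped surface is binary, so that Theorem~\ref{th:image_stepped_surface} applies. For the reduction: given $\sigma$ with $M_\sigma\geq 0$, introduce the substitution $\sigma'$ defined by $\sigma'(i)=1^{m_{1i}}\cdots d^{m_{di}}$, where $(m_{ki})_k$ is the $i$-th column of $M_\sigma$; it is non-erasing (a zero column would contradict $\det M_\sigma=\pm 1$), non-negative by construction, and has $M_{\sigma'}=M_\sigma$. Once the substitution case is settled, $E_1^*(\sigma')(\mathcal{S})$ is binary for every stepped surface $\mathcal{S}$, and Theorem~\ref{theo:commutative_action2}, applied with $(\sigma',\sigma)$ in place of $(\sigma,\sigma')$ — its hypothesis then being the binarity of $E_1^*(\sigma')(\mathcal{S})$, which holds by the substitution case — yields $E_1^*(\sigma)(\mathcal{S})=E_1^*(\sigma')(\mathcal{S})$, hence $E_1^*(\sigma)(\mathcal{S})\in\mathfrak{S}$. (Equivalently, one can re-derive this equality on the spot by writing $\mathcal{S}$ as $\mathcal{P}_{\vec{u},0}$ plus flips as in Prop.~\ref{prop:pseudo_flip_acc}, applying $E_1^*$, and invoking Th.~\ref{th:image_stepped_plane}, Prop.~\ref{prop:image_flip}, together with $M_\sigma^\top\vec{u}\in\mathbb{R}_+^d\setminus\{\vec{0}\}$.)

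So assume from now on that $\sigma$ is a substitution. Fix a stepped surface $\mathcal{S}$ and $(\vec{y},j)\in\mathbb{Z}^d\times\{1,\ldots,d\}$, write $\sigma(j)=u_1\cdots u_l$ with $u_k\in\{1,\ldots,d\}$, and set $\vec{r}_k=M_\sigma\vec{y}+\vec{e}_{u_1}+\cdots+\vec{e}_{u_k}$ for $0\leq k\leq l$, so that $\vec{r}_0\leq\vec{r}_1\leq\cdots\leq\vec{r}_l$ is a monotone lattice path with $\vec{r}_l=M_\sigma(\vec{y}+\vec{e}_j)$. Repeating the computation of the proof of Th.~\ref{th:image_stepped_plane} (the second sum in Def.~\ref{def:dual_map} is now empty since $\sigma(j)$ has only positive-power letters), one gets $E_1^*(\sigma)(\mathcal{S})(\vec{y},j)=\#\{k\in\{1,\ldots,l\}\mid\mathcal{S}(\vec{r}_{k-1},u_k)=1\}$, which is in particular $\geq 0$.

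The heart of the argument is to show this count is $\leq 1$, and this is where I would use Prop.~\ref{prop:stepped_surface}. Suppose $\mathcal{S}(\vec{r}_{a-1},u_a)=\mathcal{S}(\vec{r}_{b-1},u_b)=1$ with $a<b$, and let $S$ be the geometric interpretation of $\mathcal{S}$. The face $(\vec{r}_{a-1},u_a^*)$ belongs to $S$, hence so does its lowest vertex $\vec{r}_{a-1}+\vec{e}_{u_a}=\vec{r}_a$; the face $(\vec{r}_{b-1},u_b^*)$ belongs to $S$, hence so does its highest vertex $\vec{r}_{b-1}+\vec{e}_{u_b}+\sum_{i\neq u_b}\vec{e}_i=\vec{r}_{b-1}+\vec{u}$. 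Since $a<b$, the path is monotone between $\vec{r}_a$ and $\vec{r}_{b-1}$, so $\vec{r}_{b-1}-\vec{r}_a=\vec{e}_{u_{a+1}}+\cdots+\vec{e}_{u_{b-1}}\geq\vec{0}$, and therefore $(\vec{r}_{b-1}+\vec{u})-\vec{r}_a\geq\vec{u}>\vec{0}$. This contradicts Prop.~\ref{prop:stepped_surface}, since $\vec{r}_a$ and $\vec{r}_{b-1}+\vec{u}$ are two integer vectors of $S$. Hence at most one index $k$ contributes, $E_1^*(\sigma)(\mathcal{S})(\vec{y},j)\in\{0,1\}$, so $E_1^*(\sigma)(\mathcal{S})\in\mathfrak{B}$, and Theorem~\ref{th:image_stepped_surface} lets us conclude that it is a stepped surface.

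The main obstacle is this last combinatorial step: the monotonicity of the staircase path $(\vec{r}_k)$ — which is exactly what makes the substitution hypothesis (rather than mere unimodularity) usable — has to be combined with the right choice of a pair of comparable integer points of $S$ in order to trigger Prop.~\ref{prop:stepped_surface}; the choice "lowest vertex of the earlier face versus highest vertex of the later face" is what works, and a small verification is needed to see that the resulting difference is strictly positive in every coordinate. Everything else — the reduction to substitutions, the rewriting of $E_1^*(\sigma)(\mathcal{S})(\vec{y},j)$ as a face count, and the final appeal to Theorem~\ref{th:image_stepped_surface} — is routine given the results already established.
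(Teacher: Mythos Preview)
Your proof is correct and follows essentially the same strategy as the paper: reduce to the substitution case via Theorem~\ref{theo:commutative_action2}, show the image is binary by exhibiting two comparable integer vertices of $S$ whenever a value exceeds~$1$ (contradicting Proposition~\ref{prop:stepped_surface}), then conclude with Theorem~\ref{th:image_stepped_surface}. The only difference is presentational: you compute $E_1^*(\sigma)(\mathcal{S})(\vec{y},j)$ via the staircase path $(\vec{r}_k)$ in the spirit of the proof of Theorem~\ref{th:image_stepped_plane}, while the paper phrases the overlap via Proposition~\ref{prop:image_face}; under the dictionary $\vec{x}_1=\vec{r}_{b-1}$, $\vec{x}_2=\vec{r}_{a-1}$, $i_1=u_b$, $i_2=u_a$, your pair $(\vec{r}_a,\,\vec{r}_{b-1}+\vec{u})$ is exactly the paper's pair $(\vec{y}_2,\vec{y}_1)=(\vec{x}_2+\vec{e}_{i_2},\,\vec{x}_1+\vec{u})$.
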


\begin{proof}
Since the action of dual maps over stepped surfaces depends only on incidences matrices, we can w.l.o.g. assume that $\sigma$ is non-negative.
Let us then show that $\sigma\geq 0$ and $E_1^*(\sigma)(\mathfrak{S})\not\subset\mathfrak{S}$ yield a contradiction.
On the one hand, $E_1^*(\sigma)(\mathfrak{S})\not\subset\mathfrak{S}$ provides us $\mathcal{S}\in\mathfrak{S}$ such that $E_1^*(\sigma)(\mathcal{S})\notin\mathfrak{S}$.
One even deduces from Th.\ \ref{th:image_stepped_surface} that $E_1^*(\sigma)(\mathcal{S})\notin\mathfrak{B}$.
On the other hand, $\sigma\geq 0$ yields that the weights of faces in $E_1^*(\sigma)(\mathcal{S})$ are non-negative, according to the expression
of $E_1^*(\sigma)$ provided by Prop.~\ref{prop:image_face}.
Thus, $E_1^*(\sigma)(\mathcal{S})\notin\mathfrak{B}$ means that there are two distinct faces $(\vec{x}_1,i_1^*)$ and $(\vec{x}_2,i_2^*)$ with weight $1$
in $\mathcal{S}$ whose images under $E_1^*(\sigma)$ ``overlap'', \emph{i.e.}, for some $j\in\{1,\ldots,d\}$:
$$
\sigma(j)=p_1\cdotp i_1\cdotp s_1 = p_2\cdotp i_2\cdotp s_2
\quad\textrm{and}\quad
M_\sigma^{-1}(\vec{x}_1-\vec{f}(p_1))=M_\sigma^{-1}(\vec{x}_2-\vec{f}(p_2)).
$$
Note that $(\vec{x}_1,i_1^*)\neq(\vec{x}_2,i_2^*)$ yields $p_1\neq p_2$.
Assume, w.l.o.g., that $p_2$ is shorter than $p_1$.
One computes:
$$
\vec{x}_1-\vec{x}_2=\vec{f}(p_1)-\vec{f}(p_2)\geq \vec{f}(i_2)=\vec{e}_{i_2}.
$$
Let us now introduce the two following integer vectors:
$$
\vec{y}_1=\vec{x}_1+\sum_{i=1}^d\vec{e}_i
\quad\textrm{and}\quad
\vec{y}_2=\vec{x}_2+\vec{e}_{i_2}.
$$
They respectively belong to the geometric interpretation of $(\vec{x}_1,i_1^*)$ and $(\vec{x}_2,i_2^*)$.
Moreover, one has:
$$
\vec{y}_1-\vec{y}_2=(\vec{x}_1+\sum_{i=1}^d\vec{e}_i)-(\vec{x}_2+\vec{e}_{i_2}) = (\underbrace{\vec{x}_1-\vec{x}_2-\vec{e}_{i_2}}_{\geq \vec{0}}) + \sum_{i=1}^d \vec{e}_i>\vec{0}.
$$
The wanted contradiction thus follows from Prop.\ \ref{prop:stepped_surface}.
\end{proof}

\section{Brun expansions}
\label{sec:brun_expansions}
This section provides an interpretation of multi-dimensional continued fraction algorithms in terms of dual maps acting on stepped planes or surfaces.

\subsection{Multi-dimensional continued fractions}
\label{sec:exp_vector}

In the one-dimensional case, there is a canonical continued fraction algorithm, namely the Euclidean algorithm.
In the multi-dimensional case, there is no such canonical algorithm, and several different definitions have been proposed (see Brentjes \cite{brentjes} or Schweiger \cite{schweiger} for a summary).
Here, we follow the definition of Lagarias \cite{lagarias}, where multi-dimensional continued fraction algorithms produce sequences of matrices in $GL(d,\mathbb{Z})$.
We first state the general definition of such algorithms, and then detail one of them, namely the Brun algorithm.

\begin{definition}\label{def:continued_fraction}
Let $X\subset\mathbb{R}^d$ and $X_0\subset X$.
Elements of $X_0$ are called \emph{terminal}.
A $d$-dimensional \emph{continued fraction map} over $X$ is a map $T:X\to X$ such that $T(X_0)\subset X_0$ and, for any $\vec{\alpha}\in X$, there is $B(\vec{\alpha})\in GL(d,\mathbb{Z})$ satisfying:
$$
\vec{\alpha}=B(\vec{\alpha}) \, T(\vec{\alpha}).
$$
The associated \emph{continued fraction algorithm} consists in iteratively applying the map $T$ on a vector $\vec{\alpha}\in X$.
This yields the following sequence of matrices, called the \emph{continued fraction expansion} of $\vec{\alpha}$:
$$
(B(T^n(\vec{\alpha})))_{n\geq 1}.
$$
This expansion is said to be \emph{finite} if there is $n$ such that $T^n(\vec{\alpha})\in X_0$, infinite otherwise.
In the former case, the smallest such $n$ is called the \emph{length} of the expansion.
\end{definition}

\noindent The Jacobi-Perron, Poincar\'e, Selmer or Brun algorithms match this definition.\\

The intuitive principle of a continued fraction algorithm is that each application of the map $T$ captures a partial information about $\vec{\alpha}$, with all the information being captured by the whole expansion.
A good algorithm should capture as much as possible information with as less as possible matrices.
This can be formalized by various notions of \emph{convergence} for continued fraction algorithms.
Here, let us just recall \emph{weak convergence}.\\

We say that a sequence $(\vec{y}_n)$ of non-zero vectors tends \emph{in direction} towards a non-zero vector $\vec{z}$ if $(\vec{y}_n/||\vec{y}_n||)_n$ tends towards $\vec{z}/||\vec{z}||$, that is
$$
\lim_{n \rightarrow \infty} d(\vec{y}_n/||\vec{y}_n||, \mathbb{R} \vec{z})=0.
$$

\begin{definition}\label{def:weak_cv}
A muldi-dimensional continued fraction algorithm is said to be \emph{weakly convergent} in $\vec{\alpha}\in X$ if the expansion $(B_n)_{n\geq 1}$ of $\vec{\alpha}$ is such that the sequence $( B_1\ldots B_n\vec{x})_n$ tends in direction towards $\vec{\alpha}$, uniformly for $\vec{x}\in X$.
\end{definition}

In particular, weakly convergent continued fraction algorithms allow to approximate real vectors by sequences of rational vectors.
Indeed, let $\vec{\alpha}$ be a $d-1$-dimensional real vector of $X$ for $d\geq 2$.
Let then $(B_n)_n$ be the continued fraction expansion of the $d$-dimensional vector $(1,\vec{\alpha})$, that is, for all $n\geq 1$, $B_n= (B(T^n(\vec{\alpha})))$, and define:
$$
(q_n,\vec{p}_n)=B_1 B_2 \ldots B_n(1,\vec{0}),
$$
where $(1,\vec{0})$ is a $d$-dimensional vector and $\vec{p}_n$ is a $(d-1)$-dimensional vector.
The sequence $(\vec{p}_n/q_n)_n$ is thus a sequence of $(d-1)$-dimensional rational vectors, and the weak convergence ensures that it tends towards $\vec{\alpha}$.\\

Let us end this section by presenting the \emph{Brun algorithm}, one of the most classical multi-dimensional continued fraction algorithms matching Def.\ \ref{def:continued_fraction} (see \cite{an} and \cite{brun}).
In this case, $X$ is the set of non-zero non-negative $d$-dimensional real vectors, and a vector of $X$ is \emph{terminal} if it has only one non-zero entry:
$$
X=\mathbb{R}_+^d\backslash\{\vec{0}\}
\quad\textrm{and}\quad
X_0=\{\lambda \vec{e}_i~|~\lambda>0,~1\leq i\leq d\}.
$$
For $\vec{\alpha}\in X_0$, $B(\vec{\alpha})$ is the $d\times d$ identity matrix $I_d$, and for $\vec{\alpha}\in X\backslash X_0$, we set:
$$
B(\vec{\alpha})=
\left(\begin{array}{ccccc}
I_{i-1} & & & &\\
 & a & & 1 &\\
 & & I_{j-i-1} & &\\
 & 1 & & 0 &\\
 & & & & I_{d-j}
\end{array}\right),
$$
where $i$ and $j$ are the indexes of, respectively, the first and second largest entries of $\vec{\alpha}$ (with the smallest index being taken into account if several entries are equal), and $a$ is the integer part of the quotient of the $i$-th entry by the $j$-th one\footnote{Note that $\vec{\alpha}\notin X_0$ ensures that this $j$-th entry is non-zero, hence $a$ is well defined.}.
In other words, for $\vec{\alpha}=(\alpha_1,\ldots,\alpha_d)$, let $(i,j)$ be the lexicographically smallest pair of integers in $\{1,\ldots,d\}$ satisfying:
$$
\alpha_i=\max_{1\leq k\leq d}\alpha_k,
\qquad
\alpha_j=\max_{k\neq i}\alpha_k,
\qquad
a=\left\lfloor \frac{\alpha_i}{\alpha_j}\right\rfloor.
$$ Note that $a\geq 1$.
The map $T$ of Def.\ \ref{def:continued_fraction} is called the \emph{Brun map}. One has
for all $ \vec{\alpha} \in X$: 
$$ T(\vec{\alpha})= B(\vec{\alpha})^{-1} \, \vec{\alpha}.$$

According to this definition, each step of the Brun algorithm consists in, first, subtracting $a$ times the second largest entry of a vector of $X\backslash X_0$ to its first largest entry (that is, as many times as this first entry remains non-negative), and second, in exchanging both entries.
In the $d=2$ case, we thus recover the Euclidean algorithm.
Let us conclude by recalling that the Brun algorithm is weakly convergent (see, \emph{e.g.}, \cite{brentjes}).

\subsection{Brun expansions of stepped planes}
\label{sec:brun_exp_stepped_planes}

In this section, we show how Brun expansions of vectors can be naturally extended to Brun expansions of stepped planes, providing a new geometric viewpoint.
More generally, we want to define over binary functions a map $\tilde{T}$ whose restriction over the set of stepped planes satisfies, for any $\vec{\alpha}\in\mathbb{R}_+^d$ and $\rho\in\mathbb{R}$:
$$
\tilde{T}(\mathcal{P}_{\vec{\alpha},\rho})=\mathcal{P}_{T(\vec{\alpha}),\rho}.
$$
This can be easily done thanks to dual maps.
Indeed, for any invertible morphism $\beta(\vec{\alpha})$ whose incidence matrix is equal to the unimodular matrix $B(\vec{\alpha})$, Th.\ \ref{th:image_stepped_plane} yields, by using the fact that $B(\vec{\alpha})^\top=B(\vec{\alpha})$:
\begin{equation}\label{eq:actionbrun}
E_1^*(\beta(\vec{\alpha})^{-1})(\mathcal{P}_{\vec{\alpha},\rho})
=\mathcal{P}_{(B(\vec{\alpha})^{-1})^\top\vec{\alpha},\rho}
=\mathcal{P}_{B(\vec{\alpha}^{-1}\vec{\alpha},\rho}
=\mathcal{P}_{T(\vec{\alpha}),\rho}.
\end{equation}
For example, we can choose for $\beta(\vec{\alpha})$ the Brun substitution $\beta_{a,i,j}$ introduced in Ex.\ \ref{ex:brun_substitutions} (hence the name).
This is the choice we do throughout the present paper.\\

However, we also would like an \emph{effective} and \emph{geometric} definition of $\tilde{T}$ based on ``local'' data: the way the image of a stepped plane $\mathcal{P}$ is defined should rely only on $\mathcal{P}$ (seen as a binary stepped function) and not, as above, on its normal vector (which is not assumed to be known). 
Let us show that this can be done by introducing the notion of \emph{run}, which will allow us to give a multi-dimensional analogue of the fact that factors $00$ and $11$ cannot occur simultaneously in a Sturmian word:

\begin{definition}\label{def:run}
Let $i,j$ in $\{1,\ldots,d\}$ with $i\neq j$.
Let $\mathcal{B}$ be a binary stepped function.
An \emph{$(\vec{e}_j,i)$-run} of $\mathcal{B}$ is a non-zero binary function $\mathcal{R}$ which can be written:
$$
\mathcal{R}=\sum_{k\in I}(\vec{x}+k \vec{e}_j ,i^*),
$$
where $\vec{x}\in\mathbb{Z}^d$ and $I$ is maximal among the intervals of $\mathbb{Z}$ such that $\mathcal{R}\leq\mathcal{B}$.
Such a run is said to be of \emph{type} $(\vec{e}_j,i)$ and to have \emph{length} $\#I$.
\end{definition}

In other words, the geometric interpretation of a length $k$ $(\vec{e}_j,i)$-run of a binary function $\mathcal{B}$ is a maximal sequence of $k$ contiguous faces of type $i$, aligned with the direction $\vec{e}_j$ and included in the geometric interpretation of $\mathcal{B}$ (Fig.\ \ref{fig:runs}).\\

\begin{figure}[hbtp]
\centering
\includegraphics[width=\textwidth]{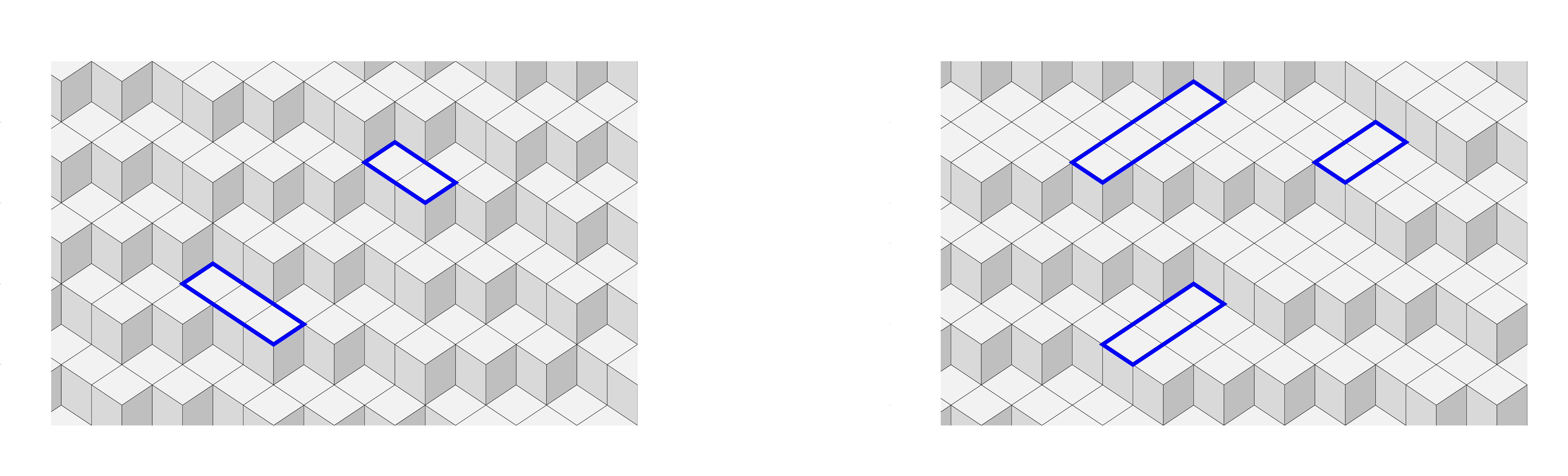}
\caption{Geometric interpretation of a stepped plane with $(\vec{e}_2,3)$-runs of length $2$ and $3$ (left, framed runs) and of a stepped surface with $(\vec{e}_1,3)$-runs of length $2$--$4$ (right, framed runs) -- see also Fig. \ref{fig:faces} for the directions $\vec{e}_i$
 and for the labeling of the faces.
}
\label{fig:runs}
\end{figure}

A stepped plane has $(\vec{e}_j,i)$-runs if and only if its geometric interpretation contains the geometric interpretation of a type $i$ face.
It follows from Def.\ \ref{def:stepped_plane} that this holds if and only if the $i$-th entry of the normal vector of this stepped plane is non-zero.
In this case, we can characterize the length of $(\vec{e}_j,i)$-runs:
\begin{proposition}\label{prop:runs_stepped_plane}
Let $i,j$ in $\{1,\ldots,d\}$ with $i\neq j$.
Let $\mathcal{P}_{\vec{\alpha},\rho}$ be a stepped plane, with $\vec{\alpha}=(\alpha_1,\ldots,\alpha_d)$ and $\alpha_i\neq 0$.
Consider the $(\vec{e}_j,i)$-runs of $\mathcal{P}_{\vec{\alpha},\rho}$.
If $\alpha_j=0$, these runs are all infinite.
If $\frac{\alpha_i}{\alpha_j}\in\mathbb{Z}$, they have all length $\max(\frac{\alpha_i}{\alpha_j},1)$.
Otherwise, there are runs of length $\max(\lfloor\frac{\alpha_i}{\alpha_j}\rfloor,1)$ and runs of length
$\max(\lfloor\frac{\alpha_i}{\alpha_j}\rfloor+1,1)$.
\end{proposition}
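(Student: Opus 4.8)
The plan is to reduce the statement to a one-dimensional count along each line directed by $\vec{e}_j$, and then to an equidistribution-type statement about the ``phase'' attached to such a line as the line varies. Fix $\vec{x}\in\mathbb{Z}^d$ and look at the line $k\mapsto\vec{x}+k\vec{e}_j$, $k\in\mathbb{Z}$. Writing $s=\langle\vec{x}\,|\,\vec{\alpha}\rangle$, Definition~\ref{def:stepped_plane} gives $\mathcal{P}_{\vec{\alpha},\rho}(\vec{x}+k\vec{e}_j,i)=1$ if and only if $\rho-s-\alpha_i\leq k\alpha_j<\rho-s$. Since $s$ does not depend on $k$, the set of admissible $k$ is a single interval of $\mathbb{Z}$ (possibly empty, finite, or bi-infinite); hence each line carries at most one $(\vec{e}_j,i)$-run (Definition~\ref{def:run}), whose length is the number of admissible $k$. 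Moreover, by Definition~\ref{def:stepped_plane} the hypothesis $\alpha_i\neq 0$ is exactly what guarantees that some type-$i$ face occurs, so $(\vec{e}_j,i)$-runs do exist.

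If $\alpha_j=0$, the inequality $\rho-s-\alpha_i\leq 0<\rho-s$ does not involve $k$: for a given line either all $k$ or no $k$ are admissible, so every run is the whole line and is infinite. If $\alpha_j\neq 0$, then $\alpha_j>0$ and $\alpha_i>0$ since $\vec{\alpha}\in\mathbb{R}_+^d$; setting $c=\alpha_i/\alpha_j$ and $u=(\rho-s)/\alpha_j$, the admissible set becomes $\{k\in\mathbb{Z}\mid u-c\leq k<u\}$, of cardinality $N(u):=\lceil u\rceil-\lceil u-c\rceil$. One checks that $N$ is $1$-periodic, that $N\equiv c$ when $c\in\mathbb{Z}$, and that when $c\notin\mathbb{Z}$ one has $N(u)=\lfloor c\rfloor+1$ if $u\bmod 1\in(0,\{c\}]$ and $N(u)=\lfloor c\rfloor$ otherwise. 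When $c\in\mathbb{Z}$ we have $c\geq 1$, every line carries a run, and all runs have length $c=\max(c,1)$. When $c\notin\mathbb{Z}$, every occurring run has length $\lfloor c\rfloor$ or $\lfloor c\rfloor+1$ (the value $0$ being reinterpreted as ``no run'', which is why $\max(\cdot,1)$ appears), and it only remains to prove that both values are attained.

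This attainment is the main obstacle. As $\vec{x}$ ranges over $\mathbb{Z}^d$, the phase $u\bmod 1=(\rho-\langle\vec{x}\,|\,\vec{\alpha}\rangle)/\alpha_j\bmod 1$ ranges over the coset $\theta+H$ of $\mathbb{R}/\mathbb{Z}$, where $\theta=\rho/\alpha_j$ and $H=\frac{1}{\alpha_j}(\mathbb{Z}\alpha_1+\cdots+\mathbb{Z}\alpha_d)\bmod 1$ is a subgroup of $\mathbb{R}/\mathbb{Z}$. Because $\alpha_i$ and $\alpha_j$ both lie in $\mathbb{Z}\alpha_1+\cdots+\mathbb{Z}\alpha_d$, the subgroup $H$ contains $\bar{c}:=c\bmod 1$, which is nonzero since $c\notin\mathbb{Z}$. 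It therefore suffices to prove that a coset of a subgroup of $\mathbb{R}/\mathbb{Z}$ containing a point $\bar{c}\in(0,1)$ meets both the arc $(0,\bar{c}]$ and its complement $\{0\}\cup(\bar{c},1)$. I would prove this by a short drift argument: if $x\in(0,\bar{c}]$, then $\{x-\bar{c}\}$ can lie in $(0,\bar{c}]$ only by wrapping around, in which case $\{x-\bar{c}\}=x+(1-\bar{c})$; iterating $x\mapsto\{x-\bar{c}\}$, the successive points of the coset increase by $1-\bar{c}>0$ at each step as long as they remain in $(0,\bar{c}]$, hence leave it after finitely many steps, so the coset meets $\{0\}\cup(\bar{c},1)$; the symmetric iteration $x\mapsto\{x+\bar{c}\}$ shows the coset meets $(0,\bar{c}]$. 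Consequently the length $\lfloor c\rfloor+1$ is attained, and so is $\lfloor c\rfloor$ whenever $\lfloor c\rfloor\geq 1$; when $\lfloor c\rfloor=0$ (i.e.\ $c<1$) every occurring run has length $1$, in agreement with the statement. This completes the proof.
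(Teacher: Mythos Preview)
Your proof is correct, but the overall route differs from the paper's, especially for the attainment part.

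For the length bound, the paper takes a generic run $I=\{b,\ldots,c\}$ and uses the four boundary conditions (faces at $b,c$ present, faces at $b-1,c+1$ absent) to derive
\[
\frac{\alpha_i}{\alpha_j}-1<c-b+1<\frac{\alpha_i}{\alpha_j}+1,
\]
whence the length is $\lfloor\alpha_i/\alpha_j\rfloor$ or $\lceil\alpha_i/\alpha_j\rceil$. You instead parametrize each $\vec{e}_j$-line by its phase $u=(\rho-\langle\vec{x}|\vec{\alpha}\rangle)/\alpha_j$ and obtain the closed formula $N(u)=\lceil u\rceil-\lceil u-c\rceil$, together with the explicit dichotomy according to whether $u\bmod 1\in(0,\{c\}]$ or not. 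Both arguments are short; yours has the advantage of packaging the problem as a single function of the phase, which you then reuse for the attainment step.

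For attainment of both lengths when $c\notin\mathbb{Z}$, the approaches genuinely diverge. The paper argues by contradiction: if every $(\vec{e}_j,i)$-run had the same length $a$, it looks at the transverse $(\vec{e}_i,j)$-runs (all of length $1$), deduces a global periodic pattern along the $(i,j)$-directions, and from that periodicity extracts $a\alpha_j-\alpha_i=0$, contradicting $c\notin\mathbb{Z}$. Your argument is purely arithmetic: the set of phases $u\bmod 1$ is a coset of the subgroup $H=\frac{1}{\alpha_j}\sum_k\mathbb{Z}\alpha_k\bmod 1$, this subgroup contains $\bar{c}=\{c\}\neq 0$, and a drift under $x\mapsto\{x\pm\bar{c}\}$ forces the coset to meet both arcs $(0,\bar{c}]$ and $\{0\}\cup(\bar{c},1)$. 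Your route avoids introducing the second family of runs and is self-contained within the single line/phase picture; the paper's route is more geometric and incidentally reveals the alternation structure between the two run families. The ``symmetric iteration'' you invoke for meeting $(0,\bar{c}]$ deserves one extra line (the wrap-around lands in $[0,\bar{c})$, and hitting $0$ sends the next iterate to $\bar{c}$), but the argument is sound.
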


\begin{proof}
Since $\alpha_i\neq 0$, then $\mathcal{P}_{\vec{\alpha},\rho}$ has $(\vec{e}_j,i)$-runs.
Let $\mathcal{R}$ be an $(\vec{e}_j,i)$-run of $\mathcal{P}_{\vec{\alpha},\rho}$:
$$
\mathcal{R}=\sum_{k\in I}(\vec{x}+k \vec{e}_j ,i^*),
$$
where $\vec{x}\in\mathbb{Z}^d$ and $I\subset\mathbb{Z}$.
We first assume that $\alpha_j=0$. Then, according to Def.\ \ref{def:stepped_plane}, for any $k\in\mathbb{Z}$, $\langle\vec{x}+k \vec{e}_j|\vec{\alpha}\rangle=\langle\vec{x}|\vec{\alpha}\rangle$, \emph{i.e.}, $\mathcal{P}_{\vec{\alpha},\rho}(\vec{x},i)=\mathcal{P}_{\vec{\alpha},\rho}(\vec{x}+k\vec{e}_j,i)$.
By maximality of $I$, this implies $I=\mathbb{Z}$.\\
Let us now also assume $\alpha_j\neq 0$: $I$ is thus finite, say $I=\{b,\ldots,c\}$.
On the one hand, one checks:
$$
\mathcal{P}_{\vec{\alpha},\rho}(\vec{x}+b\vec{e}_j,i)=1 ~\Rightarrow~ \rho-\langle\vec{x}|\vec{\alpha}\rangle\in (b\alpha_j,b\alpha_j+\alpha_i],
$$
$$
\mathcal{P}_{\vec{\alpha},\rho}(\vec{x}+c\vec{e}_j,i)=1 ~\Rightarrow~ \rho-\langle\vec{x}|\vec{\alpha}\rangle\in (c\alpha_j,c\alpha_j+\alpha_i].
$$
The intervals $(b\alpha_j,b\alpha_j+\alpha_i]$ and $(c\alpha_j,c\alpha_j+\alpha_i]$ thus have
a non-empty intersection.
This yields:
$$
b\alpha_j+\alpha_i>c\alpha_j.
$$
On the other hand, one checks:
$$
\mathcal{P}_{\vec{\alpha},\rho}(\vec{x}+(b-1)\vec{e}_j,i)=0 ~\Rightarrow~ \rho-\langle\vec{x}|\vec{\alpha}\rangle\notin ((b-1)\alpha_j,(b-1)\alpha_j+\alpha_i],
$$
$$
\mathcal{P}_{\vec{\alpha},\rho}(\vec{x}+(c+1)\vec{e}_j,i)=0 ~\Rightarrow~ \rho-\langle\vec{x}|\vec{\alpha}\rangle\notin ((c+1)\alpha_j,(c+1)\alpha_j+\alpha_i].
$$
Since one already knows that $\rho-\langle\vec{x}|\vec{\alpha}\rangle\in(b\alpha_j, b\alpha_j+\alpha_i]$, this yields:
$$
(b-1)\alpha_j+\alpha_i<(c+1)\alpha_j.
$$
The two above inequalities allow us to bound the length $c-b+1$ of $I$:
$$
\frac{\alpha_i}{\alpha_j}-1<c-b+1<\frac{\alpha_i}{\alpha_j}+1,
$$
that is:
$$
c-b+1\in\left\{\left\lfloor\frac{\alpha_i}{\alpha_j}\right\rfloor,\left\lceil\frac{\alpha_i}{\alpha_j}\right\rceil\right\}, 
$$
with $\lfloor x \rfloor$ and $\lceil x \rceil$ standing, respectively, for the floor and the ceiling of $x$.\\
This proves that the $(\vec{e}_j,i)$-runs can only have the two claimed lengths\footnote{Recall that, by definition, runs have length at least $1$.}.\\
It remains to show that these lengths indeed occur.
Remark that for $\alpha_i\leq\alpha_j$ or $\alpha_i/\alpha_j=a\in\mathbb{N}$, both lengths are in fact equal (resp. to $1$ and $a$): there is nothing to show.
Let us now assume that $\alpha_j<\alpha_i$ and $\alpha_i/\alpha_j\notin\mathbb{N}$.
Suppose that all the $(\vec{e}_j,i)$-runs of $\mathcal{P}_{\vec{\alpha},\rho}$ have the same length, say $a$, and let us obtain a contradiction.
Consider the $(\vec{e}_i,j)$-runs of $\mathcal{P}_{\vec{\alpha},\rho}$: according to what preceeds, they exist and can only have length $\max(\lfloor\alpha_j/\alpha_i\rfloor,1)$ and $\max(\lceil\alpha_j/\alpha_i\rceil,1)$, that is, length $1$.
Roughly speaking, this enforces $\mathcal{P}_{\vec{\alpha},\rho}$ to alternate $(\vec{e}_j,i)$-runs of length $a$ and $(\vec{e}_i,j)$-runs of length $1$.
More precisely, $\mathcal{P}_{\vec{\alpha},\rho}$ is larger than the following (infinite) binary stepped function:
$$
\sum_{l\in\mathbb{Z}}\left((\vec{x}+l(a\vec{e}_j-\vec{e}_i),j^*)+\sum_{k=0}^{a-1}(\vec{x}+l(a\vec{e}_j-\vec{e}_i)+k\vec{e}_j,i^*)\right).
$$
In particular, this yields $\mathcal{P}_{\vec{\alpha},\rho}(\vec{x}+l(a\vec{e}_j-\vec{e}_i),i)=1$ for any $l\in\mathbb{Z}$, and thus:
$$
l(a\alpha_j-\alpha_i)<\rho-\langle\vec{x}|\vec{\alpha}\rangle\leq l(a\alpha_j-\alpha_i)+\alpha_i.
$$
By dividing by $l\neq 0$ and by taking the limit when $l$ goes to infinity, we get $a\alpha_j-\alpha_i=0$, \emph{i.e.}, $\alpha_i/\alpha_j=a$.
This contradicts the assumption $\alpha_i/\alpha_j\notin\mathbb{N}$.
 We deduce that $(\vec{e}_i,j)$-runs runs of both lengths $\lfloor \alpha_i/\alpha_j \rfloor$ and $\lceil \alpha_i/\alpha_j \rceil$ occur in $\mathcal{P}_{\vec{\alpha},\rho}$.
\end{proof}

The previous proposition allows us to deduce partial information about the normal vector of a stepped plane from its runs.
For example, one can compare the entries of the (unknown) normal vector $\vec{\alpha}=(\alpha_1,\ldots,\alpha_d)$ of a given stepped plane.
Indeed, one has $\alpha_i>\alpha_j$ if and only if $\lceil\alpha_i/\alpha_j\rceil\geq 2$, \emph{i.e.}, if and only if this stepped plane has an $(\vec{e}_j,i)$-run of length at least $2$.
More precisely, let us show that runs contain enough information to give an effective definition of the map $\tilde{T}$ (discussed in the beginning of this section).
For the sake of generality, we state a definition for binary stepped functions (as for runs -- recall Def.\ \ref{def:run}):

\begin{definition}\label{def:tilde_T}
Let $\tilde{T}$ be the map defined over binary stepped functions as follows.
Let $\mathcal{B}\in\mathfrak{B}$.
Assume that there are $i,j \in\{1,\ldots, d\}$ with $i\neq j$ satisfying the two conditions:
\begin{enumerate}
\item for any $k\neq i$, all $(\vec{e}_i,k)$- and $(\vec{e}_j,k)$-runs (if any) of $\mathcal{B}$ have length $1$;
\item $\mathcal{B}$ has at least one finite $(\vec{e}_j,i)$-run.
\end{enumerate}
Take for $(i,j)$ the lexicographically smallest pair which satisfies Cond. 1 and 2 (with $i \neq j)$, and let $a$ be the length of the shortest $(\vec{e}_j,i)$-run of $\mathcal{B}$.
We set:
$$
\tilde{T}(\mathcal{B})=E_1^*(\beta_{a,i,j}^{-1})(\mathcal{B}).
$$
Otherwise, we set $\tilde{T}(\mathcal{B})=\mathcal{B}$.
\end{definition}

\noindent This defines a map $\tilde{T}$ from $\mathfrak{B}$ to $\mathfrak{F}$ which satisfies the wanted property:

\begin{theorem}\label{th:tilde_T_stepped_plane}
For any $\vec{\alpha}\in\mathbb{R}_+^d\backslash\{\vec{0}\}$ and $\rho\in\mathbb{R}$, one has:
$$
\tilde{T}(\mathcal{P}_{\vec{\alpha},\rho})=\mathcal{P}_{T(\vec{\alpha}),\rho}.
$$
\end{theorem}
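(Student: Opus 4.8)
The plan is to reduce the statement to the already-established identity \eqref{eq:actionbrun}, by showing that the pair $(i,j)$ and the integer $a$ selected in Def.~\ref{def:tilde_T} when applied to the binary stepped function $\mathcal{P}_{\vec{\alpha},\rho}$ are precisely those prescribed by the Brun map on $\vec{\alpha}$. Everything rests on translating the run-length conditions of Def.~\ref{def:tilde_T} into comparisons between the entries of $\vec{\alpha}$, which is exactly the content of Prop.~\ref{prop:runs_stepped_plane}.

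First I would dispose of the terminal case, where $\vec{\alpha}$ has a single nonzero entry $\alpha_m$. Directly from Def.~\ref{def:stepped_plane}, $\mathcal{P}_{\vec{\alpha},\rho}$ then carries only faces of type $m$, and for every $j\neq m$ its $(\vec{e}_j,m)$-runs (when nonzero) are all infinite; hence no pair $(i,j)$ can satisfy Cond.~2 of Def.~\ref{def:tilde_T} (for $i=m$ the runs are infinite, for $i\neq m$ there are no $i$-faces at all), so $\tilde{T}(\mathcal{P}_{\vec{\alpha},\rho})=\mathcal{P}_{\vec{\alpha},\rho}$. Since $\vec{\alpha}\in X_0$ we also have $T(\vec{\alpha})=\vec{\alpha}$, and the equality holds.

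Now assume $\vec{\alpha}$ has at least two nonzero entries. Using Prop.~\ref{prop:runs_stepped_plane} I would prove that a pair $(i,j)$ with $i\neq j$ satisfies Cond.~1 and Cond.~2 of Def.~\ref{def:tilde_T} for $\mathcal{B}=\mathcal{P}_{\vec{\alpha},\rho}$ \emph{if and only if} $\alpha_i=\max_k\alpha_k$ and $\alpha_j=\max_{k\neq i}\alpha_k$. For the forward direction: an $(\vec{e}_i,k)$-run is infinite as soon as $\alpha_i=0<\alpha_k$ and has length $>1$ as soon as $0<\alpha_i<\alpha_k$, so Cond.~1 applied to the $(\vec{e}_i,\cdot)$-runs forces $\alpha_i\neq 0$ and $\alpha_k\le\alpha_i$ for all $k\neq i$, i.e. $\alpha_i$ is a maximal entry; the same reasoning applied to the $(\vec{e}_j,\cdot)$-runs forces $\alpha_j\neq 0$ and $\alpha_k\le\alpha_j$ for all $k\notin\{i,j\}$, i.e. $\alpha_j=\max_{k\neq i}\alpha_k$, and then $\alpha_i,\alpha_j\neq 0$ makes the $(\vec{e}_j,i)$-runs finite, so Cond.~2 holds too. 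For the converse, if $\alpha_i=\max_k\alpha_k$ and $\alpha_j=\max_{k\neq i}\alpha_k$ then $\alpha_i\ge\alpha_j>0$ (two nonzero entries), so every ratio $\alpha_k/\alpha_i$ ($k\neq i$) and $\alpha_k/\alpha_j$ ($k\notin\{i,j\}$) is $\le 1$, hence by Prop.~\ref{prop:runs_stepped_plane} all the corresponding runs have length $1$ (Cond.~1), and the $(\vec{e}_j,i)$-runs are finite since $\alpha_j\neq 0$ (Cond.~2). Consequently such a pair exists, $\tilde{T}$ does not take the ``otherwise'' branch, and the lexicographically smallest admissible pair coincides with the pair $(i,j)$ entering the definition of $B(\vec{\alpha})$; moreover, since $\alpha_i\ge\alpha_j>0$, Prop.~\ref{prop:runs_stepped_plane} gives that the shortest $(\vec{e}_j,i)$-run has length $\max(\lfloor\alpha_i/\alpha_j\rfloor,1)=\lfloor\alpha_i/\alpha_j\rfloor$, which is exactly the integer $a$ in $B(\vec{\alpha})$.

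Putting this together, $\tilde{T}(\mathcal{P}_{\vec{\alpha},\rho})=E_1^*(\beta_{a,i,j}^{-1})(\mathcal{P}_{\vec{\alpha},\rho})$, where $\beta_{a,i,j}$ (Ex.~\ref{ex:brun_substitutions}) has incidence matrix $B(\vec{\alpha})$, so \eqref{eq:actionbrun}---itself a consequence of Th.~\ref{th:image_stepped_plane}, applicable here because $T(\vec{\alpha})=B(\vec{\alpha})^{-1}\vec{\alpha}\in\mathbb{R}_+^d$---yields $\tilde{T}(\mathcal{P}_{\vec{\alpha},\rho})=\mathcal{P}_{T(\vec{\alpha}),\rho}$. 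I expect the main obstacle to be the bookkeeping in the ``if and only if'' characterization of the third paragraph: one must carefully account for runs that fail to exist (a zero entry, or a direction index equal to the face-type index), for the convention that runs have length at least one, and for the tie-breaking by smallest index, so that the pair produced by Def.~\ref{def:tilde_T} matches the Brun pair exactly and not merely up to the set of admissible pairs.
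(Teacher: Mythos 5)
Your proposal is correct and follows essentially the same route as the paper: use Prop.~\ref{prop:runs_stepped_plane} to translate the run-length conditions of Def.~\ref{def:tilde_T} into the statement that $(i,j,a)$ are exactly the Brun data of $\vec{\alpha}$, then conclude via Th.~\ref{th:image_stepped_plane} (equivalently \eqref{eq:actionbrun}). The only difference is organizational --- you treat the terminal case up front and prove a full ``if and only if'' characterization of admissible pairs, whereas the paper argues the two cases in the contrapositive and only the implications it needs --- but the mathematical content is the same.
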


\begin{proof}
Let us consider a stepped plane $\mathcal{P}_{\vec{\alpha},\rho}$, and let us write $\vec{\alpha}=(\alpha_1,\ldots,\alpha_d)$.
First, let us assume that there are $i$ and $j$ satisfying the two conditions stated in Def.\ \ref{def:tilde_T}.
On the one hand, the first condition yields\footnote{with $\alpha_i\geq\alpha_j$ following from the $k=j$ case.}:
$$
\forall k,~\alpha_i\geq\alpha_j\geq\alpha_k.
$$
On the other hand, the second condition ensures $\alpha_j\neq 0$.
We deduce from Prop.\ \ref{prop:runs_stepped_plane} that $a=\lfloor \alpha_i/\alpha_j\rfloor$.
The result then follows from Th.\ \ref{th:image_stepped_plane} and from the definition of the Brun map $T$.\\
Let us now assume that there is no $i$ and $j$ satisfying the two conditions.
Let $\alpha_i$ and $\alpha_j$ be respectively the first and the second largest entry of $\vec{\alpha}$.
As previously, it follows from Prop.\ \ref{prop:runs_stepped_plane} that $i$ and $j$ satisfy the first condition.
They thus cannot satisfy the second one, that is, there is no finite $(\vec{e}_j,i)$-run.
Prop.\ \ref{prop:runs_stepped_plane} thus yields $\alpha_j=0$, that is, $\vec{\alpha}$ has only one non-zero entry.
In such a case, the definition of $T$ yields $T(\vec{\alpha})=\vec{\alpha}$.
The result follows.
\end{proof}


\begin{example}\label{ex:runs_stepped_plane}
Consider the stepped plane corresponding to Fig.\ \ref{fig:runs} (left).
One has:
\begin{center}
\begin{tabular}{|c||c|c|c|c|c|c|}
\hline
runs & $(\vec{e}_1,2)$ & $(\vec{e}_2,1)$ & $(\vec{e}_1,3)$ & $(\vec{e}_3,1)$ & $(\vec{e}_2,3)$ & $(\vec{e}_3,2)$\\ 
\hline
length & $1$ & $1,2$ & $1$ & $1,2$ & $2,3$ & $1$\\
\hline
\end{tabular}
\end{center}
The lexicographically smallest pair $(i,j)$ satisfying the two conditions stated in Def.\ \ref{def:tilde_T} is $(1,3)$.
Then, since the smallest $(\vec{e}_3,1)$-run has length $1$, $\tilde{T}$ will act on this stepped plane as the dual map $E_1^*(\beta_{1,1,3}^{-1})$.
\end{example}

\noindent Fig.\ \ref{fig:stepped_plane_exp} depicts some steps of the Brun expansion of a stepped plane with normal vector $\vec{\alpha}=(1,\frac{1}{19},\frac{25}{76})$.\\

\begin{figure}[hbtp]
\centering
\includegraphics[width=\textwidth]{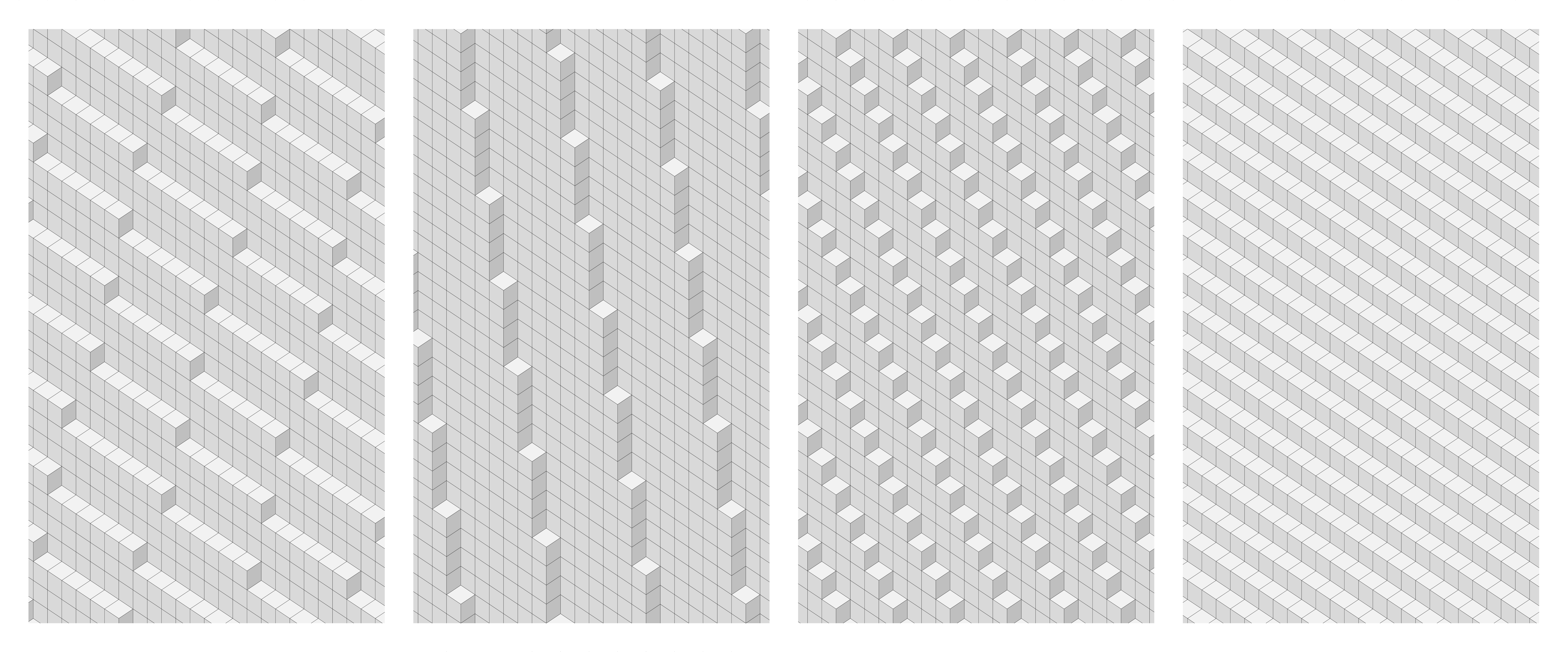}
\caption{
Iterated applications of the map $\tilde{T}$ on a stepped plane (from left to right).
}
\label{fig:stepped_plane_exp}
\end{figure}

\noindent Let us end by a simple remark:

\begin{remark}
Computing expansions of stepped planes as described in this section does not hold only for the Brun algorithm, but for any multi-dimensional continued fraction algorithm such that the matrix $B(\vec{\alpha})$ can be obtained by performing comparisons of entries of $\vec{\alpha}$ and by computing integer parts of quotients of these entries.
This turns out to concern most of known algorithms, as the Jacobi-Perron, Selmer or Poincar\'e ones.
\end{remark}

\subsection{Brun expansions of stepped surfaces}
\label{sec:brun_exp_stepped_surfaces}

In the previous sections, an effective way to compute the Brun expansion of a stepped plane has been provided, thanks to the map $\tilde{T}$ (Def.\ \ref{def:tilde_T}, Th.\ \ref{th:tilde_T_stepped_plane}).
Here, we show that the fact that $\tilde{T}$ has not been defined only over stepped planes but over binary stepped functions -- among them stepped surfaces -- allows us to extend the notion of Brun expansion to stepped surfaces.
The keypoint is the following result:

\begin{theorem}\label{th:tilde_T_stepped_surface}
The image under $\tilde{T}$ of any stepped surface is a stepped surface.
\end{theorem}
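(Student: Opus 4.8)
The plan is to reduce this to the results already established for dual maps. By Definition~\ref{def:tilde_T}, the map $\tilde{T}$ acts on a binary stepped function $\mathcal{B}$ either as the identity (in which case a stepped surface is trivially mapped to a stepped surface) or as the dual map $E_1^*(\beta_{a,i,j}^{-1})$ for some choice of $a\geq 1$ and $i\neq j$ determined by the runs of $\mathcal{B}$. So let $\mathcal{S}$ be a stepped surface which falls into the second case, and let $\tilde{T}(\mathcal{S})=E_1^*(\beta_{a,i,j}^{-1})(\mathcal{S})$. Since $\beta_{a,i,j}$ is unimodular (Ex.~\ref{ex:brun_substitutions}), so is $\beta_{a,i,j}^{-1}$, and Th.~\ref{th:image_stepped_surface} tells us that $E_1^*(\beta_{a,i,j}^{-1})(\mathcal{S})$ is a stepped surface \emph{provided} it is a binary stepped function. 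Thus the whole proof reduces to showing that $\tilde{T}(\mathcal{S})\in\mathfrak{B}$, i.e.\ that applying $E_1^*(\beta_{a,i,j}^{-1})$ to $\mathcal{S}$ produces only weights $0$ and $1$.

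To check binarity, first recall from Ex.~\ref{ex:brun_substitutions} the explicit formula for $E_1^*(\beta_{a,i,j}^{-1})$ on faces: it fixes faces of type $k\notin\{i,j\}$, sends $(\vec{x},i^*)$ to the single face $(M^{-1}\vec{x}+a\vec{e}_i,j^*)$ (where $M=B$ is the incidence matrix), and sends $(\vec{x},j^*)$ to the ``staircase'' $(M^{-1}\vec{x},i^*)-\sum_{k=0}^{a-1}(M^{-1}\vec{x}+k\vec{e}_i,j^*)$. Negative weights can therefore arise only from the $(\vec{x},j^*)$ images, and the $-1$ weights they contribute must be cancelled. The key observation is that Cond.~1 of Def.~\ref{def:tilde_T} (every $(\vec{e}_i,k)$- and $(\vec{e}_j,k)$-run of $\mathcal{S}$ with $k\neq i$ has length $1$) is precisely what is needed to control the local geometry of $\mathcal{S}$ around each face of type $i$ or $j$: it forces, around any type-$j$ face of $\mathcal{S}$, the presence of exactly the configuration of type-$i$ and type-$j$ faces that makes the negative terms of its image cancel with positive terms coming from neighbouring faces (and similarly around type-$i$ faces, using $a\leq$ the length of the relevant run, which holds since $a$ is defined as the \emph{shortest} $(\vec{e}_j,i)$-run length). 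Concretely, I would verify that the only configurations of faces of $\mathcal{S}$ adjacent along directions $\vec{e}_i,\vec{e}_j$ allowed by Cond.~1 are exactly the ones appearing as blocks in $E_1^*(\beta_{a,i,j})$ applied to single faces (cf.\ Fig.~\ref{fig:brun_substitutions}), so that $\mathcal{S}$ locally looks like the $E_1^*(\beta_{a,i,j})$-image of something binary, whence its $E_1^*(\beta_{a,i,j}^{-1})$-image is binary; here one uses $E_1^*(\beta_{a,i,j}^{-1})\circ E_1^*(\beta_{a,i,j})=\mathrm{id}$ via \eqref{eq:compo}.

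An alternative and perhaps cleaner route for the binarity step, avoiding a long case analysis: decompose $\mathcal{S}$ as a stepped plane plus flips. Since the run condition of Def.~\ref{def:tilde_T} only constrains runs in the directions $\vec{e}_i,\vec{e}_j$, one cannot directly read off a normal vector, but one can still run the argument of Th.~\ref{th:image_stepped_surface}: write $\mathcal{S}=\mathcal{P}_{\vec{\alpha},0}+\lim\sum\varepsilon_k\mathcal{F}_{\vec{x}_k}$ for a suitable positive $\vec{\alpha}$ (available by Prop.~\ref{prop:pseudo_flip_acc}), apply $E_1^*(\beta_{a,i,j}^{-1})$, and use Th.~\ref{th:image_stepped_plane} plus Prop.~\ref{prop:image_flip} to get that $\tilde{T}(\mathcal{S})$ is pseudo-flip-accessible from the \emph{binary} object $\mathcal{P}_{B\vec{\alpha},0}$ (once we know $B\vec{\alpha}\in\mathbb{R}_+^d$). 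Then Prop.~\ref{prop:binary_pseudo_flip_acc} gives that $\tilde{T}(\mathcal{S})$ is a stepped surface as soon as it is binary and non-zero. The remaining content is thus exactly: show $\tilde{T}(\mathcal{S})$ is binary. I expect \emph{this} to be the main obstacle — translating the run hypotheses of Def.~\ref{def:tilde_T} (which are stated purely in terms of local configurations of $\mathcal{S}$, with no planarity assumption) into a guarantee that the staircase cancellations in the dual-map formula actually happen. The plane case worked because Prop.~\ref{prop:runs_stepped_plane} pinned down run lengths from the normal vector; for a general stepped surface one must instead argue directly on neighbouring faces, and the care needed is to rule out a type-$j$ face of $\mathcal{S}$ whose image's negative weights are not matched, which is precisely excluded by Cond.~1 forbidding length-$\geq 2$ runs of the obstructing types.
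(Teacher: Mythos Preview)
Your overall architecture is right and coincides with the paper's: reduce to showing $E_1^*(\beta_{a,i,j}^{-1})(\mathcal{S})\in\mathfrak{B}$, then invoke Th.~\ref{th:image_stepped_surface}. The paper carries this out exactly along the lines of your first route: it proves that every face $(\vec{x},j^*)$ of $\mathcal{S}$ is accompanied by $(\vec{x},i^*)$ and hence by the whole block $\mathcal{C}_{\vec{x}}=(\vec{x},j^*)+\sum_{\ell=0}^{a-1}(\vec{x}-\ell\vec{e}_j,i^*)$ (this is precisely $E_1^*(\beta_{a,i,j})(\vec{y},i^*)$ for an appropriate $\vec{y}$), then decomposes $\mathcal{S}$ disjointly into such blocks, leftover type-$i$ faces, and faces of type $k\notin\{i,j\}$, computes that each piece maps to a single face, and finally checks that these image faces are pairwise distinct (using invertibility of the incidence matrix). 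So your instinct that ``$\mathcal{S}$ locally looks like the $E_1^*(\beta_{a,i,j})$-image of something binary'' is exactly the mechanism.

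There is, however, a genuine gap in how you propose to establish the key local step. You attribute the forced configuration around a type-$j$ face entirely to Cond.~1 of Def.~\ref{def:tilde_T}. That is not enough: Cond.~1 is a statement about run lengths of an arbitrary \emph{binary stepped function}, and by itself does not force $(\vec{x},j^*)$ to be adjacent to $(\vec{x},i^*)$ rather than, say, to nothing at all or to a face whose projection overlaps it. The paper's argument here uses the \emph{stepped surface} hypothesis in an essential way: one looks at the $(d-2)$-dimensional facet of $(\vec{x},j^*)$ in the $\vec{e}_i$-direction, uses the homeomorphism with $\Delta$ to guarantee a neighbouring face exists, uses Cond.~1 to rule out the neighbour being $(\vec{x}+\vec{e}_i,j^*)$, and uses the projection condition again to rule out $(\vec{x}+\vec{e}_j,i^*)$. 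Only then does $(\vec{x},i^*)$ remain. Your plan should make this dependence on Def.~\ref{def:stepped_surface} explicit. A second, smaller point: your discussion focuses on cancelling the $-1$ contributions from type-$j$ images, but binarity also requires that no weight exceeds $1$; the paper handles this separately by checking that the single-face images of distinct blocks land at distinct locations. Finally, your alternative route via pseudo-flips is, as you yourself note, circular: it reproduces the proof of Th.~\ref{th:image_stepped_surface} and still leaves the binarity of $\tilde{T}(\mathcal{S})$ as the outstanding obligation, so it can be dropped.
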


\begin{proof}
Let $\mathcal{S}$ be a stepped surface.
If $\tilde{T}(\mathcal{S})=\mathcal{S}$, then there is nothing to prove.
Otherwise, let $i$ and $j$ be distinct integers satisfying the two conditions stated in Def.\ \ref{def:tilde_T}.
Let then $a$ be the length of the shortest $(\vec{e}_j,i)$-run of $\mathcal{S}$.
Thus, by definition, $\tilde{T}(\mathcal{S})=E_1^*(\beta_{a,i,j}^{-1})(\mathcal{S})$.\\
Let us prove that, for all $\vec{x}\in \mathbb{Z}^d$ such that $(\vec{x},j^*)$ has weight one in $\mathcal{S}$, $(\vec{x},i^*)$ has also weight one in $\mathcal{S}$.
Assume that $(\vec{x},j^*)$ has weight one in $\mathcal{S}$.
Consider the $(d-2)$-dimensional facet of the geometric interpretation of $(\vec{x},j^*)$ that contains the vectors $\vec{x}+ \vec{e}_i+\vec{e_j}+ \varepsilon_k \vec{e}_k$, with $\varepsilon_k \in \{0,1\}$, $k \in \{1,\ldots,d\}$ and $k \notin\{i,j\}$.
This $(d-2)$-dimensional facet is contained in some other facet of unit hypercube which is the geometric interpretation of some face $(\vec{y},\ell^*)$ with weight one in $\mathcal {S}$.
Necessarily, one has $\ell=i$ or $\ell=j$.
If $\ell=j$, then $\vec{y}=\vec{x}+\vec{e}_i$.
But this contradicts the fact that $(\vec{e}_i,j)$-runs have length $1$.
If $\ell=i$, then either $\vec{y}=\vec{x}$ or $\vec{y}=\vec{x}+\vec{e}_j$.
The latter case is impossible because the intersection of the projections of the geometric interpretation of faces $(\vec{x}, j^*)$ and $(\vec{x}+\vec{e}_j, i^*)$ would have a non-empty interior, which would contradict Def.\ \ref{def:stepped_surface}.
Thus, one has proven that $\vec{y}=\vec{x}$, \emph{i.e.}, each face of $\mathcal{S}$ of type $j$ located at some $\vec{x}\in\mathbb{Z}^d$ is ``followed'' by the face $(\vec{x}, i^*)$, and thus by the faces $(\vec{x}-k\vec{e}_j,i^*)$, for $0\leq k<a$.\\
We can now write $\mathcal{S}$ as a disjoint sum of binary stepped functions $\mathcal{C}_{\vec{x}}=(\vec{x},j^*)+\sum_{\ell=0}^{a-1}(\vec{x}-\ell\vec{e}_j,i^*)$, of faces of type $k\notin\{i,j\}$, and of faces of type $i$ (those occuring in the $(\vec{e}_j,i)$-runs of length larger than or equal to $a$).
This is illustrated in Fig.\ \ref{fig:decompose}.
Then, one computes (recall Ex.\ \ref{ex:brun_substitutions}):
$$
E_1^*(\beta_{a,i,j}^{-1})~:~\left\{\begin{array}{cll}
\mathcal{C}_{\vec{x}} & \mapsto & (B(a\vec{e}_i+\vec{e}_j)\vec{x},i^*),\\
(\vec{x},k^*) & \mapsto & (B(a\vec{e}_i+\vec{e}_j)\vec{x},k^*),\\
(\vec{x},i^*) & \mapsto & (B(a\vec{e}_i+\vec{e}_j)\vec{x}+a\vec{e}_i,j^*).
\end{array}\right.
$$
Since $B(a\vec{e}_i+\vec{e}_j)$ is invertible, any two of these image faces have different locations whenever they have the same type.
This yields that $\tilde{T}(\mathcal{S})=E_1^*(\beta_{a,i,j}^{-1})(\mathcal{S})$ is a binary stepped function, and thus a stepped surface according to Th.\ \ref{th:image_stepped_surface}.
\end{proof}

\begin{figure}[hbtp]
\centering
\includegraphics[width=\textwidth]{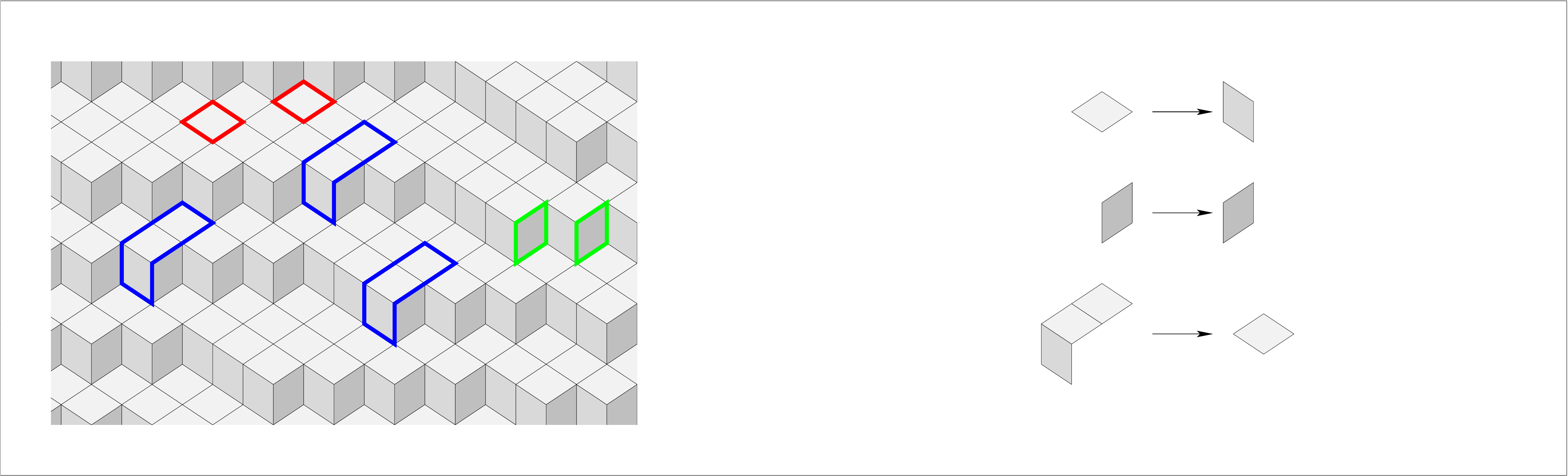}
\caption{
A stepped surface which has $(\vec{e}_j,i)$-runs of length larger than or equal to $a$ can be written as a disjoint sum of binary stepped functions $\mathcal{C}_{\vec{x}}=(\vec{x},j^*)+\sum_{\ell=0}^{a-1}(\vec{x}-\ell\vec{e}_j,i^*)$, of faces of type $k\notin\{i,j\}$, and of faces of type $i$ (left, with $(i,j)=(3,1)$ and $a=2$).
The image of these three particular binary stepped functions (right) turning out to be disjoint, the image of the whole stepped surface is a binary stepped function.
}
\label{fig:decompose}
\end{figure}

\begin{example}\label{ex:runs_stepped_surface}
Consider the stepped surface of Fig.\ \ref{fig:runs} (right) or \ref{fig:decompose}.
One has:
\begin{center}
\begin{tabular}{|c||c|c|c|c|c|c|}
\hline
runs & $(\vec{e}_1,2)$ & $(\vec{e}_2,1)$ & $(\vec{e}_1,3)$ & $(\vec{e}_3,1)$ & $(\vec{e}_2,3)$ & $(\vec{e}_3,2)$\\ 
\hline
length & $1$ & $1,2,3,5$ & $2,3,4$ & $1$ & $2,3,4,6$ & $1$\\
\hline
\end{tabular}
\end{center}
The lexicographically smallest pair $(i,j)$ satisfying the two conditions stated in Def.\ \ref{def:tilde_T} is $(3,1)$.
Then, since the smallest $(\vec{e}_1,3)$-run has length $2$, $\tilde{T}$ will act on this stepped surface as the dual map $E_1^*(\beta_{2,1,3}^{-1})$.
According to Th.\ \ref{th:tilde_T_stepped_surface}, the resulting image is a stepped surface.
\end{example}

\noindent We can then naturally define the \emph{Brun expansion} of a stepped surface:

\begin{definition}\label{def:stepped_surface_exp}
The \emph{Brun expansion} of a stepped surface $\mathcal{S}$ is the sequence of matrices $(B_n)_{n\geq 1}$ defined as follows.
If $\tilde{T}^{n+1}(\mathcal{S})=\tilde{T}^n(\mathcal{S})$, then $B_n$ is the $d\times d$ identity matrix $I_d$.
Otherwise, $B_n$ is the incidence matrix of the Brun substitution $\beta_n$ such that $\tilde{T}^{n+1}(\mathcal{S})=E_1^*(\beta_n^{-1})(\tilde{T}^n(\mathcal{S}))$.
The \emph{length} of this Brun expansion is the smallest $n$ such that $B_n=I_d$.
\end{definition}

In particular, if $\mathcal{S}$ is a stepped plane, then we recover the notion of Brun expansion introduced in the previous section.
Fig.\ \ref{fig:stepped_surface_exp} depicts some iterations of the map $\tilde{T}$ on a stepped surface. These stepped surfaces are not stepepd planes. Indeed, the second iterate (i.e., the third picture starting from the left)
has $(\vec{e}_2,2^*)$ runs of three different lengths.

\begin{figure}[hbtp]
\centering
\includegraphics[width=\textwidth]{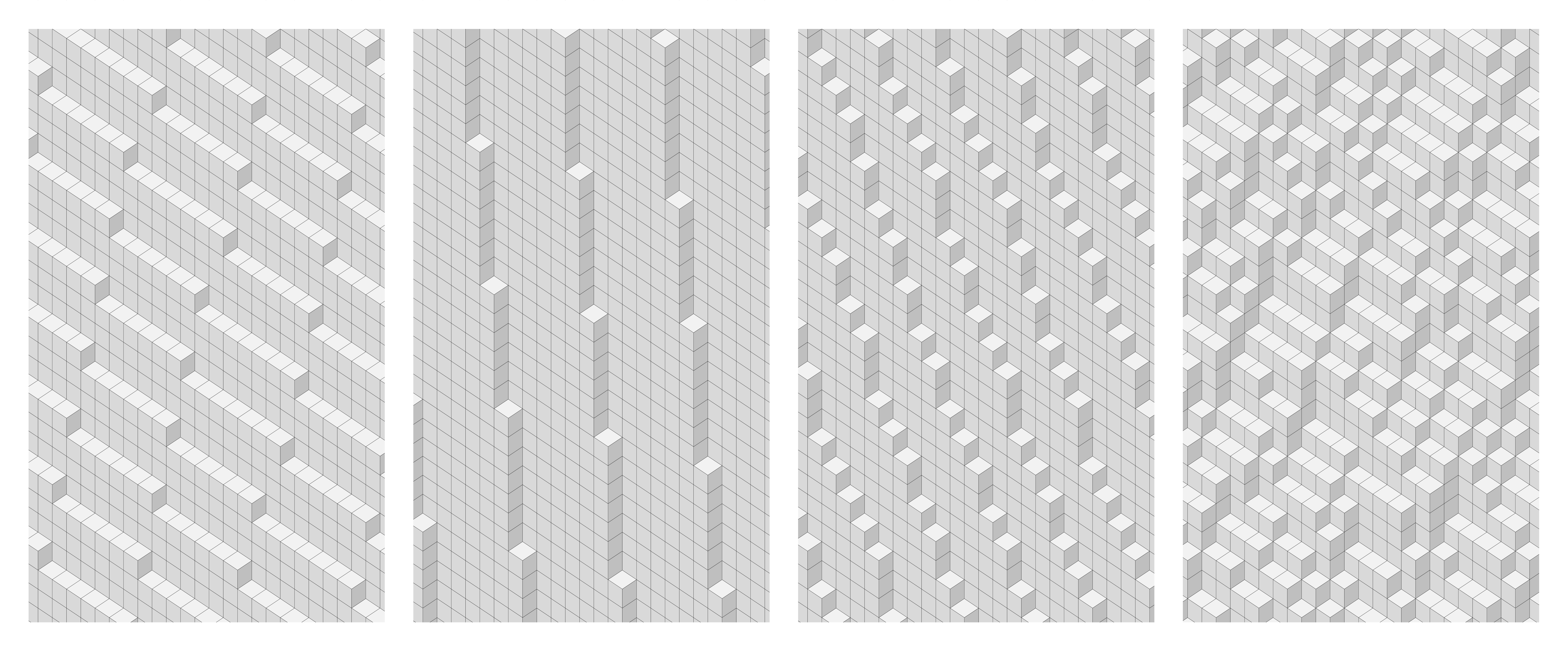}
\caption{Iterated applications of the map $\tilde{T}$ on a stepped surface (from left to right).
Note that stepped surfaces are more and more bumby, due to the contracting effect of $\tilde{T}$.
}
\label{fig:stepped_surface_exp}
\end{figure}

\begin{remark}
Since $\tilde{T}$ has been defined over binary stepped functions, it seems more natural to extend the notion of Brun expansion not to stepped surfaces but to binary stepped functions.
However, one checks that the image under $\tilde{T}$ of a binary stepped function is not always binary, \emph{i.e.}, Th.\ \ref{th:tilde_T_stepped_surface} cannot be extended.
This prevents us from defining Brun expansions of general binary stepped functions.
\end{remark}

\subsection{Common expansions}
\label{sec:common}

According to Th.\ \ref{th:tilde_T_stepped_plane}, a real vector and a stepped plane share the same Brun expansion if and only if the former is the normal vector of the latter.
However, there is no such obvious link for stepped surfaces.
We thus address the following question: what is the set $\mathfrak{S}_{\vec{\alpha}}$ of the stepped surfaces having the same Brun expansion as $\vec{\alpha}\in\mathbb{R}_+^d\backslash\{\vec{0}\}$?\\

\noindent We first introduce \emph{stepped quasi-planes}, that will play a specific role with respect to the previous question:

\begin{definition}\label{def:stepped_quasi_plane}
Let $\vec{\alpha}\in\mathbb{R}_+^d\backslash\{\vec{0}\}$ and $\rho\in\mathbb{R}$.
Let $D\subset\{\vec{x}\in\mathbb{Z}^d~|~\langle\vec{x}|\vec{\alpha}\rangle=\rho\}$.
If the following stepped function $ \mathcal{Q}$ is a stepped surface, then it is said to be a \emph{stepped quasi-plane} of \emph{normal vector} $\vec{\alpha}$, \emph{intercept} $\rho$ and \emph{defect} $D$:
$$
\mathcal{Q}=\mathcal{P}_{\vec{\alpha},\rho}+\sum_{\vec{x}\in D}\mathcal{F}_{\vec{x}}.
$$
\end{definition}

The difference between a stepped plane and a stepped quasi-plane which have the same normal vector and intercept thus consists of some flips lying in a set containing some of the vertices of the geometric realisation of the stepped plane (the ``defect''), as illustrated in Fig.\ \ref{fig:stepped_quasi_plane}.\\

\begin{figure}[hbtp]
\centering
\includegraphics[width=\textwidth]{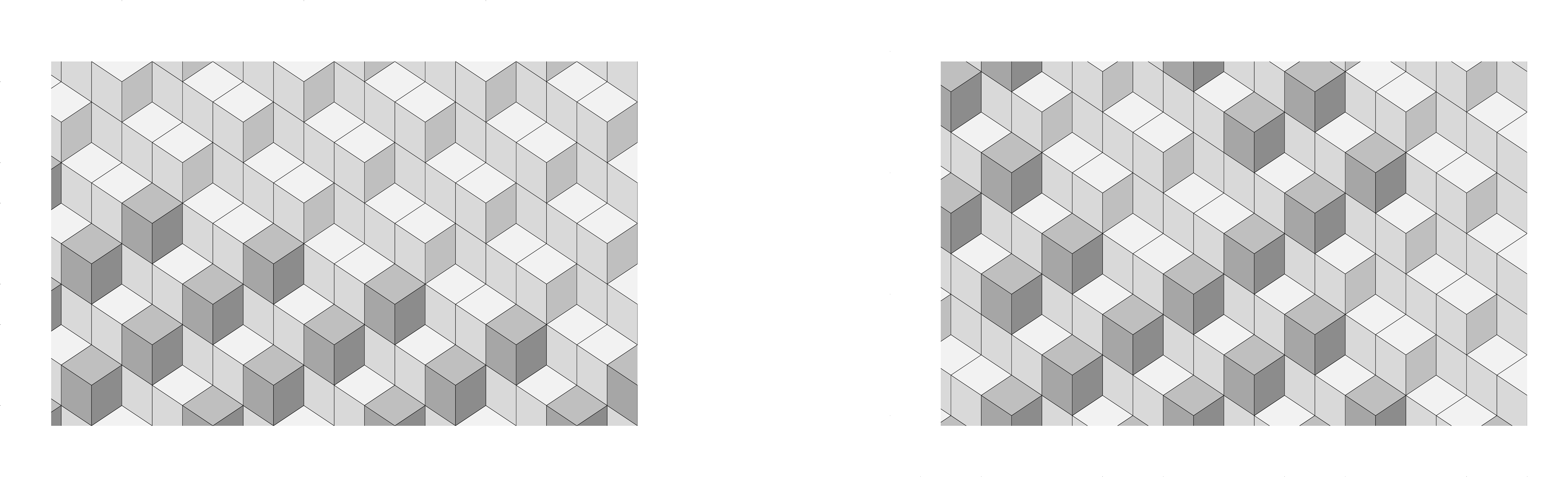}
\caption{Two stepped quasi-planes of normal vector $(3,1,2)$, with different defects (corresponding to the shaded upper facets of unit cubes).}
\label{fig:stepped_quasi_plane}
\end{figure}

Note that, in Def.\ \ref{def:stepped_quasi_plane}, not any defect $D$ suits: in some cases, one gets non-binary stepped functions.
Consider, for example, the case of a single flip added to a stepped plane of normal vector $\vec{e}_i$: since this stepped plane has only type $i$ faces, the added flip yields negative-weighted faces.
However, there is a rather simple characterization of suitable defects:

\begin{proposition}\label{prop:suitable_defect}
Let $\vec{\alpha}$, $\rho$, $D$ and $\mathcal{Q}$ be as in Def.\ \ref{def:stepped_quasi_plane}.
Then, $\mathcal{Q}$ is a stepped surface (and thus a stepped quasi-plane) if and only if one has, for any $i$:
$$
\langle \vec{\alpha}|\vec{e}_i\rangle=0 ~\Rightarrow~ D\subset D+\vec{e}_i.
$$
\end{proposition}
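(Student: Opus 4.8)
The plan is to analyze when a negative-weighted face can appear in $\mathcal{Q}=\mathcal{P}_{\vec{\alpha},\rho}+\sum_{\vec{x}\in D}\mathcal{F}_{\vec{x}}$, and to show this happens precisely when the stated condition fails. Recall from Definition~\ref{def:flip} that $\mathcal{F}_{\vec{x}}=\sum_i(\vec{x},i^*)-\sum_i(\vec{x}-\vec{e}_i,i^*)$, so a flip at $\vec{x}$ can only \emph{decrement} the weight of a face of the form $(\vec{x}-\vec{e}_i,i^*)$, and since the points of $D$ all lie on the hyperplane $\langle\,\cdot\,|\vec{\alpha}\rangle=\rho$, these flips are ``disjoint'' in the sense that no two distinct $\vec{x},\vec{x}'\in D$ can contribute $\pm1$ to the same face (indeed $\vec{x}=\vec{x}'-\vec{e}_k$ would force $\langle\vec{x}|\vec{\alpha}\rangle<\langle\vec{x}'|\vec{\alpha}\rangle$ unless $\alpha_k=0$, a case to be handled separately). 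So a face $(\vec{y},i^*)$ gets weight $<0$ in $\mathcal{Q}$ if and only if $\mathcal{P}_{\vec{\alpha},\rho}(\vec{y},i)=0$ and $\vec{y}=\vec{x}-\vec{e}_i$ for some (then unique, if $\alpha_i\neq0$) $\vec{x}\in D$, while no flip at $\vec{y}$ restores it; one checks $\mathcal{P}_{\vec{\alpha},\rho}(\vec{x},i)=0$ as well (since $\vec{x}\in D$ means $\langle\vec{x}|\vec{\alpha}\rangle=\rho$, violating the strict inequality in Definition~\ref{def:stepped_plane}), so the flip at $\vec{x}$ does not help either.

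First I would prove the forward implication (contrapositive). Suppose that for some $i$ we have $\alpha_i=0$ and $D\not\subset D+\vec{e}_i$, i.e.\ there is $\vec{x}\in D$ with $\vec{x}-\vec{e}_i\notin D$. Since $\alpha_i=0$, the points $\vec{x}$ and $\vec{x}-\vec{e}_i$ lie on the same hyperplane, and one computes from Definition~\ref{def:stepped_plane} that $\mathcal{P}_{\vec{\alpha},\rho}(\vec{x}-\vec{e}_i,i)=0$ (the condition $\langle\vec{x}-\vec{e}_i|\vec{\alpha}\rangle<\rho$ fails, as it equals $\rho$). Then the flip at $\vec{x}$ contributes $-1$ to the weight of $(\vec{x}-\vec{e}_i,i^*)$, and no other flip in $D$ contributes $+1$ there (the only candidate would be a flip at $\vec{x}-\vec{e}_i\in D$, excluded by hypothesis). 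Hence $\mathcal{Q}(\vec{x}-\vec{e}_i,i)=-1<0$, so $\mathcal{Q}\notin\mathfrak{B}$ and in particular is not a stepped surface. I must also separately dispose of the case $\alpha_i\neq0$: here one checks that whenever $\vec{x}\in D$, the face $(\vec{x}-\vec{e}_i,i^*)$ already has weight $1$ in $\mathcal{P}_{\vec{\alpha},\rho}$ (because $\langle\vec{x}-\vec{e}_i|\vec{\alpha}\rangle<\rho\le\langle\vec{x}|\vec{\alpha}\rangle=\rho$ holds, using $\alpha_i>0$), so decrementing it gives weight $0$, not $-1$ — no problem arises, which is why the condition only concerns indices $i$ with $\alpha_i=0$.

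For the converse, assume the condition holds for every $i$. I would show $\mathcal{Q}\in\mathfrak{B}$, and then invoke Proposition~\ref{prop:binary_pseudo_flip_acc}: since $\mathcal{Q}$ is a non-zero binary stepped function (non-zero because it is ``mostly'' the stepped plane $\mathcal{P}_{\vec{\alpha},\rho}$, only finitely or countably perturbed, and $\vec{\alpha}\neq\vec{0}$ guarantees $\mathcal{P}_{\vec{\alpha},\rho}$ has faces away from $D$) and, by construction, pseudo-flip-accessible from the stepped surface $\mathcal{P}_{\vec{\alpha},\rho}\in\mathfrak{S}$ (label the elements of $D$ by increasing norm to get the required convergent sequence of Definition~\ref{def:pseudo_flip_acc}), it follows that $\mathcal{Q}$ is a stepped surface. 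To verify $\mathcal{Q}\in\mathfrak{B}$, I go through each face $(\vec{y},i^*)$: its weight is $\mathcal{P}_{\vec{\alpha},\rho}(\vec{y},i)$ plus the contributions $+1$ from a flip at $\vec{y}$ (if $\vec{y}\in D$) and $-1$ from a flip at $\vec{y}+\vec{e}_i$ (if $\vec{y}+\vec{e}_i\in D$). Using the disjointness observed above and the case analysis on whether $\alpha_i=0$, the only way to leave $\{0,1\}$ would be to subtract $1$ from a weight-$0$ face with no compensating addition; the only weight-$0$ faces $(\vec{y},i^*)$ with $\vec{y}+\vec{e}_i\in D$ and $\alpha_i=0$ satisfy $\vec{y}=\vec{x}-\vec{e}_i$ for $\vec{x}\in D$, and then the hypothesis $D\subset D+\vec{e}_i$ gives $\vec{y}=\vec{x}-\vec{e}_i\in D$, so a flip at $\vec{y}$ adds the compensating $+1$ and the weight is $0$; symmetrically for weight-$1$ faces and the $w>1$ case. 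Hence all weights stay in $\{0,1\}$, so $\mathcal{Q}\in\mathfrak{B}$, completing the proof.

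The main obstacle I anticipate is the bookkeeping around the index $i$ with $\alpha_i=0$: one must carefully check, via Definition~\ref{def:stepped_plane}, exactly which faces adjacent to a defect point $\vec{x}\in D$ already carry weight $0$ or $1$ in $\mathcal{P}_{\vec{\alpha},\rho}$ before the flips are added, since a point on the hyperplane sits at the boundary of several defining strict/non-strict inequalities; getting these boundary cases right (and confirming that distinct defect points never interfere) is where the argument really lives.
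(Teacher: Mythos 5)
Your proposal is correct and follows essentially the same route as the paper: both reduce to the pointwise identity $\mathcal{Q}(\vec{y},i)=\mathcal{P}_{\vec{\alpha},\rho}(\vec{y},i)+[\vec{y}\in D]-[\vec{y}+\vec{e}_i\in D]$, rule out the weight $-1$ and weight $2$ configurations via the defining inequalities of the stepped plane (noting that the $-1$ case forces $\alpha_i=0$ and is precisely what the hypothesis $D\subset D+\vec{e}_i$ excludes, while the weight $2$ case is impossible unconditionally), and then invoke Prop.~\ref{prop:binary_pseudo_flip_acc} to upgrade ``binary'' to ``stepped surface''. The paper merely organizes the same case analysis as an eight-row table, so no further comparison is needed.
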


\begin{proof}
For $(\vec{x},i)\in\mathbb{Z}^d\times\{1,\ldots,d\}$, one computes:
\begin{center}
\begin{tabular}{|c|c|c||c|}
\hline
$\mathcal{P}_{\vec{\alpha},\rho}(\vec{x},i)$ & $\vec{x}\in D$ ? & $\vec{x}+\vec{e}_i\in D$ ?& $\mathcal{Q}(\vec{x},i)$\\
\hline
0 & no & no & 0\\
0 & no & yes & -1\\
0 & yes & no & 1\\
0 & yes & yes & 0\\
1 & no & no & 1\\
1 & no & yes & 0\\
1 & yes & no & 2\\
1 & yes & yes & 1\\
\hline
\end{tabular}
\end{center}
Assume that $\langle \vec{\alpha}|\vec{e}_i\rangle=0$ yields $D\subset D+\vec{e}_i$.
The above table shows that $\mathcal{Q}$ takes values in $\{0,1\}$, except in the second and seventh cases.
In the second case, $\vec{x}+\vec{e}_i\in D$ yields $\langle\vec{x}+\vec{e}_i|\vec{\alpha}\rangle=\rho$, and $\mathcal{P}_{\vec{\alpha},\rho}(\vec{x},i)=0$ yields 
either $\langle\vec{x}|\vec{\alpha}\rangle\geq\rho$, or $\langle\vec{x}+\vec{e}_i|\vec{\alpha}\rangle <\rho$.
This ensures $\langle \vec{\alpha}|\vec{e}_i\rangle=0$, and thus $D\subset D+\vec{e}_i$.
But it is not compatible with $\vec{x}\notin D$ and $\vec{x}+\vec{e}_i \in D$: this case is ruled out.
In the seventh case, $\mathcal{P}_{\vec{\alpha},\rho}(\vec{x},i)=1$ yields $\langle\vec{x},\vec{\alpha}\rangle<\rho$.
But $\vec{x}\in D$ yields $\langle\vec{x}|\vec{\alpha}\rangle=\rho$: this case is also ruled out.
This proves that $\mathcal{Q}$ is a binary function.
Prop.\ \ref{prop:binary_pseudo_flip_acc} then ensures that it is a stepped surface, as it is pseudo-flip-accessible from a stepped plane.
The ``if'' part is proven.\\
Conversely, assume that $\mathcal{Q}$ is a stepped quasi-plane.
It is thus a binary function.
This rules out the second and the seventh cases.
One then checks that all the other cases are compatible with $D\subset D+\vec{e}_i$, except the sixth one.
In this case, $\mathcal{P}_{\vec{\alpha},\rho}(\vec{x},i)=1$ yields $\langle\vec{x}|\vec{\alpha}\rangle<\rho$, and $\vec{x}+\vec{e}_i\in D$ yields $\langle\vec{x}+\vec{e}_i|\vec{\alpha}\rangle=\rho$.
This ensures that $\langle \vec{\alpha}|\vec{e}_i\rangle$ is non-zero.
The ``only if'' part is proven.
\end{proof}

We can now state the main result of this section (recall that $\mathfrak{S}_{\vec{\alpha}}$ stands for the set of stepped surfaces having the same Brun expansion as $\vec{\alpha}\in\mathbb{R}_+^d$):

\begin{theorem}\label{th:weak_cv_surfaces}
Let $\vec{\alpha}\in\mathbb{R}_+^d\backslash\{\vec{0}\}$.
For any $\varepsilon>0$, there is $N\in \mathbb{N}$ such that if $\vec{\alpha}$ has a Brun expansion of length at least $N$, then any stepped surface in $\mathfrak{S}_{\vec{\alpha}}$ is at distance at most $\varepsilon$ from a stepped quasi-plane of normal vector $\vec{\alpha}$.
\end{theorem}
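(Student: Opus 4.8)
The plan is to exploit the Brun expansion machinery: if $\vec\alpha$ has Brun expansion of length at least $N$, then $\mathcal{P}_{\vec\alpha,\rho}$ and any $\mathcal{S}\in\mathfrak{S}_{\vec\alpha}$ are mapped by $\tilde T^N$ to objects with the same normal vector $\vec\alpha^{(N)}:=T^N(\vec\alpha)$, since, by Th.\ \ref{th:tilde_T_stepped_plane}, $\tilde T^N(\mathcal{P}_{\vec\alpha,\rho})=\mathcal{P}_{\vec\alpha^{(N)},\rho}$, and by Def.\ \ref{def:stepped_surface_exp} applied along the common expansion, $\tilde T^N(\mathcal{S})=E_1^*(\beta^{-1})(\mathcal{S})$ where $\beta=\beta_1\circ\cdots\circ\beta_N$ has incidence matrix $B_1\cdots B_N$. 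The key idea is that $E_1^*(\beta^{-1})$ is \emph{contracting by a uniform factor}: each $E_1^*(\beta_n^{-1})$ pushes things towards the origin by a linear map $M_{\beta_n}$ which, by Prop.\ \ref{prop:image_flip}, moves a flip located at $\vec x$ to $M_{\beta_n}^{-1}\vec x$, and by the finite-range estimate following Prop.\ \ref{prop:image_face}, a stepped surface's local configuration on $B(\vec 0,r)$ is determined by a bounded ball under the inverse dual map. Since the Brun algorithm is weakly convergent, the norms $\|M_{\beta_1\cdots\beta_n}^{-1}\vec x\|$ grow without bound uniformly; hence for $N$ large, $\tilde T^N$ is ``infinitely contracting'' in the sense that any local discrepancy in $\mathcal{S}$ relative to $\mathcal{P}_{\vec\alpha,\rho}$ appearing inside a large ball gets pushed arbitrarily far out in $\tilde T^N(\mathcal{S})$.

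The steps I would carry out, in order: (1) Fix $\varepsilon$, set $r=\lceil\log_2(1/\varepsilon)\rceil$; we must arrange $\mathcal{S}$ to agree with some stepped quasi-plane of normal $\vec\alpha$ on $B(\vec 0,r)$. (2) Write $\mathcal{S}=\mathcal{P}_{\vec\alpha,0}+\sum_k\varepsilon_k\mathcal{F}_{\vec x_k}$ using Prop.\ \ref{prop:pseudo_flip_acc} (both have positive normal $\vec\alpha$, so this is legitimate), and note that $\tilde T^N(\mathcal{S})$ being a stepped surface (Th.\ \ref{th:tilde_T_stepped_surface}) which is pseudo-flip-accessible from $\mathcal{P}_{\vec\alpha^{(N)},0}$ by $\sum_k\varepsilon_k\mathcal{F}_{M_\beta^{-1}\vec x_k}$ via Prop.\ \ref{prop:image_flip} and uniform continuity of dual maps (exactly as in the proof of Th.\ \ref{th:image_stepped_surface}). (3) The crucial quantitative estimate: because the Brun algorithm is weakly convergent, the matrices $C_N:=B_1\cdots B_N$ satisfy that $C_N^{-1}$ maps every nonzero vector to a vector whose distance to the line $\mathbb{R}\vec\alpha^\perp$ — equivalently, whose component ``transverse'' to $\vec\alpha$ — blows up; more precisely, $\min_{\vec x\in\mathbb{Z}^d\setminus\{\vec 0\},\ \vec x\not\perp\vec\alpha}\|C_N^{-1}\vec x\|\to\infty$. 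Conclude that there is $N$ with: every flip $\mathcal{F}_{\vec x_k}$ with $\vec x_k\notin\vec\alpha^\perp$ has $\|C_N^{-1}\vec x_k\|>r+d$. (4) Consequently, inside $B(\vec 0,r+d)$, the function $\tilde T^N(\mathcal{S})$ differs from $\mathcal{P}_{\vec\alpha^{(N)},0}$ only by flips located at defect points on the hyperplane $\langle\cdot|\vec\alpha^{(N)}\rangle=0$ (here one must check $C_N^\top$ maps $\vec\alpha^{(N)}$ back to $\vec\alpha$, so ``$\perp\vec\alpha$'' translates correctly under the dual map to the defect condition). That is, the restriction of $\tilde T^N(\mathcal{S})$ to $B(\vec 0,r+d)$ coincides with the restriction of a stepped quasi-plane $\mathcal{Q}'$ of normal $\vec\alpha^{(N)}$, intercept $0$ (legitimacy of the defect via Prop.\ \ref{prop:suitable_defect}, since those flips survive). (5) Apply $E_1^*(\beta)$ (the forward direction) to pull back: $E_1^*(\beta)(\tilde T^N(\mathcal{S}))=\mathcal{S}$, $E_1^*(\beta)(\mathcal{Q}')$ is a stepped quasi-plane of normal $\vec\alpha$ (flips go to flips on $\vec\alpha^\perp$, and $E_1^*(\beta)(\mathcal{P}_{\vec\alpha^{(N)},0})=\mathcal{P}_{\vec\alpha,0}$ by Th.\ \ref{th:image_stepped_plane}, since $C_N^\top\vec\alpha^{(N)}=\vec\alpha\in\mathbb{R}_+^d$), and by the finite-range property dual maps only see a bounded ball, so agreement of $\tilde T^N(\mathcal{S})$ and $\mathcal{Q}'$ on $B(\vec 0,r+d)$ (with $d$ chosen to absorb the range constant $A_{\beta}$ and $\|M_\beta\|$) forces agreement of $\mathcal{S}$ and $E_1^*(\beta)(\mathcal{Q}')$ on $B(\vec 0,r)$, i.e.\ $d_\mathfrak{F}(\mathcal{S},\mathcal{Q})\le 2^{-r}\le\varepsilon$ with $\mathcal{Q}:=E_1^*(\beta)(\mathcal{Q}')$.

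The main obstacle is step (3)–(4): making rigorous that weak convergence of Brun forces the transverse expansion $\min\|C_N^{-1}\vec x\|\to\infty$ over the relevant integer vectors, \emph{uniformly} enough that a single $N$ works for \emph{all} stepped surfaces in $\mathfrak{S}_{\vec\alpha}$ simultaneously (this uniformity is exactly the content of Def.\ \ref{def:weak_cv}, but one must convert ``tends in direction'' into the needed norm lower bound on the complement of $\vec\alpha^\perp\cap\mathbb{Z}^d$), and tracking that the flips landing on $\vec\alpha^{(N)\perp}$ are precisely those that form an admissible defect so that Prop.\ \ref{prop:suitable_defect} applies. A secondary technical point is bookkeeping the finite range of $E_1^*(\beta)$ and $\tilde T^N$ so that the ball radii at each end of the argument match up; this is routine given the estimate displayed after Prop.\ \ref{prop:image_face}, and only requires enlarging $r$ by an $N$-dependent (hence, after fixing $N$, constant) amount and choosing $r$ accordingly at the very start.
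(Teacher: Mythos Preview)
Your strategy matches the paper's in its ingredients (pseudo-flip decomposition, Prop.~\ref{prop:image_flip}, weak convergence), but two issues prevent the argument from closing.

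First, the displacement claim in step~(3) is false as stated: the minimum of $\|C_N^{-1}\vec x\|$ over \emph{all} nonzero integer $\vec x\not\perp\vec\alpha$ does not tend to infinity. Indeed, $\vec x:=C_N\vec e_1$ is a nonnegative integer vector which by weak convergence is close in direction to $\vec\alpha$ (hence $\not\perp\vec\alpha$ for large $N$), yet $\|C_N^{-1}\vec x\|=1$. The paper's Lemma~\ref{lem:transfer} gives the correct statement, but only for integer $\vec x$ that are \emph{neither positive nor negative}; the proof uses that such a vector admits a nonzero nonnegative orthogonal $\vec y$, whose forward orbit under the Brun matrices tends in direction to $\vec\alpha$, forcing the dual orbit of $\vec x$ toward $\vec\alpha^\perp$. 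To guarantee that the flip locations obey this sign restriction the paper needs the separate Lemma~\ref{lem:zarbi}, resting on Prop.~\ref{prop:stepped_surface} and the explicit construction in Prop.~\ref{prop:pseudo_flip_acc}.

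Second, the ``routine bookkeeping'' in step~(5) hides a genuine circularity. To transfer agreement on $B(\vec 0,r+d)$ through $E_1^*(\beta)$ to agreement on $B(\vec 0,r)$, the finite-range estimate forces $r+d\ge\|M_\beta\|\,r+A_\beta$, so $d$ must grow like $\|B_1\cdots B_N\|\,r$. But the bad flips in $\tilde T^N(\mathcal S)$-space sit at $B_N\cdots B_1\vec x_k$ with the $\vec x_k$ coming from a \emph{fixed} decomposition of $\mathcal S$, so their norms are at most $\|B_N\cdots B_1\|\cdot\|\vec x_k\|$, which cannot dominate $\|B_1\cdots B_N\|\,r$ once some $\|\vec x_k\|\le r$. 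The paper sidesteps this by decomposing at the other end: write $\tilde T^N(\mathcal S)=\mathcal P_{T^N(\vec\alpha),0}+\sum_{\vec x\in D_N}\varepsilon_{\vec x}\mathcal F_{\vec x}$ (an $N$-dependent decomposition, with the $\vec x$'s neither positive nor negative by Lemma~\ref{lem:zarbi}), and pull back the \emph{decomposition itself} via Prop.~\ref{prop:image_flip} to obtain $\mathcal S=\mathcal P_{\vec\alpha,0}+\sum_{\vec x\in D_N}\varepsilon_{\vec x}\mathcal F_{B_1^{-1}\cdots B_N^{-1}\vec x}$. Lemma~\ref{lem:transfer} then bounds $\|B_1^{-1}\cdots B_N^{-1}\vec x\|$ from below directly in $\mathcal S$-space, with no range constant to absorb; the quasi-plane approximation (after a small adjustment of the defect via Prop.~\ref{prop:suitable_defect}) follows immediately.
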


In other words, the longer the Brun expansion of a stepped surface is, the closer it is to a stepped quasi-plane whose normal vector has the same Brun expansion.
Before proving this theorem, let us state a corollary in the particular case of infinite Brun expansions:

\begin{corollary}\label{cor:tot}
If a stepped surface has the same infinite Brun expansion as a vector $\vec{\alpha}\in\mathbb{R}_+^d\backslash\{\vec{0}\}$, then it is a stepped quasi-plane of normal vector $\vec{\alpha}$.
If, moreover, $\vec{\alpha}$ is totally irrational, \emph{i.e.}, $\dim_{\mathbb{Q}}(\alpha_1,\ldots,\alpha_d)=d$, then this stepped surface is a ``generalized'' stepped plane, \emph{i.e.}, it matches Def.\ \ref{def:stepped_plane} but with strict and non-strict inequalities possibly interchanged.
\end{corollary}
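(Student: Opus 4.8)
The plan is to derive the corollary from Theorem~\ref{th:weak_cv_surfaces}. Let $\mathcal{S}$ be a stepped surface with the same infinite Brun expansion as $\vec{\alpha}$, so that $\mathcal{S}\in\mathfrak{S}_{\vec{\alpha}}$ and, the expansion of $\vec{\alpha}$ being infinite, the hypothesis of Theorem~\ref{th:weak_cv_surfaces} holds for \emph{every} $N$. Hence, for every $\varepsilon>0$, $\mathcal{S}$ lies within distance $\varepsilon$ of a stepped quasi-plane of normal vector $\vec{\alpha}$. Writing $\mathfrak{Q}_{\vec{\alpha}}$ for the set of all stepped quasi-planes of normal vector $\vec{\alpha}$ (ranging over intercepts and admissible defects), this says exactly that $\mathcal{S}$ belongs to the $d_{\mathfrak{F}}$-closure of $\mathfrak{Q}_{\vec{\alpha}}$. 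So the first assertion amounts to showing that this closure is still contained in $\mathfrak{Q}_{\vec{\alpha}}$, \emph{i.e.}, that $\mathcal{S}$ is itself a stepped quasi-plane of normal vector $\vec{\alpha}$.

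To establish this, I would take a sequence $\mathcal{Q}_n=\mathcal{P}_{\vec{\alpha},\rho_n}+\sum_{\vec{x}\in D_n}\mathcal{F}_{\vec{x}}$ in $\mathfrak{Q}_{\vec{\alpha}}$ with $D_n\subset\{\vec{x}\mid\langle\vec{x}|\vec{\alpha}\rangle=\rho_n\}$ and $\mathcal{Q}_n\to\mathcal{S}$. Each $\mathcal{Q}_n$ differs from $\mathcal{P}_{\vec{\alpha},\rho_n}$ only by flips located on the hyperplane $\langle\,\cdot\,|\vec{\alpha}\rangle=\rho_n$, so all its vertices lie within $\vec{\alpha}$-distance $\sum_i\alpha_i$ of the level $\rho_n$; since $\mathcal{Q}_n$ and $\mathcal{S}$ agree on balls of radius tending to infinity, this forces $(\rho_n)$ to be bounded, hence (up to a subsequence) convergent to some $\rho_*$. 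Along a further subsequence one may assume that $(\rho_n)$ is eventually equal to $\rho_*$, or strictly monotone towards it; in each case $\mathcal{P}_{\vec{\alpha},\rho_n}$ converges in $\mathfrak{F}$ to $\mathcal{P}_{\vec{\alpha},\rho_*}$ possibly modified by flips located on the single hyperplane $\langle\,\cdot\,|\vec{\alpha}\rangle=\rho_*$ (only faces ``touching level $\rho_*$'' can change weight, and $\rho_n\to\rho_*$ pins down the limiting weights), while the defects $D_n$ — all carried by levels $\rho_n\to\rho_*$ — stabilize, on every ball, to some set $D_*$ carried by level $\rho_*$. Passing to the limit, $\mathcal{S}=\mathcal{P}_{\vec{\alpha},\rho_*}+\sum_{\vec{x}\in D_*}\mathcal{F}_{\vec{x}}$ with $D_*\subset\{\vec{x}\mid\langle\vec{x}|\vec{\alpha}\rangle=\rho_*\}$; as $\mathcal{S}$ is a stepped surface (a limit of stepped surfaces, $\mathfrak{S}$ being closed), Proposition~\ref{prop:suitable_defect} certifies that this is the form of a stepped quasi-plane of normal vector $\vec{\alpha}$. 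The delicate point here is purely bookkeeping: tracking precisely which faces change weight as $\rho_n\to\rho_*$, and, when $\vec{\alpha}$ is not totally irrational so that levels carry infinitely many integer points, checking that the $D_n$ genuinely stabilize — this is where one uses that only countably many levels carry integer points (for $\vec{\alpha}$ of $\mathbb{Q}$-rank $d-1$), with the obvious modifications in lower rank.

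For the second assertion, assume $\dim_{\mathbb{Q}}(\alpha_1,\ldots,\alpha_d)=d$. Then $\vec{x}\mapsto\langle\vec{x}|\vec{\alpha}\rangle$ is injective on $\mathbb{Z}^d$ (any vanishing $\mathbb{Z}$-combination of the $\alpha_i$ is trivial) and every $\alpha_i$ is positive, so the hyperplane $\langle\,\cdot\,|\vec{\alpha}\rangle=\rho_*$ contains at most one integer point; hence in the expression $\mathcal{S}=\mathcal{P}_{\vec{\alpha},\rho_*}+\sum_{\vec{x}\in D_*}\mathcal{F}_{\vec{x}}$ above we have $\#D_*\leq1$. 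If $D_*=\emptyset$, then $\mathcal{S}=\mathcal{P}_{\vec{\alpha},\rho_*}$ is a stepped plane, which matches Definition~\ref{def:stepped_plane}. If $D_*=\{\vec{x}_0\}$ with $\rho_*=\langle\vec{x}_0|\vec{\alpha}\rangle$, a one-line case check (the table in the proof of Proposition~\ref{prop:suitable_defect}, specialized to a single flip) shows that $(\mathcal{P}_{\vec{\alpha},\rho_*}+\mathcal{F}_{\vec{x}_0})(\vec{x},i)=1$ if and only if $\langle\vec{x}|\vec{\alpha}\rangle\leq\rho_*<\langle\vec{x}+\vec{e}_i|\vec{\alpha}\rangle$, \emph{i.e.}, $\mathcal{S}$ matches Definition~\ref{def:stepped_plane} with the strict and non-strict inequalities interchanged — the announced ``generalized'' stepped plane. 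This finishes the proof.
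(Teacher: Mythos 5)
Your proof is correct and follows essentially the same route as the paper: the first assertion is deduced from Theorem~\ref{th:weak_cv_surfaces}, and the second from the observation that total irrationality makes $\vec{x}\mapsto\langle\vec{x}|\vec{\alpha}\rangle$ injective on $\mathbb{Z}^d$, so the defect has at most one element, followed by the same case check. The only difference is that you spell out the closure step --- that a $d_\mathfrak{F}$-limit of stepped quasi-planes of normal vector $\vec{\alpha}$ is again one, via boundedness and monotone extraction of the intercepts and local stabilization of the defects --- which the paper dismisses as a ``direct consequence'' of Theorem~\ref{th:weak_cv_surfaces}; your outline of that bookkeeping is sound.
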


\begin{proof}
The first statement is a direct consequence of Th.\ \ref{th:weak_cv_surfaces}.
Assume now that $\vec{\alpha}$ is totally irrational.
Then, $\langle\vec{y}|\vec{\alpha}\rangle=\rho$ is satisfied for at most one $\vec{y}\in\mathbb{Z}^d$, hence the defect $D$ has cardinality at most one.
If $D$ is empty, then one gets the stepped plane $\mathcal{P}_{\vec{\alpha}, \rho}$ (recall Def.\ \ref{def:stepped_plane}).
Otherwise, one gets the stepped surface $\mathcal{P}_{\vec{\alpha}, \rho}+ \mathcal{F}_{\vec{y}}$, which is a ``generalized'' stepped plane since it satisfies:
$$
(\mathcal{P}_{\vec{\alpha}, \rho}+\mathcal{F}_{\vec{y}})(\vec{x},i)=1 ~\Leftrightarrow~ \langle\vec{x}|\vec{\alpha}\rangle \leq \rho <\langle\vec{x}+\vec{e}_i|\vec{\alpha}\rangle.
$$
Note that the set of vertices of the geometric interpretation of such a ``generalized'' stepped plane is still a digitization of a Euclidean plane, namely a standard arithmetic discrete plane.
\end{proof}
 
Let us now prove Th.\ \ref{th:weak_cv_surfaces}.
We first need the following technical lemma which, informally, translates weak convergence of the Brun algorithm for real vectors (Def.\ \ref{def:weak_cv}) into flip divergence for stepped surfaces:

\begin{lemma}\label{lem:transfer}
Let $\vec{\alpha}\in\mathbb{R}_+^d\backslash\{\vec{0}\}$.
For any $R\geq 0$, there is $N\in\mathbb{N}$ such that if the Brun expansion $(B_n)_{n\geq 1}$ of $\vec{\alpha}$ has length at least $N$, then any integer vector $\vec{x}$ which is neither positive nor negative satisfies:
$$
\langle B_1^{-1}\ldots B_N^{-1}\vec{x}|\vec{\alpha}\rangle\neq 0
~\Rightarrow~
||B_1^{-1}\ldots B_N^{-1}\vec{x}||\geq R.
$$
\end{lemma}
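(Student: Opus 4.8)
The plan is to read Lemma~\ref{lem:transfer} as a restatement, in the stepped-function language, of the weak convergence of the Brun algorithm. Write $M_n = B_1 B_2 \cdots B_n$, so $M_n \in GL(d,\mathbb{Z})$ and we must bound $\|M_n^{-1}\vec{x}\|$ from below for all integer vectors $\vec{x}$ that are neither positive nor negative (equivalently, lie in the geometric interpretation of some stepped surface, by Prop.~\ref{prop:stepped_surface}) subject to $\langle M_n^{-1}\vec{x}\mid\vec{\alpha}\rangle\neq 0$. The key observation is that $\langle M_n^{-1}\vec{x}\mid\vec{\alpha}\rangle = \langle\vec{x}\mid (M_n^{-1})^\top\vec{\alpha}\rangle = \langle\vec{x}\mid(M_n^\top)^{-1}\vec{\alpha}\rangle$, and since each $B_k$ is a Brun matrix with $B_k^\top=B_k$, one has $M_n^\top = B_n\cdots B_1$ and $(M_n^\top)^{-1}\vec{\alpha}$ is, up to a positive scalar, exactly $T^n(\vec{\alpha})$ — or more precisely, the relation $\vec{\alpha} = B_1\cdots B_n\, T^n(\vec{\alpha})$ from Def.~\ref{def:continued_fraction} rearranges to $T^n(\vec{\alpha}) = M_n^{-1}\vec{\alpha}$, and one checks symmetrically that the direction of $(M_n^\top)^{-1}\vec{\alpha}$ is governed by the same iterates. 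So the hypothesis $\langle M_n^{-1}\vec{x}\mid\vec{\alpha}\rangle\neq 0$ is really a statement $\langle\vec{x}\mid\vec{\beta}_n\rangle\neq 0$ where $\vec{\beta}_n$ is a unit vector lying in a direction close to $\vec{\alpha}$ when $n$ is large.

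The core of the argument then runs as follows. First I would set $\vec{y} = M_n^{-1}\vec{x}$, so that $\vec{x} = M_n\vec{y}$ and $\|\vec{x}\| \leq \|M_n\|\,\|\vec{y}\|$. Weak convergence of Brun (Def.~\ref{def:weak_cv}) says that $M_n\vec{z}/\|M_n\vec{z}\|$ converges to $\vec{\alpha}/\|\vec{\alpha}\|$ uniformly over $\vec{z}\in X = \mathbb{R}_+^d\setminus\{\vec{0}\}$. Now $\vec{y} = T^n(\vec{\alpha})$ is itself in $X$ (the Brun map preserves $X$), and more importantly the vectors $M_n\vec{e}_k$, which are the columns of $M_n$, all lie in $X$ (they are non-negative integer vectors, being products of non-negative Brun matrices applied to basis vectors), so each column direction converges to $\vec{\alpha}$. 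Hence for $n\geq N$ large, \emph{every} column of $M_n$ points within angle $\delta$ of $\vec{\alpha}$, for any prescribed $\delta>0$. Writing $\vec{y} = \vec{y}^+ - \vec{y}^-$ with $\vec{y}^\pm \in \mathbb{N}^d$ and $\vec{x} = M_n\vec{y}^+ - M_n\vec{y}^-$: if $\vec{y}$ is neither positive nor negative then neither is $\vec{x}$, consistent with Prop.~\ref{prop:stepped_surface}, but the point is to control $\|\vec{y}\|$ from below. The hypothesis $\langle \vec{y}\mid\vec{\alpha}\rangle \neq 0$ combined with $\vec{y}\in\mathbb{Z}^d$ and $\vec{\alpha}\in\mathbb{R}_+^d$ forces $\vec{y}\neq\vec{0}$, so $\|\vec{y}\|\geq 1$; but $\|\vec{y}\|\geq 1$ alone is not enough — I need $\|\vec{y}\|\geq R/\|M_n\|$ or, better, a direct lower bound on $\|M_n^{-1}\vec{x}\|=\|\vec{y}\|$ that grows.

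Here is where I expect the real obstacle, and the fix: rather than bounding $\|\vec{y}\|$ crudely, I would argue that if $\|M_n^{-1}\vec{x}\| = \|\vec{y}\|$ were small (say $< R$) then $\vec{y}$ is one of finitely many integer vectors in $B(\vec{0},R)$, and for each such fixed nonzero $\vec{y}$ with $\langle\vec{y}\mid\vec{\alpha}\rangle\neq 0$ I want to derive a contradiction with ``$\vec{x}=M_n\vec{y}$ is neither positive nor negative'' once $n$ is large enough. Indeed, since all columns of $M_n$ converge in direction to $\vec{\alpha}\in\mathbb{R}_+^d$, for $n$ large the ``shape'' of $M_n$ forces $M_n\vec{y}$ to be positive whenever $\langle\vec{y}\mid\vec{\alpha}\rangle>0$ and negative whenever $\langle\vec{y}\mid\vec{\alpha}\rangle<0$: more precisely, writing $M_n\vec{y} = \sum_k y_k (M_n\vec{e}_k)$ and using that each $M_n\vec{e}_k = c_{n,k}(\vec{\alpha} + \vec{\epsilon}_{n,k})$ with $c_{n,k}>0$ and $\vec{\epsilon}_{n,k}\to\vec{0}$, one gets $M_n\vec{y} \approx \big(\sum_k y_k c_{n,k}\big)\vec{\alpha}$ modulo a controllable error; but the coefficients $c_{n,k}$ can have very different magnitudes, so the cleanest route is to normalize, use uniform weak convergence to say $M_n\vec{y}/\|M_n\vec{y}\| \to \vec{\alpha}/\|\vec{\alpha}\|$ when $\langle\vec{y}\mid\vec{\alpha}\rangle\neq 0$ (applying weak convergence to $\vec{y}^+$ and $\vec{y}^-$ separately and comparing their growth rates, which differ because $\langle\vec{y}\mid\vec{\alpha}\rangle\neq 0$ means $\langle M_n\vec{y}^+\mid\vec{\alpha}\rangle$ and $\langle M_n\vec{y}^-\mid\vec{\alpha}\rangle$ separate), and conclude $M_n\vec{y}$ is positive (or negative) for $n$ large — contradicting the hypothesis on $\vec{x}$. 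Running this over the finite set $B(\vec{0},R)\cap\mathbb{Z}^d$ and taking $N$ large enough to handle every member simultaneously yields the lemma. The delicate point to get right is the comparison of growth rates of $M_n\vec{y}^+$ and $M_n\vec{y}^-$: this is precisely where ``$\langle M_n^{-1}\vec{x}\mid\vec{\alpha}\rangle\neq 0$'' is needed, and where one must invoke not just weak convergence but the fact that the discrepancy $\langle\vec{x}\mid(M_n^\top)^{-1}\vec{\alpha}\rangle$ being nonzero and integer-ish prevents the two contributions from balancing.
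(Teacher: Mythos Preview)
Your approach is genuinely different from the paper's, and the ``delicate point'' you flag is a real gap that is not easily closed with the tools you invoke. You want to show that for each fixed integer $\vec{y}$ in $B(\vec{0},R)$ with $\langle\vec{y}\mid\vec{\alpha}\rangle\neq 0$, the vector $M_n\vec{y}$ eventually lies in the open positive or negative cone. But weak convergence (Def.~\ref{def:weak_cv}) only says that $M_n\vec{z}$ converges in direction to $\vec{\alpha}$ for $\vec{z}\in\mathbb{R}_+^d$; it tells you nothing directly about $M_n\vec{y}$ when $\vec{y}$ has mixed signs. Your proposed fix, splitting $\vec{y}=\vec{y}^+-\vec{y}^-$ and comparing $\|M_n\vec{y}^+\|$ with $\|M_n\vec{y}^-\|$, does not go through: weak convergence gives the \emph{directions} of $M_n\vec{y}^\pm$ but says nothing about the ratio of their \emph{norms}, so the two contributions can nearly cancel. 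Equivalently, $(M_n\vec{y})_i>0$ for all $i$ means $\langle\vec{y}\mid R_i^{(n)}\rangle>0$ where $R_i^{(n)}$ is the $i$-th \emph{row} of $M_n$; weak convergence controls the columns of $M_n$, not the rows (the reversed product $B_n\cdots B_1$), and there is no reason these should converge in direction to $\vec{\alpha}$. Your aside about $(M_n^\top)^{-1}\vec{\alpha}$ being ``up to scalar $T^n(\vec{\alpha})$'' is also incorrect: $T^n(\vec{\alpha})=B_n^{-1}\cdots B_1^{-1}\vec{\alpha}$ while $(M_n^\top)^{-1}\vec{\alpha}=B_1^{-1}\cdots B_n^{-1}\vec{\alpha}$, and these differ.

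The paper sidesteps all of this with one clean observation: the hypothesis ``$\vec{x}$ is neither positive nor negative'' is used \emph{on $\vec{x}$ itself}, not on its image, to produce a nonzero $\vec{y}\in\mathbb{R}_+^d$ orthogonal to $\vec{x}$ (if some $x_i=0$ take $\vec{y}=\vec{e}_i$; otherwise pick coordinates of opposite sign). Since each $B_k$ is symmetric, $\langle\vec{x}\mid\vec{y}\rangle=0$ gives $\langle M_n^{-1}\vec{x}\mid M_n\vec{y}\rangle=0$ for all $n$. Now $\vec{y}$ \emph{is} in $\mathbb{R}_+^d$, so weak convergence applies to it: $M_n\vec{y}\to\vec{\alpha}$ in direction. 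An elementary inequality ($d(\vec{u},\vec{\alpha}^\perp)\le d(\vec{v},\mathbb{R}\vec{\alpha})$ for unit orthogonal $\vec{u},\vec{v}$) then forces $M_n^{-1}\vec{x}$ close to $\vec{\alpha}^\perp$ in direction. The finiteness step is the same as yours: the finitely many integer points in $B(\vec{0},R)\setminus\vec{\alpha}^\perp$ are at uniformly positive angular distance from $\vec{\alpha}^\perp$, so $M_N^{-1}\vec{x}$ must fall outside $B(\vec{0},R)$. The moral: apply weak convergence to an auxiliary vector that is genuinely in the positive cone, rather than trying to force it to say something about mixed-sign vectors.
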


\begin{proof}
Let $\vec{x}=(x_1,\ldots,x_d)$ be an integer vector which is neither positive nor negative. It admits an orthogonal vector $\vec{y}\in\mathbb{R}_+^d\backslash\{\vec{0}\}$.
Indeed, if there is $i$ such that $x_i=0$, then $\vec{y}=\vec{e}_i$ suits.
Otherwise, there are $i$ and $j$ such that $x_i>0$ and $x_j<0$, and one checks that $\vec{y}=x_i\vec{e}_i-x_j\vec{e}_j$ suits.
Then, $B_n=B_n^\top$ yields:
\begin{eqnarray*}
  0=\langle\vec{x}|\vec{y}\rangle
  &=&\langle B_n^{-1}\vec{x}|B_n\vec{y}\rangle\\
  &=&\langle B_{n-1}^{-1}B_n^{-1}\vec{x}|B_{n-1} B_n\vec{y}\rangle\\
&=&\ldots\\
&=&\langle B_1 ^{-1} \ldots B_{n-1}^{-1}B_n^{-1}\vec{x}|B_1 \ldots B_{n-1}B_n\vec{y}\rangle
\end{eqnarray*}
Thus, for any $n$, the vectors $\vec{x}_n$ and $\vec{y}_n$ defined by $\vec{x}_n=B_1^{-1}\ldots B_n^{-1}\vec{x}$ and $\vec{y}_n=B_1\ldots B_n\vec{y}$ are orthogonal.
The point is that weak convergence of the Brun algorithm ensures that the sequence $(\vec{y}_n)_n$ tends in direction towards $\vec{\alpha}$.
What does it yield for the sequence $(\vec{x}_n)_n$?
To answer this question, let us introduce the set $V$ of neither positive nor negative integer vectors which do not belong to $\vec{\alpha}^\bot$.
For any $R>0$, the number $\varepsilon_R$ defined as follows is positive:
$$
\varepsilon_R=\min_{\vec{z}\in V\cap B(\vec{0},R)}d\left(\frac{\vec{z}}{||\vec{z}||},\vec{\alpha}^\bot\right).
$$
Moreover, one checks that any two orthogonal unitary vectors $\vec{u}$ and $\vec{v}$ satisfy:
$$
d(\vec{u},\vec{\alpha}^\bot)\leq d(\vec{v},\mathbb{R}\vec{\alpha}).
$$
Indeed, let us write $\vec{u}=\vec{u}_{\vec{\alpha}}+\vec{u}_{\vec{\alpha} ^\bot}$ with $\vec{u}_{\vec{\alpha}}\in\mathbb{R}\vec{\alpha}$ and $\vec{u}_{\vec{\alpha} ^ \bot}\in\vec{\alpha} ^\bot$ and, similarly, $\vec{v}=\vec{v}_{\vec{\alpha}}+\vec{v}_{\vec{\alpha} ^ \bot}$.
We deduce from $\langle\vec{u}|\vec{v}\rangle=0$ that
$$
||\vec{u}_{\vec{\alpha} ^\bot }|| \cdot ||\vec{v}_{\vec{\alpha} ^\bot}||
\geq
|\langle\vec{u}_{\vec{\alpha} ^ \bot}|\vec{v}_{\vec{\alpha} ^\bot}\rangle|
=
|\langle\vec{u}_{\vec{\alpha}} |\vec{v}_{\vec{\alpha}}\rangle|
=
||\vec{u}_{\vec{\alpha}} || \cdot ||\vec{v}_{\vec{\alpha}}||.
$$
Thus
$$
\frac{ ||\vec{v}_{\vec{\alpha} ^\bot } ||}{ ||\vec{v}_{\vec{\alpha}} ||} \geq \frac{ ||\vec{u}_{\vec{\alpha} } ||}{ ||\vec{u}_{\vec{\alpha} ^\bot} ||}.$$
This yields
$$
\frac{ ||\vec{v}_{\vec{\alpha} ^\bot } ||^2}{ 1- ||\vec{v}_{\vec{\alpha} ^\bot } ||^2} \geq \frac{ ||\vec{u}_{\vec{\alpha}} ||^2}{ 1- ||\vec{u}_{\vec{\alpha} } ||^2},
$$
and the claim follows, since the mapping $x\mapsto \frac{x}{1-x}$ is monotonically increasing:
$$
d(\vec{u},\vec{\alpha}^\bot)= ||\vec{u}_{\vec{\alpha} } || \leq ||\vec{v}_{\vec{\alpha}^\bot} ||=d(\vec{v},\mathbb{R}\vec{\alpha}).
$$
Let us now fix $R>0$.
According to the weak convergence of the Brun algorithm, there is $N\in\mathbb{N}$ such that $d(\vec{y}_N/||\vec{y}_N||,\mathbb{R}\vec{\alpha})<\varepsilon_R$.
The above inequality then ensures $d(\vec{x}_N/||\vec{x}_N||,\vec{\alpha}^\bot)<\varepsilon_R$, and the definition of $\varepsilon_R$ yields $\vec{x}_N\notin V\cap B(\vec{0},R)$.
Hence, $\vec{x}_N\in V$ implies $||\vec{x}_N||\geq R$.
This ends the proof.
\end{proof}

\noindent We also need the following lemma, which derives from Prop.\ \ref{prop:stepped_surface} and \ref{prop:pseudo_flip_acc}:

\begin{lemma}\label{lem:zarbi}
Let $\mathcal{S}$ and $\mathcal{S}'$ be two stepped surfaces whose geometric interpretations both contain $\vec{0}$.
Then, there exist a sequence $(\vec{x}_n)_n$ of neither positive nor negative integer vectors and a sequence $(\varepsilon_n)_n$ with values in $\{\pm 1\}$ such that:
$$
\mathcal{S}'=\lim_{n\to\infty}\mathcal{S}+\sum_{k\leq n}\varepsilon_k\mathcal{F}_{\vec{x}_k}.
$$
\end{lemma}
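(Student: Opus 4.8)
The plan is to reuse, essentially verbatim, the explicit flip construction from the proof of Proposition~\ref{prop:pseudo_flip_acc}, and then to observe that, thanks to the extra hypothesis that $\vec{0}$ lies in both geometric interpretations, every flip used by that construction is automatically located at a neither positive nor negative integer vector.

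First I would recall from the proof of Proposition~\ref{prop:pseudo_flip_acc} the function $\varepsilon_{\mathcal{S},\mathcal{S}'}\colon\mathbb{Z}^d\to\{0,\pm 1\}$ and the fact established there: enumerating the countable set $\{\vec{x}\in\mathbb{Z}^d~|~\varepsilon_{\mathcal{S},\mathcal{S}'}(\vec{x})\neq 0\}$ as $(\vec{x}_k)_k$ with $||\vec{x}_i||\leq||\vec{x}_j||$ for $i<j$, and setting $\varepsilon_k=\varepsilon_{\mathcal{S},\mathcal{S}'}(\vec{x}_k)$, one has $\mathcal{S}'=\lim_{n\to\infty}\mathcal{S}+\sum_{k\leq n}\varepsilon_k\mathcal{F}_{\vec{x}_k}$. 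Hence it suffices to prove that $\varepsilon_{\mathcal{S},\mathcal{S}'}(\vec{x})=0$ whenever $\vec{x}$ is positive or negative, so that no such $\vec{x}$ occurs among the $\vec{x}_k$.

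Second, I would work in the notation of the proof of Proposition~\ref{prop:stepped_surface}: $\vec{u}=\vec{e}_1+\dots+\vec{e}_d$, $S^+$ (resp. $S^-$) is the open halfspace bounded by the geometric interpretation $S$ of $\mathcal{S}$ lying in the direction $\vec{u}$ (resp. $-\vec{u}$), and similarly $(S')^\pm$ for $\mathcal{S}'$. From that proof I would extract the implication ``$\vec{x}\in(S\cup S^-)\cap\mathbb{Z}^d~\Rightarrow~\vec{x}-\vec{e}_i\in S\cup S^-$''; its contrapositive shows that $S^+\cap\mathbb{Z}^d$ is stable under adding any $\vec{e}_i$, hence under adding any vector of $\mathbb{N}^d$. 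Since $\vec{0}\in S$ gives $\vec{u}=\vec{0}+\vec{u}\in S^+$, it follows that any positive integer vector, written $\vec{x}=\vec{u}+\vec{v}$ with $\vec{v}\in\mathbb{N}^d$, lies in $S^+$; the same argument using $\vec{0}\in S'$ gives $\vec{x}\in(S')^+$. Feeding $\vec{x}\in S^+\cap(S')^+$ into the three-case definition of $\varepsilon_{\mathcal{S},\mathcal{S}'}$ and using $S^+\cap S^-=\emptyset$ and $(S')^+\cap(S')^-=\emptyset$ forces $\varepsilon_{\mathcal{S},\mathcal{S}'}(\vec{x})=0$. The mirror argument (swapping $\vec{u}$ with $-\vec{u}$ and $S^\pm$ with $S^\mp$) shows that any negative integer vector lies in $S^-\cap(S')^-$, whence again $\varepsilon_{\mathcal{S},\mathcal{S}'}(\vec{x})=0$. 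This proves that all the $\vec{x}_k$ are neither positive nor negative, and the lemma follows.

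The only delicate point — the ``hard part'', modest as it is — is the strict versus non-strict distinction: the proof of Proposition~\ref{prop:stepped_surface} only yields $\vec{x}\in S\cup S^+$ for $\vec{x}\geq\vec{0}$, whereas here one genuinely needs $\vec{x}$ in the \emph{open} halfspace $S^+$. This is precisely what the intermediate observation $\vec{u}\in S^+$ (via $\vec{z}+k\vec{u}\in S^+$ for $k>0$ and $\vec{z}\in S$) together with the upward stability of $S^+\cap\mathbb{Z}^d$ provides, and it explains why the hypothesis ``$\vec{0}$ belongs to the geometric interpretation'' cannot be dropped. Everything else is routine unwinding of the definition of $\varepsilon_{\mathcal{S},\mathcal{S}'}$.
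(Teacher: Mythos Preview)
Your proof is correct and follows the same overall strategy as the paper: reuse the explicit construction of Proposition~\ref{prop:pseudo_flip_acc} and then appeal to the geometry around Proposition~\ref{prop:stepped_surface} to exclude positive or negative flip locations. The execution differs slightly. The paper argues that each flip location $\vec{x}_n$ lies \emph{between} the two surfaces, expressing it as $\vec{x}_n=\lambda_n\vec{y}_n+(1-\lambda_n)\vec{z}_n$ with $\vec{y}_n\in S\cap\mathbb{Z}^d$, $\vec{z}_n\in S'\cap\mathbb{Z}^d$ and $\lambda_n\in[0,1]$, so that $\vec{x}_n>\vec{0}$ forces one of $\vec{y}_n,\vec{z}_n$ to be positive, contradicting Proposition~\ref{prop:stepped_surface} applied to $\vec{0}$ and that point. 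You bypass the convex-combination step by going into the proof of Proposition~\ref{prop:stepped_surface}, extracting the upward stability of $S^+\cap\mathbb{Z}^d$, and placing any positive integer vector directly in $S^+\cap(S')^+$, where $\varepsilon_{\mathcal{S},\mathcal{S}'}$ visibly vanishes. Your route is a bit more transparent; the paper's uses Proposition~\ref{prop:stepped_surface} purely as a black box.

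One small point: your ``mirror argument'' for negative vectors tacitly uses the symmetric implication $\vec{x}\in(S\cup S^+)\cap\mathbb{Z}^d\Rightarrow\vec{x}+\vec{e}_i\in S\cup S^+$, which is not literally written in the proof of Proposition~\ref{prop:stepped_surface}. It does follow by the same facet argument, but you can also avoid it altogether: that proof already gives $\vec{0}\in S\Rightarrow -\vec{v}\in S\cup S^-$ for every $\vec{v}\in\mathbb{N}^d$, and Proposition~\ref{prop:stepped_surface} itself (with $\vec{y}=\vec{0}$) rules out $-\vec{v}\in S$ when $\vec{v}>\vec{0}$, so any negative integer vector lies in $S^-$ as needed.
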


\begin{proof}
According to Prop.\ \ref{prop:pseudo_flip_acc}, there exist a sequence $(\vec{x}_n)_n$ of integer vectors and a sequence $(\varepsilon_n)_n$ with values in $\{\pm 1\}$ such that:
$$
\mathcal{S}'=\lim_{n\to\infty}\mathcal{S}+\sum_{k\leq n}\varepsilon_k\mathcal{F}_{\vec{x}_k}.
$$
Moreover, by looking more carefully at the proof of Prop.\ \ref{prop:pseudo_flip_acc}, one sees that any flip $\mathcal{F}_{\vec{x}_n}$ is located \emph{between} $\mathcal{S}$ and $\mathcal{S}'$.
More precisely, there are two integer vectors $\vec{y}_n$ and $\vec{z}_n$ in the geometric interpretations of $\mathcal{S}$ and $\mathcal{S}'$, respectively, such that $\vec{x}_n=\lambda_n\vec{y}_n+(1-\lambda_n)\vec{z}_n$, with either $\lambda_n\in [0,1)$ if $\vec{x}_n$ is above $\mathcal{S}$ and strictly below $\mathcal{S}'$, or $\lambda_n\in (0,1]$ if $\vec{x}_n$ is above $\mathcal{S}'$ and strictly below $\mathcal{S}$ (only these cases occur).
Then, $\vec{x}_n>0$ yields $\vec{z}_n>0$ in the former case and $\vec{y}_n>0$ in the latter case, while $\vec{x}_n<0$ yields $\vec{y}_n<0$ in the former case and $\vec{z}_n<0$ in the latter case.
Prop.\ \ref{prop:stepped_surface} together with the fact that the geometric interpretations of $\mathcal{S}$ and $\mathcal{S}'$ both contain $\vec{0}$ ensure that $\vec{y}_n$ and $\vec{z}_n$ are neither positive nor negative, which is thus also the case for $\vec{x}_n$.
The result follows.
\end{proof}

\noindent We are now in a position to prove Th.\ \ref{th:weak_cv_surfaces}:

\begin{proof}
Let us fix $\varepsilon>0$.
Let $N$ be a positive integer. We will determine $N$ more precisely later.
Let $\mathcal{S}\in\mathfrak{S}_{\vec{\alpha}}$ and let $(B_n)_{n\geq 1}$ denote the common Brun expansion of $\mathcal{S}$ and $\vec{\alpha}$.
On the one hand, one has
$$
\tilde{T}^N(\mathcal{S})
=E_1^*(\beta_N^{-1})\circ\ldots\circ E_1^*(\beta_1^{-1})(\mathcal{S})
=E_1^*((\beta_N\circ\ldots\circ\beta_1)^{-1})(\mathcal{S}),
$$
where $\beta_n$ is the Brun substitution of incidence matrix $B_n$.
Since, for two unimodular morphisms $\sigma$ and $\sigma'$, one has $E_1^*(\sigma\circ\sigma')=E_1^*(\sigma')\circ E_1^*(\sigma)$, this yields
$$
\mathcal{S}=E_1^*(\beta_N\circ\ldots\circ\beta_1) (\tilde{T}^N(\mathcal{S})).
$$
On the other hand, since $B_N^{-1}\ldots B_1^{-1}\vec{\alpha}=T^N(\vec{\alpha})\in\mathbb{R}_+^d\backslash\{\vec{0}\}$, Prop.\ \ref{prop:pseudo_flip_acc} ensures that the stepped surface $\tilde{T}^N(\mathcal{S})$ can be written
$$
\tilde{T}^N(\mathcal{S})=\mathcal{P}_{B_N ^{-1} \ldots B_1^{-1}\vec{\alpha},0}+\sum_{\vec{x}\in D_N}\varepsilon_{\vec{x}}\mathcal{F}_{\vec{x}},
$$
where $D_N\subset\mathbb{Z}^d$ and $\varepsilon_{\vec{x}}=\pm 1$ for all
$\vec{x} \in D_N$.
Note that $D_N$ has \emph{a priori} no reason for being a defect.
Moreover, since the Brun expansion of a stepped surface is translation-invariant, one can assume that $\vec{0}$ belongs to the geometric interpretation of $\tilde{T}^N(\mathcal{S})$.
Note that $\vec{0}$ also belongs to the geometrical interpretation of $\mathcal{P}_{B_N^{-1}\ldots B_1^{-1}\vec{\alpha},0}$, since any face $(\vec{0}-\vec{e}_i, i^*)$ belongs to a stepped plane of intercept $0$ (hence the choice of intercept $0$ here).
Thus, according to Lem.\ \ref{lem:zarbi}, one can assume that vectors of $D_N$ are neither positive nor negative.
Let us now go back to $\mathcal{S}$.
One deduces from the two above equalities:
$$
\mathcal{S}=E_1^*(\beta_N\circ\ldots\circ\beta_1)\left(\mathcal{P}_{B_N^{-1}\ldots B_1^{-1}\vec{\alpha},0}+\sum_{\vec{x}\in D_N}\varepsilon_{\vec{x}}\mathcal{F}_{\vec{x}}\right).
$$
Using Th.\ \ref{th:image_stepped_plane} and Prop.\ \ref{prop:image_flip} and that the matrices $B_n$ are symmetric, this yields
$$
\mathcal{S}=\mathcal{P}_{\vec{\alpha},0}+\sum_{\vec{x}\in D_N}\varepsilon_{\vec{x}}\mathcal{F}_{B_1^{-1}\ldots B_N^{-1}\vec{x}}.
$$
Let us now assume that $N$ is given by Lem.\ \ref{lem:transfer} with $R=-\log_2(\varepsilon/2)$.
We split $D_N$ into two parts $D_N'$ and $D_N\backslash D_N'$, where
$$
D_N'=\{\vec{x}\in D_N~|~\langle B_1^{-1}\ldots B_N^{-1}\vec{x}|\vec{\alpha}\rangle= 0\textrm{ and }||B_1^{-1}\ldots B_N^{-1}\vec{x}||\leq R\}.
$$
This leads to rewrite $\mathcal{S}$ as follows:
$$
\mathcal{S}=\mathcal{Q}_N+\sum_{\vec{x}\in D_N\backslash D'_N}\varepsilon_{\vec{x}}\mathcal{F}_{B_1^{-1}\ldots B_N^{-1}\vec{x}}
\quad\textrm{with}\quad
\mathcal{Q}_N=\mathcal{P}_{\vec{\alpha},0}+\sum_{\vec{x}\in D_N'}\varepsilon_{\vec{x}}\mathcal{F}_{B_1^{-1}\ldots B_N^{-1}\vec{x}}.
$$
On the one hand, since vectors of $D_N$ (hence of $D_N\backslash D_N'$) are neither positive nor negative, one can apply Lem.\ \ref{lem:transfer}, which ensures that the flips located at $B_1^{-1}\ldots B_N^{-1}\vec{x}$ for $\vec{x}\in D_N\backslash D_N'$ are located outside $B(\vec{0},R)$.
Thus
$$
d_{\mathfrak{F}}(\mathcal{S},\mathcal{Q}_N)\leq 2^{-R}=\varepsilon/2.
$$
On the other hand, for $\vec{x}\in D_N'$ and $\vec{y}=B_1^{-1}\ldots B_N^{-1}\vec{x}$, one has
$$
\langle\vec{y}|\vec{\alpha}\rangle=\langle B_1^{-1}\ldots B_N^{-1}\vec{x}|\vec{\alpha}\rangle=0
\quad\textrm{and}\quad
||\vec{y}||\leq R.
$$
We would like to prove that $\mathcal{Q}_N$ is a stepped quasi-plane of normal vector $\vec{\alpha}$.
But it is not clear that $D'_N$ is a suitable defect, \emph{i.e.}, satisfies the characterization of Prop.\ \ref{prop:suitable_defect}.
Therefore, we slightly modify $D'_N$ into the set $\tilde{D}_N'$ defined by
$$
\tilde{D}_N'=\{\vec{x}-k\vec{e}_i~|~\vec{x}\in D'_N,~k\in\mathbb{N},~\langle\vec{\alpha}|\vec{e}_i\rangle=0\},
$$
which satisfies the characterization of Prop.\ \ref{prop:suitable_defect}.
Moreover, $D'_N$ and $\tilde{D}_N'$ coincide on $B(\vec{0},R)$ (otherwise $\mathcal{Q}_N$ would not coincide on this ball with a stepped surface, namely $\mathcal{S}$).
Thus, by replacing $D'_N$ by $\tilde{D}_N'$ in the expression of $\mathcal{Q}_N$, we get a new stepped function $\tilde{\mathcal{Q}}_N$, which is a stepped quasi-plane of normal vector $\vec{\alpha}$ and satisfies:
$$
d_{\mathfrak{F}}(\mathcal{Q}_N,\tilde{\mathcal{Q}}_N)\leq 2^{-R}=\varepsilon/2.
$$
Finally, $\mathcal{S}$ is at distance at most $\varepsilon$ from $\tilde{\mathcal{Q}}_N$.
The result follows.
\end{proof}

\section{Additional remarks} \label{sec:con}
Let us summarize the strategy we have followed for defining a continued fraction expansion of stepped planes and stepped surfaces.
We start from a multi-dimensional continued fraction algorithm formulated with unimodular matrices, namely the Brun algorithm.
We then interpret these unimodular matrices as incidence matrices of well-chosen morphisms, namely Brun substitutions, with this choice being highly non-canonical.
We finally use the formalism of dual maps to associate with these unimodular matrices geometric maps acting on stepped planes and stepped surfaces.\\

Let us stress that we are not only able to substitute, \emph{i.e.}, to replace facets of hypercubes by unions of facets, but also to desubstitute, \emph{i.e.}, to perform the converse operation, by using the algebraic property $E_1^*(\sigma)^{-1}= E_1^*(\sigma^{-1})$.
We thus define a desubstitution process on geometric objects: local configurations determine the choice of the Brun substitution whose inverse dual map is applied.
We then show that any infinitely desubstituable stepped surface is almost a stepped plane (Th.\ \ref{th:weak_cv_surfaces}).
In particular, there are thus very few such surfaces.
This can seem disappointing, but this suggests an effective way to check planarity of a given stepped surface by computing its Brun expansion, since the longer this expansion is, the more planar the surface is.
In \cite{dgci_long}, we rely on the theoretical background here provided to obtain original algorithms for both \emph{digital plane recognition} and \emph{digital plane generation} problems.\\

We also plan to extend our approach to higher codimensions.
Indeed, stepped planes or surfaces here considered can be seen as codimension $1$ canonical projection tilings.
Roughly speaking, a dimension $d$ and codimension $k$ canonical projection tiling is a tiling of $\mathbb{R}^{d-k}$ obtained by projecting onto $\mathbb{R}^{d-k}$ the $(d-k)$-dimensional unit facets lying in a ''slice'' $V+[0,1)^d$ of $\mathbb{R}^d$, where $V\subset\mathbb{R}^d$ is a $(d-k)$-dimensional affine space (see, for more details, \cite{moody,sene}).
A first step in this direction is provided in \cite{AFHI}.\\

Last, note that if a stepped plane has a purely periodic Brun expansion, then it is a fixed-point of a dual map, namely the dual map of the composition of the Brun substitutions associated with this Brun expansion.
It is however unclear whether this property characterizes fixed-point stepped planes, mainly because of the lack of a Lagrange-like theorem for multi-dimensional continued fraction algorithms (see the discussion in \cite{ijfcs}).
A way to tackle this problem could be to extend in our multi-dimensional framework the approach of \cite{durand}.
We also would like to characterize linearly recurrent stepped planes in terms of their Brun expansions.

\paragraph{Acknowledgements.}
We would like to warmly thank the anonymous referees  of a  previous version of the present paper for their careful reading and their numerous and very valuable remarks (in particular, the clearer proof of Prop.\ \ref{prop:stepped_surface}, here given).

\end{document}